\documentclass{article}
\usepackage{fullpage}
\usepackage{cite}
\usepackage{hyperref}
\usepackage{verbatim}
\usepackage{tikz}

\usepackage{colortbl}

\usepackage{paralist}
\usepackage{pdfpages}
\usepackage{enumitem}
\usepackage{booktabs}
\usepackage{color,xcolor}
\usepackage{graphicx,color,eso-pic}
\usepackage{amsmath,amssymb,stmaryrd}
\usepackage{amsmath,amsthm,amstext,amssymb,amsfonts,latexsym}

\usepackage{tikz}
\usetikzlibrary{arrows,automata}
\usetikzlibrary{shapes}
\usetikzlibrary{decorations.pathmorphing}
\usetikzlibrary{decorations.pathreplacing}

\usepackage[
	lambda,
	advantage,
	operators,
	sets,
	adversary,
	landau,
	probability,
	notions,
	logic,
	ff,
	mm,
	primitives,
	events,
	complexity,
	asymptotics,
	keys]
{cryptocode}

\newcommand{\pr}[1]{\text{Pr} \left[#1 \right]}

\usepackage{boxedminipage}
\usepackage{url}
\usepackage{graphicx,proof}
\usepackage{graphicx}
\usepackage[bf]{caption}
\usepackage{subcaption}
\usepackage{color}
\usepackage{xspace}
\usepackage{multirow}
\usepackage{amsmath,amstext,amssymb,amsfonts,latexsym}
\usepackage{wrapfig}

{
\newtheorem{theorem}{Theorem}[section]

\newtheorem{corollary}[theorem]{Corollary}
\newtheorem{definition}[theorem]{Definition}
\newtheorem{lemma}[theorem]{Lemma}
\newtheorem{fact}[theorem]{Fact}

\newtheorem{claim*}[theorem]{Claim}

\newtheorem*{remark}{Remark}
}

\newcommand{\X}{\mathcal{X}}

\newcommand{\A}{\mathcal{A}}
\newcommand{\B}{\mathcal{B}}
\newcommand{\C}{\mathcal{C}}
\newcommand{\G}{\mathcal{G}}

\newcommand{\Bit}{\lbrace 0,1 \rbrace}

\renewcommand{\poly}{\mathsf{poly}}

\newcommand{\braces}[1]{\lbrace #1 \rbrace}
\newcommand{\KL}[2]{KL( #1 \; || \; #2)}

\newcommand{\Enc}{\mathsf{Enc}}
\newcommand{\Dec}{\mathsf{Dec}}

\newcommand{\len}{\text{len}}
\newcommand{\Pred}{\mathsf{Pred}}

\newcommand{\Sample}{\mathsf{Sample}}

\newcommand{\End}{\mathsf{End}}

\newcommand{\Expect}{\mathbf{E}}
\newcommand{\Nat}{\mathbb{N}}

\newcommand{\I}{\mathcal{I}}
\newcommand{\F}{\mathcal{F}}
\newcommand{\D}{\mathcal{D}}

\renewcommand{\O}{\mathcal{O}}

\newcommand{\M}{\mathcal{M}}

\newlength{\boxwidth}
\newcommand{\TODO}[1]{\medskip \noindent
\colorbox{red!20}{
\parbox{0.98\textwidth}{
[TODO: #1]
}}
}

\newcommand{\Nnote}[1]{{\color{purple}[\textbf{Noam}: #1]}}
\newcommand{\Anote}[1]{{\color{red}[\textbf{Andrew}: #1]}}
\newcommand{\Rnote}[1]{{\color{purple}[\textbf{Rafael}: #1]}}

\author{Noam Mazor \thanks{New York University. Research supported by the Simons Foundation. }\and Andrew Morgan \thanks{Cornell Tech. Supported by AFOSR Award FA9550-24-1-0267.} \and Rafael Pass \thanks{Cornell Tech, Technion and Tel-Aviv University. Supported in part by AFOSR Award FA9550-24-1-0267, ISF Award 2338/23 and ERC Advanced Grant KolmoCrypt - 101142322. Any opinions, findings and conclusions or recommendations expressed in this material are those of the author(s) and do not necessarily reflect the views of the United States Government, the AFOSR, the European Union or the European Research Council Executive Agency.}}

\title{Language Modeling is Monotone Compression}

\begin{document}

\maketitle

\begin{abstract}
A long-standing hypothesis in artificial intelligence and neuroscience posits that intelligence is closely related to compression: the ability to compress information efficiently intuitively reflects capacities associated with intelligence and learning.
Indeed, recent experimental works verify this intuition by showing connections between the capabilities of large language models (LLMs) and their ability as compressors: for instance, Del\'etang et al. (ICLR’24) demonstrate that LLMs can be used as powerful compressors, and Huang et al. (COLM’24) show that the compression ability of LLMs is highly correlated with their performance on benchmarks for knowledge and reasoning. 

In this work, we initiate a \emph{theoretical} study of this connection. Our main result is that LLMs (formally modeled as next-token predictors) are equivalent to 
\emph{monotone} (a.k.a. order-preserving) compression algorithms---namely, compression algorithms where the encoding process preserves the ordering of the inputs---in the sense that the one can be constructed from the other while preserving the same error up to an additive gap of 2. 

We next show that the monotonicity is \emph{required} for this equivalence to hold if and only if cryptographic (infinitely-often) one-way functions exist.

%
%
%
As a direct corollary, we get a cryptographic result of independent interest: the notion of next-bit pseudoentropy (a computational analogue of entropy) of a distribution is equivalent to \emph{monotone} incompressibility of the distribution. (Previously, it was only known (Haitner et al., ITCS'23) that 
incompressibility \emph{implies} next-bit pseudoentropy.)

\end{abstract}

\section{Introduction}
A central question in the study of artificial intelligence is: can ``intelligence" be defined? Turing's classic ``imitation game" \cite{turing}, which posits that a machine should be considered intelligent if a human is unable to distinguish it from another human given a transcript of a conversation, gives an elegant answer, but one that depends on human judgment.
Towards a more mathematical and quantitative definition of intelligence, the work of Shannon \cite{shannon, shannon2} laid the groundwork for a line of research suggesting that learning or understanding something is closely related to being able to process and describe it in a simple way \cite{mahoney,hernandez}---that is, intelligence, intuitively, corresponds to the ability to succinctly \emph{compress} information.

Recently, \emph{autoregressive} models such as large language models (LLMs) 
have made this suggested connection 
more explicit. 
Such models are trained on data sampled from a distribution $D$ to \emph{predict} the next token in a sequence: given a prefix of tokens $t_1, t_2, \ldots, t_k$, they output a probability distribution that is supposed to approximate the conditional distribution $\braces{t_{k+1}\mid t_1,\ldots,t_k}$.
A model is considered ``good'' if its \emph{loss}, measured as the expected negative log-likelihood on data drawn from $D$, or equivalently, the cross-entropy between $D$ and the model’s predicted distribution, is close to minimal.

It is well known that good compression schemes can be constructed from good autoregressive models by leveraging \emph{arithmetic encoding} \cite{shannon,arithmetic}: arithmetic encoding is known to produce a near-optimal compression rate for a given distribution, so, intuitively, applying it to the distribution of a model's autoregressively sampled outputs should likewise give a near-optimal compression rate for $D$ 
if the model is good. This was first empirically studied using neural network predictors \cite{schmidhuber} (see also, e.g., \cite{schmidhuber2,mahoney2,schiopu}), but subsequently has been applied to LLMs, with recent research suggesting not only that compression rivaling state-of-the-art algorithms can be constructed from LLMs (e.g., \cite{deletang, zhang, tsai}), but also that compression ability of LLMs is highly correlated with performance on benchmarks measuring knowledge, reasoning, and coding abilities \cite{huang}. 

Given that compression can be constructed from LLMs, a natural question is whether the \emph{converse} may also hold---that is, whether one can construct accurate next-token predictors from good compression algorithms. This idea was originally empirically explored in the work of \cite{frank} constructing models for text categorization from compression algorithms, and is of particular interest because it would imply an intuitive \emph{equivalence} between predictive modeling and compression. More recently, Del\'etang et al. \cite{deletang} suggest and evaluate a heuristic construction of an autoregressive next-token prediction model from compression, leading them to their titular assertion that ``language modeling is compression".

However, when one compares the results in each direction of this proposed empirical equivalence, there is a clear difference in terms of the empirical quality of results: compression algorithms constructed from LLMs exhibit competitive compression ratios, whereas  works constructing autoregressive models from compression  exhibit relatively poor predictive ability compared to state-of-art LLMs. While there are certainly plausible intuitive explanations for this disparity---for instance, the difference in computational power between state-of-the-art LLMs and state-of-the-art compression algorithms---it nonetheless raises a fundamental question:

\begin{center}
\emph{Is there a \emph{formal} connection between autoregressive models (i.e., LLMs) and compression?}
\end{center}

In this work, we address this question, initiating a theoretical study of the intuitive connection between autoregressive models and compression. Our main result shows that, intuitively, ``good" autoregressive models are equivalent to ``good" \emph{monotone} compression schemes. 
(Furthermore, this equivalence holds quantitatively up to a constant additive accuracy gap between the cross-entropy loss of the autoregressive model and the compression loss of the monotone compressor.)
We additionally show that 
the monotonicity condition is \emph{necessary} if and only if cryptographic (infinitely-often) one-way function (ioOWFs) exist; this explains the disparity observed above.
\subsection{Our Results} 

We formalize LLMs as \emph{autoregressive models} which, given a string of tokens as a prefix, output a probability distribution over possible next tokens. We consider a setting where the objective of an LLM is, given a ``training" distribution $D$ (e.g., random files selected from the Internet), to learn a model $\M$ that minimizes the expected \emph{negative log-likelihood} $\Expect_{t \leftarrow D} [ -\sum_{i \in [\len(t)]} \log \M(t_i \; | \; t_{<i})]$, which is in fact equivalent to the \emph{cross-entropy} $H(D, D_\M) = H(D) + KL(D \; || \; D_\M)$ (where $H(D)$ is Shannon entropy and $KL$ represents Kullback-Leibler divergence) between $D$ and the distribution $D_\M$ of outputs autoregressively sampled from $\M$. 
As the minimum achievable cross-entropy is $H(D)$, we say that a model $\M$ is $\delta$-optimal with respect to $D$ if its cross-entropy is at most $H(D) + \delta$.

A compression scheme is specified by a pair of algorithms $(\Enc,\Dec)$
where $\Enc$ \emph{encodes} an input from a distribution $D$ into its
compressed form (which we assume throughout to be binary), and $\Dec$
is the respective inverse. We assume throughout that $\Enc$ is
\emph{prefix-free} with respect to $D$:
for any $x$ in the support of $D$, no codeword for a different input
is a prefix of $\Enc(x)$.

The goal of compression is to minimize the expected compression length
$\Expect_{x \leftarrow D}[\len(\Enc(x))]$; by Shannon's source coding
theorem \cite{shannon}, this is lower-bounded by $H(D)$, and so we say
that compression is $\delta$-optimal with respect to $D$ if its expected
compression length is at most $H(D)+\delta$ (in other words, the
\emph{compression loss} is at most $\delta$). Finally, compression is
\emph{monotone} if it preserves a lexicographic ordering $\prec$ over
inputs, induced by an efficiently computable ordering of the tokens:
that is, given inputs $x,y$, $\Enc(x)<\Enc(y)$ if and only if $x\prec y$,
where $<$ is the lexicographic ordering of binary codewords.

For all of our results, we require autoregressive models and
compression schemes to be \emph{efficiently computable}: encoding,
decoding, and computing next-token probabilities must run in
polynomial time.
Furthermore, when discussing
cross-entropy and expected compression length, we consider
them over complete samples from the distribution $D$.

\paragraph{Autoregressive modeling is monotone compression.} We first present our main result, showing the equivalence between autoregressive modeling and monotone compression:

\begin{theorem}[Theorem \ref{thm:llm-compression-arithmetic} and Theorem \ref{thm:compression-llm}, informal]
\label{thm:informal}
Given a distribution $D$ consisting of strings of tokens from an alphabet $\Sigma$, for any $\delta$:
\begin{itemize}
    \item Given an efficiently computable $\delta$-optimal autoregressive model for $D$, one can construct an efficiently computable monotone compression scheme which is $(\delta+2)$-optimal with respect to $D$.\footnote{The additive gap of 2 in this result, and later Corollary \ref{cor:informal}, arises from the use of arithmetic encoding in the transformation from autoregressive modeling to compression; some additive loss is inevitable as no compression scheme can achieve zero compression loss without strict conditions on the distribution $D$.}
    \item Given an efficiently computable monotone compression scheme which is $\delta$-optimal with respect to $D$, one can construct a $\delta$-optimal autoregressive model for $D$.
\end{itemize}
\end{theorem}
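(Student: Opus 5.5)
We prove the two directions separately. For the direction from models to compression we use arithmetic (Shannon--Fano--Elias) coding driven by $\M$; for the converse we read next-token probabilities off the lengths of the codeword intervals of a monotone code.

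\textbf{From an autoregressive model to monotone compression.} Fix the efficiently computable total order on $\Sigma$ and an end-of-string marker, so that $\prec$ orders complete strings lexicographically. The model defines a distribution $D_\M$ whose probabilities factor as $D_\M(t)=\prod_i \M(t_i\mid t_{<i})$.

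1. Define the cumulative function $F(t)=\Pr_{s\leftarrow D_\M}[s\prec t]$. This quantity is computable in polynomial time. Walk down the prefix tree of $t$: at step $i$, add $\Pr_{D_\M}[t_{<i}]\cdot\sum_{a<t_i}\M(a\mid t_{<i})$. This costs polynomially many model queries, with exact rational arithmetic, or with precision polynomial in the length.

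2. Set $\Enc(t)$ to be the first $\lceil\log(1/D_\M(t))\rceil+1$ bits of the binary expansion of $F(t)+D_\M(t)/2$. This is the Shannon--Fano--Elias code. It is prefix-free because the dyadic interval named by $\Enc(t)$ lies inside $[F(t),F(t)+D_\M(t))$.

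3. Monotonicity follows because these intervals are disjoint and are ordered exactly as $\prec$ orders the strings. Hence the codewords are ordered lexicographically in the same way.

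4. Decoding is a binary search down the token tree. At each step, choose the token whose subinterval contains the dyadic interval named by the codeword, and stop at the end marker.

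5. The expected length is $\Expect_{t\leftarrow D}[-\log D_\M(t)]+2=H(D,D_\M)+2\le H(D)+\delta+2$.

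\textbf{From monotone compression to an autoregressive model.} Given a prefix-free monotone $\Enc$, each codeword $c$ names a dyadic interval $I(c)\subseteq[0,1)$ of length $2^{-\len(c)}$. Prefix-freeness makes these intervals disjoint, and monotonicity makes their order agree with $\prec$. Define the sub-probability measure $Q(t)=2^{-\len(\Enc(t))}$. For a prefix $p$, the set of strings extending $p$ is a contiguous $\prec$-interval, with least element $t_{\min}(p)$ and greatest element $t_{\max}(p)$.

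1. Estimate the mass under $p$ by $W(p)=\mathrm{right}(I(\Enc(t_{\max})))-\mathrm{left}(I(\Enc(t_{\min})))$. This is a span of $[0,1)$ that contains every $I(\Enc(t))$ for $t$ extending $p$, so $W(p)\ge\sum_{t\succeq p}Q(t)$.

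2. Set the model to $\M(a\mid p)=W(pa)/\sum_b W(pb)$, with an analogous term for the end marker. For sibling prefixes $pa$ and $pb$ the spans are disjoint, so $\sum_b W(pb)\le W(p)$. Telescoping along $t$ then gives $D_\M(t)\ge W(t)/W(\emptyset)\ge Q(t)$. It follows that $H(D,D_\M)\le\Expect[\len(\Enc(t))]\le H(D)+\delta$.

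\textbf{Main obstacle: computing the extremal strings.} The hardest step is finding $t_{\min}(p)$ and $t_{\max}(p)$, or directly the interval endpoints, efficiently. These are defined only over the support of $D$, which we cannot enumerate. My first attempt is to binary search on codewords using $\Dec$ together with monotonicity. Decode candidate codeword strings and compare the decoded output with $p$ under $\prec$, locating the boundary at which decoded strings stop preceding $p$; this takes polynomially many steps in the codeword length. This requires $\Dec$ to behave monotonically on non-codewords as well. If it does not, I would instead define $W(p)$ directly through this boundary search: take the boundary points between codewords whose decodings precede $p$ and those that do not. Such boundary-based spans are automatically nested and disjoint, so steps 1 and 2 of the second construction go through unchanged.
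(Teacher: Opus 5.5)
Your first direction is the paper's argument: arithmetic coding driven by the CDF of $D_\M$, with the CDF computed by walking the prefix tree. There is one small omission. Monotonicity must hold on all of $\Sigma^m$, so you also need codewords for strings with $D_\M(t)=0$, where your length $\lceil\log(1/D_\M(t))\rceil+1$ is undefined. The paper handles these strings by taking a truncated CDF value and appending the string's own representation.

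The second direction has a genuine gap. The statement assumes prefix-freeness only \emph{with respect to $D$}: no codeword is a proper prefix of $\Enc(y)$ for $y\in\sup(D)$, but $\Enc(y)$ may be a prefix of a non-support string's codeword. So ``prefix-freeness makes these intervals disjoint'' is false, and $Q$ need not satisfy Kraft. Sibling spans can then overlap and stick out of the parent span, which breaks $\sum_b W(pb)\le W(p)$.

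Here is a concrete example. Take $\Sigma=\{0,1\}$, $m=2$, and
$\Enc(00)=\mathtt{0}$, $\Enc(01)=\mathtt{00}$, $\Enc(10)=\mathtt{01}$, $\Enc(11)=\mathtt{1}$,
with $D$ uniform on $\{00,11\}$. This code is monotone, prefix-free w.r.t.\ $D$, and $0$-optimal. Your spans, with extremal strings taken in $\Sigma^2$, are $W(0)=1/4$, $W(1)=3/4$, $W(00)=1/2$ and $W(01)=1/4$. Hence $D_\M(00)=\frac14\cdot\frac23=\frac16<2^{-1}$, and $H(D,D_\M)\approx 1.79>H(D)=1$.

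Restricting the extremal strings to $\sup(D)$ restores disjointness but, as you note, is not computable. Your fallback is where the real content lies, and it is left unspecified. The natural boundary between the codewords $\mathtt{00}$ and $\mathtt{01}$ (which decode to $01$ and $10$) sits at $1/4$. That point cuts through $I(\mathtt{0})=[0,1/2)$, the interval of the support element $00$, so the fallback fails here too.

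The missing idea is the paper's sampler:
\begin{itemize}
\item Send each binary string $c$ to the decoding of its \emph{shortest} prefix that is a codeword. By prefix-freeness w.r.t.\ $D$, this assigns all of $I(\Enc(x))$ to $x$ for every $x\in\sup(D)$.
\item If no prefix of $c$ decodes, send $c$ to the largest $x$ with $\Enc(x)\le c$.
\item Prove this map is weakly monotone.
\end{itemize}
Weak monotonicity is what lets the CDF be computed by binary search over $c$, and the first rule gives $D_\M(x)\ge 2^{-\len(\Enc(x))}$ on $\sup(D)$.

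If $\Enc$ were prefix-free on all of $\Sigma^m$, your construction would be correct and simpler than the paper's. In that case $t_{\min}(p)$ and $t_{\max}(p)$ are just $p$ padded with the least or greatest token, so no search is needed. But that is a stronger hypothesis than the statement's.
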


In other words, autoregressive models and monotone compressors are equivalent up to an additive gap of 2; we additionally demonstrate this equivalence through an explicit efficient transformation in both directions.

\paragraph{The monotonicity requirement.} 
While the added condition of monotonicity may seem surprising at first glance, it sheds light on the discrepancy we noted earlier between empirical results converting autoregressive models to compression and vice versa. Known constructions of compression from autoregressive models use arithmetic encoding, which is monotone; meanwhile, as monotonicity is not necessarily inherent in compression algorithms, it explains why constructions of autoregressive models from compression tend to have worse empirical performance.
%
Indeed, our next result formally demonstrates that the requirement of monotonicity is not merely coincidental: 
given a fundamental and widely conjectured cryptographic assumption, the existence of infinitely-often one-way functions (ioOWFs)\footnote{\emph{Infinitely-often} OWFs are a relaxation of the standard concept of a OWF: namely, OWFs which are only uninvertible for infinitely many input lengths $n$ rather than all sufficiently large $n$.}, monotonicity is \emph{necessary} for any strong autoregressive model-compression equivalence:

\begin{theorem}[Theorem \ref{thm:owf-separation} and Theorem \ref{thm:owf-inverter-to-predictor-2}, informal]
\label{thm:informal-2}
The following are true:
\begin{itemize}
\item If one-way functions (respectively, infinitely-often one-way
functions) exist, then there exists an efficiently samplable
distribution family $D=\{D_n\}$ and a non-monotone compression
scheme $(\Enc,\Dec)$ with efficient encoding and decoding whose
expected encoding length on $D_n$ is exactly $H(D_n)$, but for
which every efficiently computable autoregressive model has
excess cross-entropy loss at least linear in the output length.
This lower bound holds for all sufficiently large lengths
(respectively, infinitely many lengths). 

\item If infinitely-often one-way functions do not exist, then, given a $\delta$-optimal compression scheme $(\Enc,\Dec)$ with efficient encoding and decoding for an efficiently samplable distribution family $D=\{D_n\}$, one can construct an efficiently computable autoregressive model which is statistically close to a $\delta$-optimal autoregressive model for $\D$.
\end{itemize}
\end{theorem}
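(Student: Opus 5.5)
The two bullets are independent and I would prove them separately: a cryptographic construction for the first, and a universal-inversion argument for the second.

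\textbf{First bullet.} Let $f$ be a one-way function. I would first pass, by standard constructions, to a length-regular injective one-way function. If that is not available I would use a pseudorandom generator $G:\{0,1\}^n\to\{0,1\}^{2n}$, which exists by HILL. Take $D_n$ to be $G(U_n)$ as a $2n$-bit string, one bit per token; its entropy is at most $n$. For the intended non-monotone scheme, injectivity lets me set $\Enc(G(s))=s$ and $\Dec(s)=G(s)$. This code has length $n$ and is prefix-free, because all codewords have the same length. To get expected length \emph{exactly} $H(D_n)$, I need $G$ injective, so $H(D_n)=n$. I would therefore use an injective PRG, for instance from an injective OWF via Goldreich--Levin, or else define $D_n$ over seeds up to collisions.

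Now let $\M$ be an efficient autoregressive model. Its cross-entropy is $H(D_n)+KL(D_n\,\|\,D_\M)=n+KL$. I need $KL(D_n\|D_\M)=\Omega(n)$. Suppose $KL\le cn$ for small $c$. Then $\M$ assigns noticeable probability mass that is concentrated on the image of $G$, which has size $2^n$ inside $\{0,1\}^{2n}$. I would build a distinguisher from the log-likelihood test ``$-\log D_\M(x)\le (1+2c)n$''. Under $D_n$ this test passes with constant probability by Markov's inequality. Under the uniform distribution, at most $2^{(1+2c)n}$ strings can have $D_\M$-probability at least $2^{-(1+2c)n}$, so it passes with probability at most $2^{(2c-1)n}$. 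Evaluating $D_\M(x)$ is efficient since it is a product of next-token probabilities. The distinguisher therefore breaks the PRG. For infinitely-often OWFs the same argument works on infinitely many lengths.

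\textbf{Second bullet.} Assume no ioOWFs exist. The plan is to build the model by sampling from $D_n$ conditioned on a prefix. Let $S$ be the sampler of $D_n$. I need to compute $\Pr_{x\leftarrow D_n}[x_{i}=\sigma\mid x_{<i}]$. I would consider the efficiently computable function $g(r,i)=(i,S(r)_{\le i})$. Since there are no ioOWFs, there are no distributional ioOWFs either (Impagliazzo--Luby), so there is an efficient inverter that, on all sufficiently large $n$, samples a nearly uniform preimage of $g$. Applying the inverter to $(i-1,x_{<i})$ yields $r$ with $S(r)$ distributed close to $D_n$ conditioned on prefix $x_{<i}$. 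Repeated sampling then estimates the next-token probabilities.

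Next I must bound the loss. In the ideal case the exact conditional model $\M^*$ has cross-entropy $H(D_n)\le \Expect[\len\Enc]$ by source coding, so it is $\delta$-optimal; here the compressor is used only to certify optimality. The constructed model is statistically close to $\M^*$ in the sense of the theorem, provided I handle estimation errors. I would smooth the estimates, mixing with uniform at weight $1/\poly$, so that the log-loss stays bounded.

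\textbf{Main obstacle.} The hardest step is controlling the inverter's statistical error across all $\len(t)$ prefix positions simultaneously, and turning it into a ``statistically close'' statement. I would handle this with a hybrid over positions, which requires inverse-polynomial closeness at each step, and choose the inverter's accuracy parameter as $1/(\poly\cdot\len)$.
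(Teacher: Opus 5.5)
\textbf{Gap in the first bullet.} Your encoder $\Enc(G(s))=s$ is not efficiently computable, because it must invert the generator on its image. A stretching PRG cannot be inverted there with non-negligible probability: an inverter $A$ yields the distinguisher ``accept iff $G(A(y))=y$''. This accepts $G(U_n)$ with $A$'s success probability, but accepts $U_{2n}$ with probability at most $2^{-n}$. Injectivity does not help, and injective PRGs are not known to follow from arbitrary one-way functions anyway.

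In fact no compressor can rescue the distribution $D_n=G(U_n)$. Suppose an efficient $(\Enc,\Dec)$ had expected length at most $n$ on it, and consider the test ``accept iff $\len(\Enc(y))\le 3n/2$ and $\Dec(\Enc(y))=y$''. By Markov it accepts $G(U_n)$ with probability at least $1/3$. It accepts $U_{2n}$ with probability at most $2^{1-n/2}$, since accepted strings are determined by distinct codewords of length at most $3n/2$.

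The missing idea, which is exactly the paper's construction, is to put the seed into the sample. Take $D_n$ to be the $3n$-bit string $(g_n(x),x)$ with $x\leftarrow U_n$. Then:
\begin{itemize}
\item $H(D_n)=n$ whether or not $g_n$ is injective;
\item $\Enc_n$ just outputs the last $n$ bits, which is efficient, fixed-length (hence prefix-free) and non-monotone;
\item $\Dec_n(x)=(g_n(x),x)$ is efficient.
\end{itemize}
Your likelihood-threshold distinguisher then works essentially as written, with two adjustments. Apply it to the marginal $D_{\M_n}(y,*)$ of the first $2n$ tokens, which is still a product of next-token probabilities. Then use data processing to bound the KL divergence on those $2n$ bits by $KL(D_n\,\|\,D_{\M_n})$, together with $H((D_n)_{\le 2n})\le n$.

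\textbf{The second bullet.} This is sound for the informal statement, which assumes $D$ is samplable, but it follows a different route from the paper. You apply Impagliazzo--Luby to the sampler of $D_n$ itself, obtaining a model statistically close to the exact ($0$-optimal) model of $D_n$; the compressor then only certifies $\delta\ge 0$. The paper instead applies Impagliazzo--Luby to the distribution obtained by decoding uniformly random strings with $\Dec$, outputting an arbitrary value when no prefix decodes. That distribution is efficiently samplable from $\Dec$ alone, even without monotonicity, and is $\delta$-optimal by the argument in Theorem \ref{thm:compression-llm}. This is why the formal Theorem \ref{thm:owf-inverter-to-predictor-2} needs no samplability assumption on $D$, at the cost of approximating a $\delta$-optimal rather than a $0$-optimal model. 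In either route, fix the inverter's and the estimator's coins non-uniformly, so that the next-token probabilities are a deterministic function of the prefix as an autoregressive model requires.
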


That is, given ioOWFs, one could not hope to prove Theorem \ref{thm:informal}, or even any weaker analogue with an additive gap constant sublinear in the length of a sample from the distribution $D$, without requiring monotonic compression. Interestingly, the same is not true if ioOWFs do not exist: in that case, using a classic result from \cite{il89}, we show that one can drop the monotonicity condition from Theorem \ref{thm:informal} at a cost of a small amount of statistical error. Thus, in essence, whether modeling is equivalent to \emph{monotone} compression or (approximately) equivalent to \emph{all} compression depends on the truth of this assumption; that said, one-way functions (which in turn imply ioOWFs) are widely conjectured to exist, implying the former equivalence and a corresponding separation between monotone and non-monotone compression.

\paragraph{Pseudoentropy and incompressibility.} Lastly, we show a corollary of Theorem \ref{thm:informal} that provides interesting implications to cryptography. There has been a considerable line of research relating the compressibility of a distribution to its \emph{pseudoentropy} \cite{hill}, a computational analogue of entropy.
In particular, Haitner et al. \cite{haitner} demonstrate that, if a family of distributions $\D$ is \emph{incompressible} \cite{yao} to some length $k(n)$, then it has \emph{next-bit pseudoentropy} at least $(k(n)-2)$---that is, the total pseudoentropy of the conditional next-bit distributions of each bit of $\D$ is at least $(k(n)-2)$.\footnote{We note that \cite{hirahara} demonstrates an analogue of this result for \emph{uniform} algorithms; we, however, focus on non-uniform algorithms, following \cite{haitner}.} However, the converse is false: if one-way functions exist, then pseudoentropy cannot imply incompressibility.
As our final result, we obtain a tight version of this characterization by combining Theorem \ref{thm:informal} with a corollary from \cite{vadhan} to show that, similarly to the equivalence between autoregressive models and compression, both directions are true if we specifically require \emph{monotone} incompressibility, or inability to be compressed through specifically monotone schemes:

\begin{corollary}[Corollary \ref{cor:incompressibility-pseudoentropy} and Corollary \ref{cor:pseudoentropy-incompressibility}, informal]
\label{cor:informal}
Given a family of distributions $\D$, the following are equivalent for any function $k(n)$ (to within an additive gap of 2):
\begin{itemize}
    \item $\D$ cannot be efficiently \emph{monotonically} compressed to length $k(n)$.
    \item $\D$ has next-bit pseudoentropy at least $k(n)$.
\end{itemize}
\end{corollary}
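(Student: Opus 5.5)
The plan is to get both directions by passing through next-bit predictors, using Theorem~\ref{thm:informal} in each direction and the characterization of next-bit pseudoentropy from \cite{vadhan} to connect predictors with pseudoentropy. The natural setting is the binary one: view $\D=\{D_n\}$ as a distribution over bit strings, so tokens are bits with the ordering $0\prec 1$, and an autoregressive model is exactly a family of next-bit predictors.

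The corollary from \cite{vadhan} (due to Vadhan and Zheng) says the following. $D_n$ has next-bit pseudoentropy at least $k(n)$ if and only if every efficient (non-uniform) next-bit model $\M$ has cross-entropy
\[
\Expect_{x\leftarrow D_n}\Big[-\sum_i \log \M(x_i\mid x_{<i})\Big] \;\geq\; k(n),
\]
up to negligible slack in the parameters. Equivalently, no efficient model achieves KL-hardness below $k(n)$. I will take this as the bridge, stated as: next-bit pseudoentropy $\ge k$ $\iff$ every efficient autoregressive model for $D_n$ has loss $\ge k$.

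\emph{Monotone incompressibility $\Rightarrow$ pseudoentropy} (Corollary~\ref{cor:incompressibility-pseudoentropy}). Argue by contrapositive. Suppose the next-bit pseudoentropy is below $k(n)$. By the bridge, some efficient model $\M$ has loss below $k(n)$. Apply the first item of Theorem~\ref{thm:informal}, i.e.\ arithmetic coding (Theorem~\ref{thm:llm-compression-arithmetic}). This gives an efficient monotone prefix-free code whose expected length is at most the model's loss plus $2$, hence below $k(n)+2$. So $\D$ is monotonically compressible to length $k(n)+2$. This is the ``additive gap of 2'' in one direction.

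\emph{Pseudoentropy $\Rightarrow$ monotone incompressibility} (Corollary~\ref{cor:pseudoentropy-incompressibility}). Again argue by contrapositive. Suppose an efficient monotone scheme compresses $D_n$ to expected length below $k(n)$. Apply Theorem~\ref{thm:compression-llm}. It yields an efficient model whose cross-entropy is at most the expected code length, so below $k(n)$. By the bridge, the next-bit pseudoentropy is then below $k(n)$.

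The step I expect to be the main obstacle is the parameter bookkeeping rather than the logic. Theorem~\ref{thm:informal} is phrased in terms of $\delta$-optimality relative to $H(D)$, whereas here I need absolute statements: loss $\le L$ implies code length $\le L+2$, and conversely. I would rederive this from the proofs of Theorems~\ref{thm:llm-compression-arithmetic} and~\ref{thm:compression-llm} by subtracting $H(D_n)$. I also need to match the efficiency and advantage notions: non-uniform circuit size in the definition of next-bit pseudoentropy, versus the circuit size of the constructed model and code, including any polynomial blowup in the size bound. Finally, the model produced from a compressor may assign probability~$0$ to some outcomes, and the KL-hardness bridge may require probabilities bounded away from $0$. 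If so, I would mix the model with the uniform next-bit distribution with weight $2^{-\mathrm{poly}(n)}$, which costs only a negligible additive loss.
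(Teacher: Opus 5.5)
Your proposal is correct and follows essentially the same route as the paper: you compose the Vadhan--Zheng characterization with Theorem~\ref{thm:llm-compression-arithmetic} for the gap-2 direction and Theorem~\ref{thm:compression-llm} for the converse, each by contrapositive. The paper states this characterization as Corollary~\ref{cor:pseudoentropy} via $\delta$-optimality with $\delta = k(n) - H(D_n)$, which is your absolute-loss ``bridge'' after subtracting $H(D_n)$; your remaining bookkeeping matches how the paper phrases the two directions, namely inverse-polynomial slack giving ``not compressible to $k(n)-1/p(n)$'', the infinitely-often quantifiers, and mixing in a tiny uniform weight to avoid zero probabilities, which the paper does not spell out.
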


\subsection{Technical Overview}
\label{sec:overview}

\paragraph{Monotone Compression from Autoregressive Models.}
To construct monotone compression for a distribution $D$ from an autoregressive model $\M$, we
can use arithmetic encoding relative to the model's output distribution
$D_\M$. As discussed above, the expected negative log-likelihood used to
train $\M$ is exactly the cross-entropy $H(D,D_\M)$: multiplying the
next-token probabilities gives the probability of a complete output,
and taking its negative logarithm gives the sum of the next-token
losses. Arithmetic encoding connects this same quantity to compression
by assigning each output $x$ a codeword $\Enc(x)$ such that
\[
    \len(\Enc(x)) \leq -\log D_\M(x)+2.
\]
Since this bound holds for each $x$, we can average it over the correct
distribution $D$, even though the code is constructed using the approximation given by $D_\M$:
\[
    \Expect_{x \leftarrow D}[\len(\Enc(x))]
    \leq \Expect_{x \leftarrow D}[-\log D_\M(x)]+2
    =H(D,D_\M)+2.
\]
Furthermore, as mentioned above,
\[
    KL(D \; || \; D_\M)
    =\Expect_{x \leftarrow D}
        \left[\log\frac{D(x)}{D_\M(x)}\right]
    =H(D,D_\M)-H(D),
\]
so
\[
    \Expect_{x \leftarrow D}[\len(\Enc(x))]
    \leq H(D)+KL(D \; || \; D_\M)+2.
\]
In other words, the expected compression length is at most the entropy
of $D$, plus the same KL-divergence that measures
the model's excess prediction loss, plus the overhead of 2 bits from arithmetic encoding. $\delta$-optimality of $\M$ thus implies
$(\delta+2)$-optimality of the compression scheme.

Arithmetic encoding is monotone by definition: it arranges the
probability intervals associated with outputs in lexicographic order
and assigns codewords in that same order.

Efficiency of arithmetic encoding, however, requires that the cumulative distribution function (CDF) of the
model's output distribution can be computed efficiently. For a polynomial-size alphabet, this is possible using a standard recursive algorithm
(see, e.g., \cite{haitner,deletang}).
Specifically, to determine the
probability $\Pr_{x \leftarrow D_\M} [x \prec y]$ of an output being lexicographically smaller than a given
string $y$, one can decompose this into
\[ \Pr_{x \leftarrow D_\M} [x \prec y]  = \sum_i \left( \Pr_{x\gets D_m}[x_{<i} = y_{<i}]  \Pr_{x\gets D_m}[x_i \prec y_i \mid x_{<i} = y_{<i}] \right)\]
\[= \sum_i \left( \Pr_{x\gets D_m}[x_{<i} = y_{<i}] \sum_{z \prec y_i} \Pr_{x\gets D_m}[x_i = z \mid x_{<i} = y_{<i}] \right) \]
and thus recursively determine the overall probability by, for each possible prefix length where $x$ might first diverge from $y$, multiplying the probability of matching the
preceding prefix by the conditional probability of choosing a smaller
token at that position, and summing these disjoint possibilities. Adding $D_\M(y)$ then gives
the CDF $\Pr_{x \leftarrow D_\M} [x \preceq y]$. This process requires only polynomially many next-token
probability evaluations; thus, as long as the next-token probabilities
of $\M$ are efficiently computable, arithmetic encoding with respect
to $D_\M$ is as well.

\paragraph{Autoregressive Models From Monotone Compression.}
For the converse, given a prefix-free, monotone encoding $\Enc$ over
a distribution $D$, we demonstrate a construction of an autoregressive model $\M$ whose output
distribution satisfies
\[
    D_\M(x) \geq 2^{-\len(\Enc(x))}
\]
for every $x$ in the support of $D$. 
This probability bound suffices because taking negative logarithms gives
\[
    -\log D_\M(x)\leq \len(\Enc(x)).
\]
Thus, the model's loss on each output is at most its encoding length, and taking the expectation over $D$ gives a bound on the cross-entropy:
\[
    H(D,D_\M)
    =\Expect_{x\leftarrow D}[-\log D_\M(x)]
    \leq \Expect_{x\leftarrow D}[\len(\Enc(x))].
\]
In particular, if $\Enc$ is $\delta$-optimal, the right-hand side is
at most $H(D)+\delta$, so $\M$ is also $\delta$-optimal.

We proceed in
two steps. First, we construct an efficient \emph{monotone} sampler $\Sample$ for the
desired output distribution which, at a high level, samples outputs by \emph{decoding uniformly
random bits}. Let $\ell$ be a polynomial upper bound\footnote{For any $\Enc$, there must exist such an $\ell$ due to efficiency; here, we make the simplifying assumption that $\ell$ is known a priori, but we show in the complete proof that this assumption can be discarded.} on the output length of $\Enc$: $\Sample$ will draw a
uniformly random string $c \in \{0,1\}^\ell$, 
examine the prefixes of $c$ in increasing order of length, and decode the first one that is a valid codeword. Since $\Enc$ is monotone, we can always decode efficiently through binary search, and, as there are only polynomially many prefixes
to examine, this step is efficient. 
If no prefix of $c$ is a valid codeword, $\Sample$ instead returns the smallest input
$x$ with $\Enc(x)\leq c$ (or the smallest possible input $x$ if no such $x$ exists); monotonicity of $\Enc$ allows us to again find
such an input efficiently by binary search.
We observe that this process assigns an output to every random string and, importantly,
is \emph{weakly monotone}: if $c<c'$, then
$\Sample(c)\preceq\Sample(c')$. 

Next, we construct
an efficiently computable autoregressive model $\M$ whose output
distribution $D_\M$ is exactly the distribution of $\Sample(c)$ when
$c\leftarrow\{0,1\}^\ell$ is uniform.
To show that $\M$ satisfies the necessary probability bound, fix any $x$ in the support of $D$; whenever the random string $c$ begins with
$\Enc(x)$, the sampler must return exactly $x$, since prefix-freeness ensures that no
shorter prefix of $c$ can be a valid codeword. This event has
probability exactly $2^{-\len(\Enc(x))}$, so
\[
    D_\M(x)
    =\Pr_{c \leftarrow \{0,1\}^\ell}[\Sample(c)=x]
    \geq 2^{-\len(\Enc(x))}.
\]
Thus, by our original argument, $\M$ is $\delta$-optimal as long as $\Enc$ is.

However, we must still show that $\M$'s next-token probabilities can be computed efficiently.
Towards this, the key observation is that it suffices to show that the CDF of $D_\M$ can be computed efficiently, which, leveraging weak monotonicity of $\Sample$, can once again be done by binary search.
Specifically, fix an output $y$ and identify the $\ell$-bit random strings
with the integers $0,\ldots,2^\ell-1$. By weak monotonicity, the random
strings $c$ satisfying $\Sample(c)\preceq y$ correspond to a subset $c < k(y)$ of this set, and we can determine $k(y)$ using binary search by finding the smallest $c$ with $\Sample(c)\succ y$.
Since the corresponding subset contains $k(y)$ elements, each of which is
chosen with probability exactly $2^{-\ell}$, the CDF is exactly
\[
    F(y)
    :=\Pr_{x\leftarrow D_\M}[x\preceq y]
    =\frac{k(y)}{2^\ell}.
\]
Finally, to recover next-token probabilities from the CDF, we use the
standard observation that strings sharing a prefix form a
contiguous interval in a lexicographic ordering; thus, prefix
probabilities can be computed by subtracting two CDF values (see Figure \ref{fig:monotone-prefix-cdf} for a visualization),
and next-token probabilities can be computed by taking ratios of prefix
probabilities. We have thus demonstrated that the model $\M$ we constructed has both efficiently computable next-token probabilities and $\delta$-optimality with respect to $D$, as desired.
\begin{figure}[t]
  \centering
  \begingroup
  \definecolor{samplerink}{HTML}{24282C}
  \definecolor{samplermuted}{HTML}{58616A}
  \definecolor{sampleraccent}{HTML}{245F83}
  \definecolor{samplerband}{HTML}{EAF1F5}
  \definecolor{samplerprojection}{HTML}{F3F7FA}
  \definecolor{sampleroverlap}{HTML}{D6E6F0}
  \definecolor{samplerguide}{HTML}{7897AA}
  \begin{tikzpicture}[x=0.82pt,y=0.82pt,>=stealth]
    \path[use as bounding box] (0,0) rectangle (492,286);
    \fill[samplerband] (60,147.5) rectangle (470,210.5);
    \fill[samplerprojection] (187.5,74) rectangle (352,147.5);
    \fill[sampleroverlap] (187.5,147.5) rectangle (352,210.5);
    \draw[draw=samplerguide,line width=0.533pt,line join=round,dash pattern=on 2.624pt off 2.624pt] (60,147.5) -- (470,147.5);
    \draw[draw=samplerguide,line width=0.533pt,line join=round,dash pattern=on 2.624pt off 2.624pt] (60,210.5) -- (470,210.5);
    \draw[draw=sampleraccent,line width=0.656pt,line join=round,dash pattern=on 2.624pt off 2.624pt] (187.5,74) -- (187.5,147.5);
    \draw[draw=sampleraccent,line width=0.656pt,line join=round,dash pattern=on 2.624pt off 2.624pt] (352,74) -- (352,210.5);
    \draw[draw=samplerink,line width=0.82pt,line join=round,->] (60,74) -- (479,74);
    \draw[draw=samplerink,line width=0.82pt,line join=round,->] (60,74) -- (60,274);
    \node[anchor=center,inner sep=0pt,text=samplerink,rotate=90,font={\fontsize{9.02}{10.824}\selectfont}] at (-8,178) {Outputs (lexicographic order)};
    \node[anchor=center,inner sep=0pt,text=samplerink,rotate=0,font={\fontsize{9.02}{10.824}\selectfont}] at (270,15) {Random strings $c$ (binary order)};
    \draw[draw=samplerink,line width=1.394pt,line join=round] (70,95) -- (117,95) -- (117,116) -- (164,116) -- (164,137) -- (187.5,137) -- (187.5,158) -- (211,158) -- (211,179) -- (258,179) -- (258,200) -- (352,200) -- (352,221) -- (399,221) -- (399,242) -- (446,242);
    \draw[draw=sampleraccent,line width=1.845pt,line join=round] (187.5,147.5) -- (187.5,158) -- (211,158) -- (211,179) -- (258,179) -- (258,200) -- (352,200) -- (352,210.5);
    \node[anchor=center,inner sep=0pt,text=samplerink,rotate=0,font={\fontsize{9.84}{11.808}\selectfont}] at (402,263) {$\Sample(c)$};
    \draw[draw=sampleraccent,line width=0.656pt,line join=round] (56,158) -- (60,158);
    \node[anchor=east,inner sep=0pt,text=sampleraccent,rotate=0,font={\fontsize{7.38}{8.856}\selectfont\itshape}] at (52,158) {$t_{<i} ||\sigma_1\ldots$};
    \draw[draw=sampleraccent,line width=0.656pt,line join=round] (56,179) -- (60,179);
    \node[anchor=east,inner sep=0pt,text=sampleraccent,rotate=0,font={\fontsize{7.38}{8.856}\selectfont\itshape}] at (52,179) {$t_{<i} || \sigma_2\ldots$};
    \draw[draw=sampleraccent,line width=0.656pt,line join=round] (56,200) -- (60,200);
    \node[anchor=east,inner sep=0pt,text=sampleraccent,rotate=0,font={\fontsize{7.38}{8.856}\selectfont\itshape}] at (52,200) {$t_{<i} || \sigma_3\ldots$};
    \node[anchor=center,inner sep=0pt,text=sampleraccent,rotate=0,font={\fontsize{9.02}{10.824}\selectfont}] at (134,186) {outputs beginning};
    \node[anchor=center,inner sep=0pt,text=sampleraccent,rotate=0,font={\fontsize{9.02}{10.824}\selectfont}] at (134,172) {with prefix $t_{<i}$};
    \draw[draw=sampleraccent,line width=2.05pt,line join=round] (187.5,74) -- (352,74);
    \draw[draw=sampleraccent,fill=sampleraccent,line width=0.984pt] (187.5,74) circle[radius=1.968pt];
    \draw[draw=sampleraccent,fill=white,line width=0.984pt] (352,74) circle[radius=1.968pt];
    \draw[draw=sampleraccent,line width=0.656pt,line join=round] (187.5,67) -- (187.5,61) -- (352,61) -- (352,67);
    \node[anchor=center,inner sep=0pt,text=sampleraccent,rotate=0,font={\fontsize{9.02}{10.824}\selectfont}] at (269.75,49) {strings $c$ producing prefix $t_{<i}$};
    \node[anchor=center,inner sep=0pt,text=samplermuted,rotate=0,font={\fontsize{7.38}{8.856}\selectfont}] at (269.75,33) {endpoints found by binary search};
  \end{tikzpicture}
  \endgroup
  \caption{Computing prefix probabilities from a monotone sampler.
  Outputs beginning with a fixed prefix $t_{<i}$ form an interval in
  lexicographic order (horizontal band). The labels $t_{<i} || \sigma_1\ldots$,
  $t_{<i} || \sigma_2\ldots$, and $t_{<i} || \sigma_3\ldots$ illustrate outputs with this common
  prefix, where $\sigma_1\prec \sigma_2\prec \sigma_3$ are possible next tokens.
  By weak monotonicity of
  $\Sample$, the random strings producing these outputs also form
  an interval (highlighted on the horizontal axis). Two binary
  searches locate its endpoints. The prefix probability is the
  number of seeds in this interval divided by $2^\ell$. The staircase is schematic;
  the vertical axis represents output order.}
  \label{fig:monotone-prefix-cdf}
\end{figure}

\paragraph{Comparison with \cite{deletang}.} Del\'etang et al. \cite{deletang} present a heuristic construction of next-token prediction from compression; however, in contrast to our construction, which ``looks ahead" and retrieves next-token probabilities relative to the \emph{complete output distribution} reconstructed from the compression, they instead compute next-token probabilities myopically by examining the difference in length between the compression of the prefix and the compression of the prefix with each possible subsequent token appended.\footnote{That is, given some partial output $x$, they assign each possible next token $\sigma$ a weight of $2^{-(\len(\Enc(x || \sigma))-\len(\Enc(x)))}$ and sample according to the distribution over all possible next tokens $\sigma$ with the respective weights.} The advantage of this construction is that, unlike ours, monotonicity of the compression scheme $\Enc$ is not needed. However (as the authors themselves acknowledge in their work), because their approach approximates the next-token distribution using partial compressions every time they sample a next token, it risks introducing a constant amount of error to the sampling process every time a token is generated, whereas, with our construction, using a consistent distribution over complete outputs allows us to sample all tokens relative to that output distribution and only incur the error once.\footnote{As an illustrative example, consider a binary distribution $D$ with constant next-token probabilities skewed away from powers of 2: e.g., every token is drawn as 0 with probability 0.6 and 1 with probability 0.4. Since every token is drawn with probabilities weighted by powers of 2 in the sampling method of \cite{deletang}, this introduces a constant amount of KL-divergence to the respective next-token distribution every time a token is sampled; as these errors are additive by the chain rule, the resulting cross-entropy loss is linear in the output length.}

\section{Definitions}
\label{sec:definitions}

Let $\Nat = \braces{1, 2, 3, \ldots}$ and $[n] = \braces{m \in \Nat \mid m \leq n} = \braces{1, 2, \ldots, n}$. For some symbol $\sigma$ in an alphabet $\Sigma$, take $\sigma^n$ to be the string consisting of $n$ copies of $\sigma$. Let $\textbf{1}_{P}$ be the \emph{indicator function} of a logical statement $P$, defined as being 1 when $P$ is true and 0 otherwise. Given a set $X$, we define $\Delta(X)$ to be the set of probability distributions over elements in $X$. Given some string $x \in \Sigma^*$ over a vocabulary $\Sigma$, we will denote substrings with notation such as $x_{\leq i}$ (representing the prefix consisting of the first $i$ symbols), and the length of a string $x \in \Sigma^*$ by $\len(x)$. We use similar notation to denote, e.g., $\Sigma^{\leq n} = \bigcup_{i = 0}^n \Sigma^i$ as the set of strings of tokens from an alphabet $\Sigma$ with at most length $n$.

We say a statement is true for ``sufficiently large $n \in \Nat$" if there exists $N \in \Nat$ such that the statement holds for all $n \geq N$. We shall call a function $\epsilon(\cdot)$ \emph{negligible} if, for any polynomial $p(\cdot)$, $\epsilon(n) < 1/p(n)$ for sufficiently large $n \in N$; conversely, a function is \emph{non-negligible} or \emph{polynomial} if the above is not true.

In this work, we assume that algorithms are \emph{non-uniform}: to enable computation for arbitrary input sizes parameterized by some $n \in \Nat$, they receive an additional ``advice" string of size polynomial in $n$ (which, unless otherwise specified, can be taken to be the string $1^n$ of $n$ ones to specify $n$). We refer to a non-uniform algorithm as \emph{efficient} if its running time is polynomial in the parameter $n$. We define an \emph{oracle-aided algorithm} $\Pi^{\O}$ as an algorithm $\Pi$ that is given access to another algorithm $\O$ as a black box, and may make many calls to its oracle $\O$ on potentially many different inputs before determining its own output. We say that such an algorithm is efficient if its running time (excepting calls made to $\O$) and the number of calls it makes to the oracle $\O$ are both polynomial in its input size---that is, as long as $\O$ is efficient, the total running time of $\Pi$, including that taken by calls made to $\O$, is polynomial in its input size.

Unless otherwise indicated, any logarithm denoted by $\log$ will be base 2 (as opposed to the base-$e$ natural logarithm $\ln$). Given a discrete distribution $D$, we denote by $\sup(D)$ its support and define the (Shannon) \emph{entropy} of $D$ to be the quantity
\[H(D) = \sum_{x \in \sup(D)} -D(x)\log D(x)\]

Given two discrete distributions $D, D'$, we denote the \emph{statistical distance} (or total variation distance) between $D$ and $D'$ by $\Delta(D, D')$ and the Kullback-Leibler (KL) divergence between the two distributions by
\[ KL(D || D') = \sum_{x \in \sup(D)} D(x) \log \frac{D(x)}{D'(x)} \]


\subsection{Compression}

We consider the problem of (lossless) data compression from a particular source distribution, where the goal is to achieve a uniquely decodable encoding over samples from the source distribution with the property that the expected encoding length of a sample drawn from the distribution is as small as possible. Formally:

\begin{definition}[\cite{gs85,TrevisanVZ05}]
\label{def:compression}
Given finite families of alphabets $\Sigma = \braces{\Sigma_n}_{n \in \Nat}$ and $\Sigma' = \braces{\Sigma'_n}_{n \in \Nat}$ where $|\Sigma_n|$ and $|\Sigma'_n|$ are polynomial in $n$, functions $m(\cdot), \ell(\cdot)$ where $m(n)$ is polynomial in $n$, and an ensemble of distributions $\D = \braces{D_n}_{n \in \Nat}$, where $\sup(D_n) \subseteq \Sigma_n^{m(n)}$, we say that a family of injective functions $\Enc = \braces{\Enc_n}_{n \in \Nat}$ is an \textbf{encoding}, or \textbf{compression scheme}\footnote{Oftentimes the term ``compression" is used to indicate strictly prefix-free encodings; however, we consider only prefix-free encodings anyway in our results.}. $\Enc$ \textbf{compresses $\D$ to length} $\ell(\cdot)$ if, for any $n \in \Nat$, $\Enc_n : \Sigma_n^{m(n)} \rightarrow \Sigma'^*_n$ is such that
\[ \Expect_{x \leftarrow D_n} [\len(\Enc_n(x))] \leq \ell(n) \]
\end{definition}
Unless otherwise specified, we shall assume that the output of the compression is binary ($\Sigma'_n = \Bit$ for all $n \in \Nat$). We also define some useful properties of compression functions:
\begin{definition}[Properties of compression]
Given finite alphabets $\Sigma$ and $\Sigma'$, let $\Enc : \Sigma^* \rightarrow \Sigma'^*$ be a function defined with respect to some distribution $D$ with $\sup(D) \subseteq \Sigma^*$.

We refer to $\Enc$ as \textbf{prefix-free} if there do not exist elements $x \neq y$ and $i \in \Nat$ such that $x \in \Sigma^*$, $y \in \sup(D)$, and $\Enc(x) = (\Enc(y))_{< i}$ (i.e., $\Enc(x)$ is a prefix of $\Enc(y)$; we note, however, that this definition does allow an encoding of $y \in \sup(D)$ to be a prefix of an encoding of $x \not\in \sup(D)$).


We refer to $\Enc$ as \textbf{monotone} if there exist efficiently computable orderings $\prec_\Sigma$ over $\Sigma$ and $\prec_{\Sigma'}$ over $\Sigma'$ such that, for any $x, y \in \Sigma^*$, $x \prec_{\Sigma^*} y$ if and only if $\Enc(x) \prec_{\Sigma'^*} \Enc(y)$, where $\prec_{\Sigma^*}$ is the lexicographic ordering\footnote{i.e., $x \prec_{\Sigma^*} y$ when there exists $i \in \Nat$ for which $x_{<i} = y_{<i}$ and either $\len(y) > \len(x) = i-1$ or $x_i \prec_{\Sigma} y_i$.} induced by $\prec_\Sigma$ on $\Sigma^*$ and respectively for $\prec_{\Sigma'^*}$. 

We consider any of the above properties to apply to a family of functions $\Enc = \braces{\Enc_n}_{n \in \Nat}$ as defined in Definition \ref{def:compression}, where $\Enc_n$ is defined with respect to a distribution $D_n$ for which $\sup(D_n) \subseteq \Sigma_n^{m(n)}$, if they apply to $\Enc_n$ and $D_n$ for each $n \in \Nat$.

Furthermore, we say that such a family of functions has \textbf{efficient encoding} if there exists a polynomial $p(\cdot)$ such that $\Enc_n$ runs in time at most $p(n)$ for any $n \in \Nat$, and \textbf{efficient decoding} if there exists a polynomial $p'(\cdot)$ and a function family $\Dec = \braces{\Dec_n}_{n \in \Nat}$ such that, for each $n \in \Nat$, $\Dec_n$ computes the inverse $\Enc_n^{-1}(c)$ of $\Enc_n$ for any codeword in its image (or returns $\bot$ if the input is not in the image) and additionally runs in time at most $p'(n)$.
\end{definition}



We also note the following useful fact about efficient encoding and decoding:
\begin{fact}
\label{fact:decoding}
Any monotone and efficiently encodable family of encodings $\Enc = \braces{\Enc_n}_{n \in \Nat}$ is also efficiently decodable.
\end{fact}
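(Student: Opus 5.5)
The plan is to decode by binary search over the input space $\Sigma_n^{m(n)}$, ordered lexicographically by $\prec_{\Sigma^*}$. The monotonicity of $\Enc_n$ is what lets us compare the codeword of a candidate input with the target $c$ and learn which direction to move. Given a string $c \in \Bit^*$, $\Dec_n(c)$ will search for an $x \in \Sigma_n^{m(n)}$ with $\Enc_n(x) = c$.

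We identify inputs with integers. Since $|\Sigma_n|$ is polynomial and $\prec_{\Sigma}$ is efficiently computable, we can efficiently sort $\Sigma_n$ and identify each token with its rank in $\{0,\ldots,|\Sigma_n|-1\}$. All inputs have the same length $m(n)$, so the lexicographic order on $\Sigma_n^{m(n)}$ is exactly the numeric order of base-$|\Sigma_n|$ numerals. There are $|\Sigma_n|^{m(n)}$ inputs, so binary search needs only $\lceil m(n)\log|\Sigma_n|\rceil = \poly(n)$ steps. Both directions of the bijection (integer $\leftrightarrow$ string) are computable in polynomial time.

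The search itself maintains an interval $[lo,hi]$ of candidate inputs. At each step we take the midpoint $x$, compute $\Enc_n(x)$ in time $p(n)$, and compare it with $c$ under $\prec_{\Sigma'^*}$. Comparison takes time linear in the lengths, and $\len(\Enc_n(x)) \le p(n)$. If $c$ is much longer than $p(n)$, we can immediately return $\bot$, since no codeword is that long.
\begin{itemize}
\item If $\Enc_n(x)=c$, output $x$.
\item If $\Enc_n(x)\prec c$, continue on the upper half.
\item Otherwise, continue on the lower half.
\end{itemize}
The search is correct because $\Enc_n$ is strictly order-preserving (monotonicity is an ``if and only if''). Hence every input $y \prec x$ has $\Enc_n(y)\prec\Enc_n(x)$, and every $y \succ x$ has $\Enc_n(y)\succ\Enc_n(x)$. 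So if $c=\Enc_n(x^*)$, the preimage $x^*$ is never discarded. If the interval becomes empty, no preimage exists and we output $\bot$; this output is also correct, since the procedure only ever returns an $x$ after verifying $\Enc_n(x)=c$. Injectivity guarantees that a preimage, when it exists, is unique, so the output equals $\Enc_n^{-1}(c)$. The total running time is $\poly(n)\cdot p(n)$ plus the cost of the comparisons, so the resulting running-time bound $p'(n)$ is a polynomial.

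The main subtlety, which is minor, is that binary search requires a total order on $\Sigma'^*$. This holds because $\prec_{\Sigma'}$ is a total order on $\Sigma'$, so the lexicographic order it induces on $\Sigma'^*$, with a proper prefix counted as smaller, is total as well. Nonuniformity is not needed beyond the $1^n$ advice that supplies $n$.
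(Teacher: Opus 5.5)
Your proposal is correct and follows the same approach as the paper: binary search over $\Sigma_n^{m(n)}$ in lexicographic order, using monotonicity of $\Enc_n$ to decide which half contains the preimage, for $O(m(n)\log|\Sigma_n|)$ calls to $\Enc_n$. Your extra details fill in points the paper leaves implicit: rejecting inputs longer than $p(n)$, returning $\bot$ when no exact match is found, and the totality of the induced order on $\Sigma'^*$.
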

\begin{proof}
Consider a decoding algorithm that, given a codeword, performs binary search over the possible inputs in $\Sigma_n^{m(n)}$. This is possible due to monotonicity---it can be determined whether the correct decoding is lower or higher than a given input by comparing the respective codewords. Additionally, since there are $|\Sigma_n|^{m(n)}$ possible inputs in $\Sigma_n^{m(n)}$, binary search requires at most $O(m(n)\log |\Sigma_n|)$ iterations of $\Enc_n$. Since by definition $m(n)$ and $|\Sigma_n|$ are polynomial in $n$, and since $\Enc$ is itself efficient, this decoding process must likewise be efficient.
\end{proof}

We shall generally restrict our focus to monotone and prefix-free encodings. Shannon's source coding theorem \cite{shannon} implies a lower bound for the expected length of any prefix-free encoding $\Enc$ over a source distribution $D$ in terms of its Shannon entropy $H(D)$:

\begin{corollary}[\cite{shannon}]
\label{cor:shannon}
Given finite alphabets $\Sigma$ and $\Sigma'$ and distribution $D$ over $\Sigma$, let $\ell^*(D)$ be the minimal expected length $\Expect_{x \leftarrow D} [\len(\Enc(x))]$ over any prefix-free encoding $\Enc : \Sigma \rightarrow \Sigma'$. Then:
\[ \frac{H(D)}{\log |\Sigma'|} \leq \ell^*(D) \]
\end{corollary}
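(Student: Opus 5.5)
We prove Corollary~\ref{cor:shannon} via the Kraft inequality together with the non-negativity of KL divergence (Gibbs' inequality). Fix any prefix-free encoding $\Enc$ for $D$ and write $q = |\Sigma'|$ and $\ell_x = \len(\Enc(x))$ for $x \in \sup(D)$.

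\textbf{Step 1: Kraft's inequality.} We first show that $\sum_{x \in \sup(D)} q^{-\ell_x} \leq 1$. Choose $L \geq \max_x \ell_x$, which is possible since $\sup(D)$ is finite. Each codeword $\Enc(x)$ of length $\ell_x$ is a prefix of exactly $q^{L-\ell_x}$ strings in $\Sigma'^L$. By prefix-freeness, these sets of extensions are pairwise disjoint across distinct $x \in \sup(D)$; if two extension sets shared a string, one of the two codewords would be a prefix of the other. We also need the codewords to be distinct, which follows from injectivity of $\Enc$. Summing the disjoint counts gives $\sum_x q^{L-\ell_x} \leq q^L$, which is the claim. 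The only delicate point is that the paper's definition of prefix-freeness is asymmetric: it restricts only codewords of inputs in $\sup(D)$ being extended. We apply it only to pairs $x,y$ both in $\sup(D)$, where the condition forbids $\Enc(x)$ being a prefix of $\Enc(y)$ in either direction.

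\textbf{Step 2: Gibbs' inequality.} Let $Z = \sum_x q^{-\ell_x} \leq 1$ and define the distribution $Q(x) = q^{-\ell_x}/Z$ on $\sup(D)$. Then
\[
\Expect_{x \leftarrow D}[\ell_x \log q] = \Expect_{x\leftarrow D}\left[-\log q^{-\ell_x}\right] = H(D) + \KL{D}{Q} - \log Z .
\]
Since $\KL{D}{Q} \geq 0$ (by Jensen's inequality applied to the concave $\log$) and $-\log Z \geq 0$ (because $Z \leq 1$), we obtain $\Expect_{x \leftarrow D}[\ell_x] \geq H(D)/\log q$. This bound holds for every prefix-free $\Enc$, so taking the infimum gives $\ell^*(D) \geq H(D)/\log|\Sigma'|$.

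\textbf{Main obstacle.} The only step requiring care is Step 1, specifically handling the paper's one-sided definition of prefix-freeness and making sure the codewords are distinct. Step 2 is a routine calculation.
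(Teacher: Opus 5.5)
Your proof is correct. It is the standard Kraft-inequality-plus-Gibbs argument for Shannon's source coding bound; the paper gives no proof of its own and simply cites Shannon's theorem. You are right to handle the paper's one-sided prefix-freeness by restricting the Kraft sum to codewords of elements of $\sup(D)$: there the definition forbids a prefix relation in either direction, including equality via $i = \len(\Enc(y))+1$, and that is all the expectation over $D$ needs.
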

For binary encodings ($\Sigma' = \Bit$), this bound simply becomes $\ell^*(D) \geq H(D)$.

\subsubsection{Arithmetic Encoding}

Towards constructing compression with the properties that are necessary for our results, we start with the classical notion of arithmetic encoding, which essentially maps each element of a distribution $D$ to a binary representation of the element's cumulative distribution function with respect to $D$ with sufficient precision to be unique.

\begin{definition}[Arithmetic encoding \cite{shannon,arithmetic}]\label{def:arithmetic_encoding}
Let $D$ be a discrete distribution and let $\prec$ be a total order over $\sup(D)$. Let $F : \sup(D) \rightarrow [0,1]$ by $F(x)=(\sum_{a\prec x}D(a))+1/2\cdot D(x)$. Then the \textbf{arithmetic encoding} $\Enc : \sup(D) \rightarrow \Bit^*$ is defined such that, for $x \in \sup(D)$, $\Enc(x)$ is the first $(\ceil{-\log D(x)} +1)$ bits of $F(x)$. 	
\end{definition}

We note the following fact that shows that, \emph{when restricted to the support of the distribution} $D$, arithmetic encoding is prefix-free and monotone, and additionally provides near-optimal data compression:

\begin{fact}[Properties of arithmetic encoding, lemma 2.8 in \cite{TrevisanVZ05}]\label{fact:arithmetic_encoding}
For every distribution family $\D = \braces{D_n}_{n \in \Nat}$ and ordering $\prec$ over the corresponding input alphabet $\Sigma_n$, the functions $\Enc_n$ given by applying Definition \ref{def:arithmetic_encoding} to $D_n$ and $\prec$ (for any $n \in \Nat$) are one-to-one and monotone, and the scheme $\Enc = \braces{\Enc_n}_{n \in \Nat}$ is prefix-free and compresses $\D$ to length $H(D_n) + 2$.

\end{fact}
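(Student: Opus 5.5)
Fix $n$, write $D=D_n$, and let $\prec$ be the lexicographic order on $\Sigma_n^{m(n)}$ induced by the given token ordering. Restricted to this common length, $\prec$ is a total order on $\sup(D)$. We verify the three claimed properties of the arithmetic encoding of Definition~\ref{def:arithmetic_encoding} separately: length, prefix-freeness with injectivity, and monotonicity. All of them come from the interval picture. To each $x\in\sup(D)$ associate the half-open interval
\[
I_x=\Bigl[\textstyle\sum_{a\prec x}D(a),\ \sum_{a\preceq x}D(a)\Bigr)
\]
of length $D(x)$. These intervals are pairwise disjoint, and they are laid out in $[0,1)$ in $\prec$-order. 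The point $F(x)$ is the midpoint of $I_x$. Let $L(x)=\lceil -\log D(x)\rceil+1$, and let $\lfloor F(x)\rfloor_{L(x)}$ denote the first $L(x)$ bits of $F(x)$, read as a dyadic rational.

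\textbf{Length.} We have $L(x)<-\log D(x)+2$. Averaging over $x\leftarrow D$ gives expected length at most $H(D)+2$.

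\textbf{Prefix-freeness and injectivity.} The key claim is that every real number whose binary expansion begins with $\Enc(x)$ lies in $I_x$. Such numbers form the dyadic interval $J_x=[\lfloor F(x)\rfloor_{L(x)},\ \lfloor F(x)\rfloor_{L(x)}+2^{-L(x)})$. We have $2^{-L(x)}\le D(x)/2$. Truncation moves $F(x)$ down by less than $2^{-L(x)}$, so $\lfloor F(x)\rfloor_{L(x)}> F(x)-D(x)/2$, which is the left endpoint of $I_x$. The right end of $J_x$ is at most $F(x)+2^{-L(x)}\le F(x)+D(x)/2$, which is the right endpoint of $I_x$. Hence $J_x\subseteq I_x$.

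Now suppose $\Enc(y)$ were a prefix of $\Enc(x)$ for some $y\neq x$. Then $J_x\subseteq J_y\subseteq I_y$, so $I_x\cap I_y\neq\emptyset$, contradicting disjointness. The same argument rules out equal codewords, which gives injectivity.

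\textbf{Monotonicity.} Take $x\prec y$ in $\sup(D)$. Then $I_x$ lies entirely to the left of $I_y$, so every point of $J_x$ is smaller than every point of $J_y$. For prefix-free codewords, lexicographic order agrees with the order of the corresponding dyadic intervals: they first differ at some bit, and that bit decides both orders. Hence $\Enc(x)<\Enc(y)$, and the converse follows since both orders are total.

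\textbf{Main obstacle.} Neither the inequalities nor the counting is the hard part. The main subtlety is the boundary and precision bookkeeping. The intervals must be taken half-open so that disjointness and the containment $J_x\subseteq I_x$ are strict where needed. We must also check that the extra $+1$ bit in $L(x)$ is exactly what keeps the truncation error within $D(x)/2$. Finally, the statement is only claimed on $\sup(D)$; outside the support, $D(x)=0$ and the encoding is undefined, so we restrict to the support as the statement does.
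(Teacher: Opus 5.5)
Your proof is correct. It is the standard Shannon--Fano--Elias/Gilbert--Moore interval argument: $2^{-L(x)} \le D(x)/2$ gives $J_x \subseteq I_x$, disjointness of the $I_x$ gives prefix-freeness and injectivity, and the left-to-right layout of the $I_x$ gives monotonicity. The paper does not prove this fact itself; it cites Lemma 2.8 of \cite{TrevisanVZ05}, which rests on this same argument, so your route is essentially the intended one.
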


This is not quite what we need to show monotonicity and prefix-freeness with respect to the entire input space $\Sigma^{m(n)}_n$---in fact, it is not necessarily clear how to define $\Enc$ outside the support of $\D$ since $(\ceil{-\log D_n(x)} +1)$ (the number of bits in $\Enc_n(x)$) is undefined when $D_n(x) = 0$. However, we show that one can transform arithmetic encoding to guarantee the stronger versions of these properties, and additionally show that arithmetic encoding, including our augmented version, is efficiently computable as long as the \emph{cumulative distribution function} (CDF) $\sum_{a\preceq x}D_n(a)$ is likewise efficiently computable for any $x \in \Sigma^{m(n)}_n$:

\begin{claim*}
\label{clm:arithmetic-properties}
For any alphabet $\braces{\Sigma_n}_{n \in \Nat}$, any distribution family $\D = \braces{D_n}_{n \in \Nat}$ where $\sup(D_n) \subseteq \Sigma_n^{m(n)}$ and $-\log(D_n(x)) \leq \ell(n)$ for any $x \in \sup(D_n)$ and some polynomial $\ell(\cdot)$, and any ordering $\prec_n$ over $\Sigma_n$, if the cumulative distribution function $\sum_{a\preceq x}D_n(a)$ can be efficiently computed for any $x \in \Sigma^{m(n)}_n$, then there exists a prefix-free, monotone, and efficiently computable encoding $\Enc = \braces{\Enc_n}_{n \in \Nat}$ that additionally compresses $\D$ to length $H(D_n) + 2$.
\end{claim*}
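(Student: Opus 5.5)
The plan is to keep the standard arithmetic code of Definition~\ref{def:arithmetic_encoding} on $\sup(D_n)$ and to extend it to all of $\Sigma_n^{m(n)}$ by \emph{appending} bits to existing codewords. Appending does not change the expected length, so the bound $H(D_n)+2$ and prefix-freeness on the support carry over from Fact~\ref{fact:arithmetic_encoding}. What remains is to choose the extension so that it is monotone, so that no new codeword is a prefix of a support codeword, and so that everything is computable from the CDF oracle $C(x)=\sum_{a\preceq x}D_n(a)$.

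\textbf{Step 1: the code on the support.} Let $x'$ be the predecessor of $x$ in the lexicographic order on $\Sigma_n^{m(n)}$; it is trivially computable. Then $D_n(x)=C(x)-C(x')$ and $F(x)=C(x')+D_n(x)/2$. Since $-\log D_n(x)\le \ell(n)$, the codeword length $L(x)=\lceil -\log D_n(x)\rceil+1$ is polynomial. I will assume the oracle returns exact rationals of polynomial bit length; the first $L(x)$ bits of $F(x)$ are then efficiently computable. The test $x\in\sup(D_n)$ is just $C(x)\neq C(x')$.

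\textbf{Step 2: the extension off the support.} Take $x\notin\sup(D_n)$, and let $\mathrm{bin}(x)$ be the fixed-length binary rank of $x$, which preserves the order. There are two cases.
\begin{itemize}
\item If some support element precedes $x$, let $a$ be the largest such element. Define $\Enc_n(x)=\Enc_n(a)\,\|\,\mathrm{bin}(x)$.
\item If no support element precedes $x$, let $a_{\min}$ be the smallest support element. Define $\Enc_n(x)=0^{j+1}\,\|\,\mathrm{bin}(x)$, where $0^j1$ is the prefix of $\Enc_n(a_{\min})$ up to its first $1$.
\end{itemize}
The first case needs $a$ explicitly. It is the largest $a\preceq x$ with $C(a)\neq C(a')$, and since $C$ is monotone, a binary search on $C$ finds it with polynomially many oracle calls (namely $O(m(n)\log|\Sigma_n|)$); $a_{\min}$ is found the same way.

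The second case needs that a $1$ exists in $\Enc_n(a_{\min})$. We have $F(a_{\min})=D_n(a_{\min})/2\ge 2^{-L(a_{\min})}$, so the first $L(a_{\min})$ bits of $F(a_{\min})$ encode $\lfloor F\cdot 2^{L}\rfloor\ge 1$ and cannot all be zero.

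\textbf{Step 3: verification.} Monotonicity reduces to a few comparisons.
\begin{itemize}
\item Codewords extending the same $\Enc_n(a)$ are ordered by $\mathrm{bin}(x)$.
\item Each such codeword exceeds $\Enc_n(a)$, because a proper prefix is lexicographically smaller.
\item Each such codeword is below $\Enc_n(b)$ for the next support element $b$. This holds because $\Enc_n(a)\prec\Enc_n(b)$ and $\Enc_n(a)$ is not a prefix of $\Enc_n(b)$ (Fact~\ref{fact:arithmetic_encoding}), so the two differ at a position inside $\Enc_n(a)$.
\item The codewords $0^{j+1}\|\cdots$ lie below $\Enc_n(a_{\min})=0^j1\cdots$, and hence below every support codeword.
\end{itemize}
Prefix-freeness in the paper's sense requires that no $\Enc_n(x)$ with $x\neq y$ is a prefix of $\Enc_n(y)$ for $y\in\sup(D_n)$.
\begin{itemize}
\item Suppose $\Enc_n(a)\|z$ were a prefix of $\Enc_n(y)$. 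Then $\Enc_n(a)$ would be a proper prefix of a different support codeword, which is a contradiction. The case $y=a$ is impossible on length grounds.
\item Suppose $0^{j+1}\|z$ were a prefix of $\Enc_n(y)$. Then $\Enc_n(y)\prec\Enc_n(a_{\min})$, contradicting monotonicity on the support.
\end{itemize}
Injectivity follows from strict monotonicity.

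\textbf{Main obstacle.} The delicate step is monotonicity at the left boundary, i.e., inputs below every support element. Appending to a predecessor codeword is unavailable there, and the fix hinges on the observation that $\Enc_n(a_{\min})$ is not all zeros. A secondary, mostly bookkeeping issue is the exact-precision computation of $F$ and of the truncated binary expansion.
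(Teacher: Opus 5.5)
Your proof is correct, but your extension off the support differs from the paper's.

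\textbf{The paper's route.} For $x\notin\sup(D_n)$, the paper sets $\Enc_n(x)$ to be the first $\ell(n)+2$ bits of $\sum_{a\prec x}D_n(a)$, or $1^{\ell(n)+2}$ if that sum is $1$, followed by the binary representation of $x$.
\begin{itemize}
\item Prefix-freeness is a pure length argument. Every off-support codeword is strictly longer than every support codeword, so it cannot be a prefix of one.
\item Monotonicity comes from comparing truncated CDF values. The paper notes that two truncations can coincide only when no support element lies between the two inputs. In that case the appended representations decide the order.
\end{itemize}
This costs one CDF query per encoding and no search. The left boundary needs no special treatment, since a CDF of $0$ gives $0^{\ell(n)+2}$. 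However, the a priori bound $\ell(n)$ is used essentially, to get prefix-freeness, and the right boundary needs the special case $1^{\ell(n)+2}$.

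\textbf{Your route.} You inherit both properties structurally from the support code. Appending to $\Enc_n(a)$ places the new codeword strictly between $\Enc_n(a)$ and $\Enc_n(b)$. This works precisely because $\Enc_n(a)$ and $\Enc_n(b)$ differ at a position inside $\Enc_n(a)$. As a result, you use $\ell(n)$ only to keep codewords polynomially long. The price is a binary search on the CDF to locate the preceding support element, plus a separate left-boundary case. You handle that case correctly with the observation that $\Enc_n(a_{\min})$ must contain a $1$.

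\textbf{What your route buys.} It is more general: it extends any monotone prefix-free code defined on the support, not only arithmetic coding, to the whole domain. All it needs is a way to test support membership and locate predecessors in the support.
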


\begin{proof}
Let $\Enc'_n$ be the arithmetic encoding given in Definition \ref{def:arithmetic_encoding}. Define $\Enc_n(x) = \Enc'_n(x)$ for all $x \in \sup(D_n)$; for $x \in \Sigma^{m(n)}_n \setminus \sup(D_n)$, let $\Enc_n(x)$ be the first $(\ell(n)+2)$ bits of the CDF $\sum_{a\prec x}D(a)$ (or $1^{\ell(n)+2}$ if the CDF is 1, i.e., $x \succ y$ for all $y \in \sup(D_n)$) with the binary representation of $x$ ordered by $\prec_n$ appended.

Prefix-freeness holds because, by Fact \ref{fact:arithmetic_encoding}, $\Enc_n$ is prefix-free when restricted to $\sup(D_n)$, and furthermore there cannot exist $x \in \Sigma^{m(n)}_n \setminus \sup(D_n)$ and $y \in \sup(D_n)$ where $\Enc_n(x)$ is a prefix of $\Enc_n(y)$ because, by definition, $\Enc_n(x)$ must be strictly longer than $\Enc_n(y)$.

To show monotonicity, consider some $x, y \in \Sigma^{m(n)}_n$ with $x \prec_n y$. If $x, y \in \sup(D_n)$, we conclude that $\Enc_n(x) < \Enc_n(y)$ by Fact \ref{fact:arithmetic_encoding}. If the first $(\ell(n)+2)$ bits of $\Enc_n(x)$ and $\Enc_n(y)$ are identical (which can only happen when $x, y \not\in \sup(D_n)$ and furthermore there is no $z \in D_n$ with $x \prec_n z \prec_n y$), we conclude that $\Enc_n(x) < \Enc_n(y)$ by the definition of $\Enc_n$ (since the following part of each of $\Enc_n(x)$ and $\Enc_n(y)$ is the binary representation of $x$ and $y$, respectively). Otherwise, it must be the case that $x$ and $y$ do not have the same CDF with respect to $D_n$, and so we conclude that $\Enc_n(x) < \Enc_n(y)$ because the respective prefixes of the encodings (which are prefixes of the respective CDFs) must be strictly ordered. 

To show that $\Enc_n$ can be efficiently computed: if $\sum_{a\preceq x}D_n(a)$ can be efficiently computed, then so can 
\[ D_n(x) = \sum_{a \preceq x}D_n(a) - \sum_{a\preceq x^-}D_n(a) \]
(where $x^-$ is the predecessor of $x$ under $\prec$) and consequently
\[ \sum_{a\prec x}D_n(a) = \sum_{a\preceq x}D_n(a) - D(x)\]
and thus $\Enc'_n(x)$ must be efficiently computable as well, which demonstrates efficient computability of $\Enc_n$ for $x \in \sup(D_n)$. For $x \not\in \sup(D_n)$, it suffices that the CDF $\sum_{a\prec x}D(a)$ is efficiently computable.

Finally, we remark that the encoding length of every $x \in \sup(D_n)$ is identical between $\Enc'_n$ and $\Enc_n$; therefore, the expected encoding length with respect to $D_n$ must remain unchanged.
\end{proof}

\begin{remark}
    We note the assumption in the above claim that we have an effective \emph{a priori} bound of $p(n)+1$ on the length of the arithmetic encoding $\Enc'_n(x)$. Without such a bound, one can still prove a similar result by truncating the arithmetic encoding to a suitably large polynomial length $\ell(n)$ and appending the binary representation of $x$ to any truncated encoding, even for $x \in \sup(D_n)$ with non-zero $D_n(x)$. Since any $x \in \sup(D_n)$ must have $D_n(x) \leq 2^{-\ell(n)}$ for its encoding to be affected, and there are at most $|\Sigma_n|^{m(n)}$ possible codewords, as long as $\ell(n)$ is selected so that $2^{\ell(n)} \gg |\Sigma_n|^{m(n)}$, this will lose at most an exponentially small fraction of a bit in the output length. We, however, choose instead to assume an \emph{a priori} bound $\ell(n)$ since, in the setting we consider, any efficiently computable family of autoregressive models must produce a distribution family $\D = \braces{D_n}_{n \in \Nat}$ where the arithmetic encoding length is polynomially bounded anyway.
\end{remark}

\subsubsection{KL-divergence and Compression Loss}

One intuitive interpretation of the KL-divergence $KL(D \; || \; D')$ between two distributions is as the expected number of additional bits that would be needed to encode samples from $D$ using compression optimized for the distribution $D'$ as opposed to using compression optimized for $D$. We restate this for schemes, such as arithmetic encoding, which are ``close to optimal" in the sense that the encoding length of a sample is additively close to the empirical entropy of that sample:

\begin{claim*}
\label{clm:optimal-encoding}
Let $D, D'$ be distributions on $\Sigma^m$ and $\Enc$ be an encoding such that, for any element $x \in \sup(D')$:
\[ \len(\Enc(x)) \leq \log(1/D'(x)) + \alpha \]
Then, for any distribution $D$ (with $\sup(D) \subseteq \sup(D')$):
\[ \Expect_{x \leftarrow D}[\len(\Enc(x))]  \leq H(D) + KL(D \; || \; D') + \alpha \]
\end{claim*}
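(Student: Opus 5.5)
The plan is to take the pointwise length bound, average it over $D$, and then split the resulting cross-entropy into entropy plus KL-divergence.

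First I will check that the expectation only involves points where the hypothesis applies. Since $\sup(D) \subseteq \sup(D')$, every $x$ in the support of $D$ satisfies $\len(\Enc(x)) \leq \log(1/D'(x)) + \alpha$. Moreover, $D'(x) > 0$ there, so $\log(1/D'(x))$ is finite.

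Second, I will take expectations over $x \leftarrow D$, using linearity of expectation and the fact that $\sum_{x \in \sup(D)} D(x) = 1$:
\[ \Expect_{x \leftarrow D}[\len(\Enc(x))] \leq \sum_{x \in \sup(D)} D(x)\log\frac{1}{D'(x)} + \alpha. \]

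Third, I will rewrite the summand as
\[ \log\frac{1}{D'(x)} = \log\frac{1}{D(x)} + \log\frac{D(x)}{D'(x)}, \]
which is valid because $D(x) > 0$ on $\sup(D)$. Summing against $D(x)$ gives exactly $H(D) + KL(D \; || \; D')$, using the definitions of $H$ and $KL$ from Section \ref{sec:definitions}. Both definitions are sums over $\sup(D)$, so they match the sum above term by term. Substituting into the inequality from the second step yields the claim.

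There is no real obstacle here. The only care needed is the support condition, which guarantees that no term $\log(1/D'(x))$ is infinite and that the decomposition is taken over the same index set $\sup(D)$ in every sum. All sums are finite because $D$ and $D'$ live on $\Sigma^m$.
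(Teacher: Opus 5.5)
Your proposal is correct and follows essentially the same argument as the paper, which writes $KL(D \; || \; D')$ as $\Expect_{x \leftarrow D}[\log(1/D'(x))] - H(D)$ and lower-bounds the first term by $\Expect_{x \leftarrow D}[\len(\Enc(x))] - \alpha$ using the pointwise length bound. That is your averaging-and-splitting step written in the other direction, and your explicit use of $\sup(D) \subseteq \sup(D')$ to keep every term finite is a detail the paper leaves implicit.
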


\begin{proof}
By the definition of KL-divergence:
\begin{align*}
 KL(D \; || \; D') &\; = \Expect_{x \leftarrow D} [\log (1/D'(x))] - \Expect_{x \leftarrow D} [\log (1/D(x))] \\
&\; \geq (\Expect_{x \leftarrow D} [\len(\Enc(x))] - \alpha) - H(D) \\
\end{align*}
which directly implies the conclusion.
\end{proof}

By definition, the family $\Enc = \braces{\Enc_D}$ where $\Enc_D$ is the arithmetic encoding relative to $D$ has this property for $\alpha = 2$ and thus satisfies:
\[ \Expect_{x \leftarrow D}[\len(\Enc_{D'}(x))]\leq H(D) + KL(D \; || \; D') + 2 \]

\subsection{Cross Entropy and LLM Training Objectives}

We recall the definition of cross-entropy:

\begin{definition}
The \textbf{cross-entropy} of a distribution $D_2$ relative to a distribution $D_1$ is given by:
\[ H(D_1, D_2) = - \Expect_{x \leftarrow D_1} [\log D_2(x)] = - \sum_{x \in \sup(D_1)} D_1(x) \log D_2(x) \]
\end{definition}
Cross-entropy is closely related to KL-divergence by the following fact:
\begin{fact}
\label{fact:cross-kl}
For any distributions $D_1, D_2$:
\[ H(D_1, D_2) = KL(D_1 || D_2) + H(D_1) \]
\end{fact}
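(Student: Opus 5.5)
The plan is to unfold both sides of the identity as sums over the support of $D_1$ and compare them termwise. By the definition of KL-divergence given in Section \ref{sec:definitions},
\[ KL(D_1 \,||\, D_2) = \sum_{x \in \sup(D_1)} D_1(x) \log \frac{D_1(x)}{D_2(x)} = \sum_{x \in \sup(D_1)} D_1(x)\log D_1(x) - \sum_{x \in \sup(D_1)} D_1(x) \log D_2(x). \]
By the definition of Shannon entropy, the first sum is exactly $-H(D_1)$. By the definition of cross-entropy, the second sum (with its minus sign) is exactly $H(D_1, D_2)$. Rearranging gives $H(D_1, D_2) = KL(D_1 \,||\, D_2) + H(D_1)$.

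The only point needing care is the splitting of the logarithm of the ratio, which is valid termwise only when both logarithms are finite. For $x \in \sup(D_1)$, $\log D_1(x)$ is always finite. If $D_2(x) > 0$ for every $x \in \sup(D_1)$, every term is finite and the split is legitimate.

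Otherwise there is some $x \in \sup(D_1)$ with $D_2(x) = 0$. For such $x$ the corresponding term is $+\infty$ both in $KL(D_1 \,||\, D_2)$ (under the usual convention $\log(a/0) = +\infty$ for $a > 0$) and in $H(D_1, D_2)$. Since $H(D_1)$ is finite for a discrete distribution with finite support, as in the paper's setting, both sides of the identity equal $+\infty$ and it still holds.

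I expect no real obstacle here. The proof is a one-line algebraic manipulation, and the only subtlety is handling the infinite-value convention when $\sup(D_1) \not\subseteq \sup(D_2)$.
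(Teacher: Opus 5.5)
Your proposal is correct and matches the paper's proof: expand the sum over $\sup(D_1)$, split the logarithm of the ratio, and identify the two pieces as $H(D_1, D_2)$ and $-H(D_1)$. Your handling of the case where $D_2$ vanishes somewhere on $\sup(D_1)$, where both sides equal $+\infty$, is a small refinement the paper leaves implicit.
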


\begin{proof}
    \begin{align*}
        H(D_1, D_2) &= - \sum_{x \in \sup(D_1)} D_1(x) \log D_2(x) = \sum_{x \in \sup(D_1)} D_1(x) \log \frac{D_1(x)}{D_2(x)} - \sum_{x \in \sup(D_1)} D_1(x) \log D_1(x) \\&=   KL(D_1 || D_2) + H(D_1). 
    \end{align*}
\end{proof}

\subsubsection{Autoregressive Models}

We shall generally treat LLMs in a black-box way and consider the process of training an LLM in terms of the objective function that it attempts to optimize. Let us assume that an LLM is being trained on some distribution $D$ over sequences of tokens $t \in \Sigma^*$ from a vocabulary $\Sigma$. During pretraining, the majority of the training process, an LLM is given samples from the distribution $D$ and attempts to ``learn" the distribution $D$ by building an \emph{autoregressive model} $\M$ that maps any sequence of tokens $t_{<i} \in \Sigma^*$ to a distribution $\M_{t_{<i}}$ over next tokens that assigns each possible next token $t_i$ some probability $\M(t_i \; | \; t_{<i})$. Formally:

\begin{definition}
Given an alphabet $\Sigma$ and a maximum output length $m$,\footnote{We can assume without loss of generality that outputs of length less than $m$ are padded to length $m$ by a special $\End$ token.} we define an \textbf{autoregressive model} $\M$ over $\Sigma^m$ to be a collection of distributions $\braces{\M_{t}}_{t \in \Sigma^{< m}}$ where every distribution $\M_t$ has support $\sup(\M_t) \subseteq \Sigma$. We define the induced distribution $D_\M$ to be the distribution of complete outputs resulting from the process of starting with an empty string $t$ and iteratively appending a sample from $\M_t$ to $t$ $m$ times to construct a length-$m$ string $t \in \Sigma^m$.
\end{definition}

The objective of this training process, then, is to minimize the expectation over samples $t \leftarrow D$ of the loss function
\begin{equation}
\label{eqn:loss}
L(\M, t) = -\sum_{i \in [\len(t)]} \log \M(t_i \; | \; t_{<i})
\end{equation}
which is known as \emph{cross-entropy loss}
or \emph{negative log-likelihood}. We refer to an autoregressive model as $\delta$-\emph{optimal} with respect to a distribution $D$ if, for any 
autoregressive model $\M^*$ defined over the same alphabet:
\[ \Expect_{t \leftarrow D}[L(\M, t)] - \Expect_{t \leftarrow D}[L(\M^*, t)] \leq  \delta \]
that is, the expected loss of $\M$ over $D$ is no more than $\delta$ worse than the minimum achievable loss. An important observation about this loss function is the following:
\begin{claim*}
\label{clm:loss-kl}
For alphabet $\Sigma$, maximum length $m$, and a fixed training distribution $D$ with support $\sup(D) \subseteq \Sigma^m$, if there exist an autoregressive model $\M$ over $\Sigma^m$ and constant $\delta$ such that $\M$ is $\delta$-optimal with respect to $D$,
then, for any autoregressive model $\M^*$ defined over $\Sigma^m$:
\[KL(D \; || \; D_\M) - KL(D \; || \; D_{\M^*}) \leq \delta \]
\end{claim*}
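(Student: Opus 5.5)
The plan is to rewrite the expected loss $\Expect_{t \leftarrow D}[L(\M, t)]$ as a cross-entropy, then use Fact \ref{fact:cross-kl} to turn differences of losses into differences of KL-divergences. Fix $t \in \sup(D) \subseteq \Sigma^m$. By the definition of $D_\M$, the probability that autoregressive sampling from $\M$ outputs $t$ is the product of the conditional next-token probabilities along $t$:
\[ D_\M(t) = \prod_{i \in [m]} \M(t_i \; | \; t_{<i}). \]
This holds because the sampling process appends a sample from $\M_{t_{<i}}$ at step $i$, so the event of outputting $t$ is the intersection of the events that each step produces $t_i$ given the prefix $t_{<i}$. Taking $-\log$ of both sides gives $L(\M, t) = -\log D_\M(t)$, with both sides equal to $+\infty$ if some factor is $0$. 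Since $\len(t) = m$ on the support, the sum in (\ref{eqn:loss}) ranges over exactly these $m$ factors.

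Averaging over $t \leftarrow D$ then gives
\[ \Expect_{t \leftarrow D}[L(\M, t)] = -\sum_{t \in \sup(D)} D(t) \log D_\M(t) = H(D, D_\M) = KL(D \; || \; D_\M) + H(D), \]
where the last equality is Fact \ref{fact:cross-kl}. The identical computation applies to $\M^*$. Subtracting, the $H(D)$ terms cancel, so
\[ KL(D \; || \; D_\M) - KL(D \; || \; D_{\M^*}) = \Expect_{t \leftarrow D}[L(\M, t)] - \Expect_{t \leftarrow D}[L(\M^*, t)] \leq \delta, \]
and the final inequality is exactly the definition of $\delta$-optimality of $\M$.

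The only delicate step is handling infinite values. $H(D)$ is finite because $D$ has finite support. $\delta$-optimality, applied for example to the model $\M^*$ whose next-token distributions are the true conditionals of $D$ (defined arbitrarily on prefixes of probability zero), forces $\Expect_{t \leftarrow D}[L(\M,t)] < \infty$, so $KL(D \; || \; D_\M)$ is finite. If $KL(D \; || \; D_{\M^*}) = \infty$, the left-hand side is $-\infty$ and the inequality holds trivially. Otherwise every quantity is finite and the subtraction above is valid. I expect the factorization identity for $D_\M(t)$ to be the main step to state carefully, but it follows directly from the definition of the induced distribution, so I do not anticipate a real obstacle.
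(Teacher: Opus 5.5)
Your argument is correct. It reaches the key identity (expected loss equals KL divergence plus a model-independent constant) through a different decomposition than the paper uses.

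The paper works token by token. It applies the chain rule for KL divergence to write $KL(D \; || \; D_\M)$ as $\sum_{i} \Expect_{t_{<i}}[H(D_{t_{<i}}, \M_{t_{<i}}) - H(D_{t_{<i}})]$. It then reassembles the conditional cross-entropy terms into $\Expect_{t \leftarrow D}[L(\M,t)]$ and observes that the conditional entropy terms do not depend on the model.

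You instead work with whole sequences. The product formula $D_\M(t) = \prod_{i} \M(t_i \mid t_{<i})$ gives $L(\M,t) = -\log D_\M(t)$ pointwise, and you then apply Fact \ref{fact:cross-kl} once to $D$ and $D_\M$.

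Your route has three advantages:
\begin{itemize}
\item It is more elementary: it needs only the definition of $D_\M$, not the chain rule (which is itself a consequence of that same factorization).
\item It identifies the model-independent constant explicitly as $H(D)$, rather than leaving it as $\sum_i \Expect[H(D_{t_{<i}})]$.
\item It handles infinite losses, which the paper's proof does not address. Your use of the true-conditional model to force $KL(D \; || \; D_\M) < \infty$ is the right way to make the subtraction legitimate.
\end{itemize}

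The paper's version, in exchange, expresses the excess loss as a sum of per-token KL terms. That matches the token-wise form of the training objective and the next-bit viewpoint used later for pseudoentropy.

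Your sequence-level argument is essentially the one the paper sketches informally in its technical overview.
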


\begin{proof}
Let $m$ be the maximum output length of a sample from $D$. We can apply the chain rule and Fact \ref{fact:cross-kl} to show:
\begin{align*}
KL(D \; || \; D_\M) &\; = \sum_{i \in [m]} \Expect_{t_{<i} : t \leftarrow D} [KL(D_{t_{<i}} \; || \; \M_{t_{<i}})]  \\
&\; = \sum_{i \in [m]} \Expect_{t_{<i} : t \leftarrow D} [H(D_{t_{<i}}, \M_{t_{<i}}) - H(D_{t_{<i}})]  \\
\end{align*}
where $D_{t_{<i}}$ is defined analogously to $\M_{t_{<i}}$ (i.e., the next-token distribution given prefix $t_{<i}$) for the training distribution $D$. Furthermore:
\begin{align*}
&\;  \sum_{i \in [m]} \Expect_{t_{<i} : t \leftarrow D} [H(D_{t_{<i}}, \M_{t_{<i}})]  \\
= &\; - \sum_{i \in [m]} \Expect_{t_{<i} : t \leftarrow D} [\Expect_{t_{i} \leftarrow D_{t_{<i}}} [\log \M(t_i \; | \; t_{<i})]]  \\
= &\; - \sum_{i \in [m]} \Expect_{t_{\leq i} : t \leftarrow D} [\log \M(t_i \; | \; t_{<i})]  \\
= &\;  \Expect_{t \leftarrow D} \left[ - \sum_{i \in [m]} \log \M(t_i \; | \; t_{<i}) \right] =  \Expect_{t \leftarrow D} \left[ - \sum_{i \in [\len(t)]} \log \M(t_i \; | \; t_{<i}) \right] = \Expect_{t \leftarrow D} [L(\M, t)] \\
\end{align*}
Importantly, since the training distribution $D$ is fixed, the entropy $H(D_{t_{<i}})$ is constant, as is its expectation over $t \leftarrow D$, which provides that $KL(D \; || \; D_\M)$ and $\Expect_{t \leftarrow D} [L(\M, t)]$ are separated by the same additive constant for any model $\M$, implying the conclusion since we have 
\[ \Expect_{t \leftarrow D}[L(\M, t)] - \Expect_{t \leftarrow D}[L(\M^*, t)] \leq  \delta \]
by assumption of $\delta$-optimality.
\end{proof}

This intuitively means that training an LLM by attempting to minimize the expected loss $\Expect_{t \leftarrow D} [L(\M, t)]$ is equivalent to minimizing the KL-divergence $KL(D \; || \; D_\M)$.

One final result that will be useful to us is demonstrating that the cumulative distribution function of a distribution $D_\M$ induced by an autoregressive model $\M$ can be efficiently computed:

\begin{claim*}
\label{clm:efficient-cumulative}
Given a polynomial-sized alphabet family $\Sigma = \braces{\Sigma_n}_{n \in \Nat}$ with lexicographical ordering $\prec$, polynomial maximum length $m(\cdot)$, and family of autoregressive models $\M = \braces{\M_n}_{n \in \Nat}$ over $\Sigma_n^{m(n)}$, the following statements are equivalent:

\begin{enumerate}
    \item For any $t \in \Sigma_n^{m(n)}$, $\M_n(t_i \; | \; t_{<i})$ is efficiently computable on all prefixes with $\Pr_{x \leftarrow D_{\M_n}}[x_{< i} = t_{< i}] > 0$.
    \item For any $n \in \Nat$ and $a \in \Sigma_n^{m(n)}$, the cumulative distribution function $\Pr_{x \leftarrow D_{\M_n}}[x \preceq a]$ is efficiently computable.
\end{enumerate}
\end{claim*}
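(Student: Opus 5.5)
We prove the two implications separately, reducing everything to prefix probabilities $P(t_{<i}) := \Pr_{x \leftarrow D_{\M_n}}[x_{<i} = t_{<i}]$. These are linked to next-token probabilities by the chain rule, $P(t_{\le i}) = P(t_{<i})\,\M_n(t_i \mid t_{<i})$, and to the CDF through the lexicographic ordering.

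\emph{(1) $\Rightarrow$ (2).} We follow the decomposition from the technical overview. For $a \in \Sigma_n^{m(n)}$, every $x \prec a$ of the same length has a unique first index $i$ where $x_i \prec a_i$ and $x_{<i} = a_{<i}$. These events are disjoint over $i$, so
\[ \Pr_{x \leftarrow D_{\M_n}}[x \prec a] = \sum_{i \in [m(n)]} P(a_{<i}) \sum_{z \prec a_i} \M_n(z \mid a_{<i}). \]
Adding $P(a) = \prod_{i} \M_n(a_i \mid a_{<i})$ gives $\Pr[x \preceq a]$. We compute the sum by iterating $i = 1, \dots, m(n)$ while maintaining the running product $P(a_{<i})$.

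Once $P(a_{<i}) = 0$ at some $i$, all remaining terms vanish and we stop. This ensures we only query $\M_n$ on prefixes of positive probability, as hypothesis (1) requires. The total work is $O(m(n)\,|\Sigma_n|)$ oracle evaluations plus arithmetic on rationals of polynomial bit length, which is polynomial because $|\Sigma_n|$ and $m(n)$ are.

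\emph{(2) $\Rightarrow$ (1).} The key observation, illustrated in Figure~\ref{fig:monotone-prefix-cdf}, is that the strings in $\Sigma_n^{m(n)}$ with prefix $u = t_{<i}$ form a contiguous interval in lexicographic order. Its least element is $u\sigma_{\min}^{m(n)-i+1}$ and its greatest is $u\sigma_{\max}^{m(n)-i+1}$, where $\sigma_{\min},\sigma_{\max}$ are the extreme tokens under $\prec$. Hence
\[ P(u) = F\bigl(u\sigma_{\max}^{m(n)-i+1}\bigr) - F\bigl((u\sigma_{\min}^{m(n)-i+1})^-\bigr), \]
where $F$ is the CDF and $y^-$ denotes the lexicographic predecessor. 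Equivalently, the second term is $F(u\sigma_{\min}\cdots) - D_{\M_n}(u\sigma_{\min}\cdots)$, and the point mass is itself a difference of two CDF values. The predecessor is efficiently computable: decrement the last token that is not $\sigma_{\min}$ and set all later positions to $\sigma_{\max}$. When $u$ is all of $\sigma_{\min}$'s, the predecessor does not exist and we use $F = 0$ instead.

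Having $P(\cdot)$, we set $\M_n(t_i \mid t_{<i}) = P(t_{\le i}) / P(t_{<i})$, which is well defined exactly on prefixes with $P(t_{<i}) > 0$, as (1) requires. This uses a constant number of CDF evaluations per query.

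The main obstacle is bookkeeping rather than any conceptual step. We must handle the boundary cases correctly: empty prefixes, the minimal string having no predecessor, and zero-probability prefixes, which the restriction in (1) exists to exclude. We also need to confirm that the numbers involved have polynomial bit length so that the arithmetic stays efficient. Everything else is a direct application of the chain rule and the interval structure of lexicographic order.
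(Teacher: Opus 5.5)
Your proposal is correct and takes essentially the same approach as the paper. For (1)$\Rightarrow$(2) the paper uses the same first-divergence decomposition with a running prefix product that halts once it reaches zero; for (2)$\Rightarrow$(1) it writes each prefix probability as a difference of two CDF values (the greatest completion of the prefix, and the greatest string with a smaller prefix, i.e.\ your predecessor) and takes the ratio.
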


\begin{proof}

\emph{(1) $\rightarrow$ (2).} Consider the following procedure for computing $\Pr_{x \leftarrow D_{\M_n}}[x \preceq a]$:
\begin{itemize}
\item Let $p = 0$, $p' = 1$, and $i = 1$.
\item Let $M = (\M_n)_{a_{<i}}$, the next-token distribution for the $i^\text{th}$ token of $a$.
\item Set $p \leftarrow p + p' ( \sum_{t \in \text{sup}(M):t \prec a_i} M(t))$ and set $p' = p' \cdot M(a_i)$.
\item Repeat the previous two steps incrementing $i$ until $i = m(n)$ or $p' = 0$, then take $p \leftarrow p + p'$, terminate and return $p$.
\end{itemize}
Correctness holds because each iteration adds the probability that $x \leftarrow D_{\M_n}$ has $x_{<i} = a_{<i}$ but $x_i \prec a_i$ (and the final step $p \leftarrow p + p'$ adds the probability that $x = a$), and the sum of those probabilities is precisely $\Pr_{x \leftarrow D_{\M_n}}[x \preceq a]$.

Efficiency holds because, since $|\Sigma|$ is polynomial, we compute $M(t) = \M_n(t \; | \; a_{<i})$ at most a polynomial number of times in each iteration of the loop, which itself iterates a polynomial number $p(n)$ of times. 

\emph{(2) $\rightarrow$ (1).} 
This follows since 

\[\M_n(t_i \; | \; t_{<i}) = \frac{\Pr_{x \leftarrow D_{\M_n}}[x_{\leq i} = t_{\leq i}]}{\Pr_{x \leftarrow D_{\M_n}}[x_{< i} = t_{< i}]} =  \frac{\Pr_{x \leftarrow D_{\M_n}}[x \preceq t_{\leq i}^\top] - \Pr_{x \leftarrow D_{\M_n}}[x \preceq t_{\leq i}^\bot]}{\Pr_{x \leftarrow D_{\M_n}}[x \preceq t_{<i}^\top] - \Pr_{x \leftarrow D_{\M_n}}[x \preceq t_{<i}^\bot]}\]
where $t_{<i}^\top$ is the $\prec$-greatest element of $\Sigma_n^{m(n)}$ with $t_{<i}$ as a prefix and $t_{<i}^\bot$ is the $\prec$-greatest element of $\Sigma_n^{m(n)}$ with a prefix $(t^\bot)_{<i} \prec t_{<i}$, and respectively for $t_{\leq i}^\top$ and $t_{\leq i}^\bot$. 
\end{proof}

\section{Autoregressive Models and Compression}
\label{sec:equivalence}

In this section, we formally prove the equivalence (up to an additive constant) that we informally stated in Theorem \ref{thm:informal} between the notions of optimal compression of a distribution and learning an accurate autoregressive model for that distribution. Using the facts about compression and cross entropy stated in the previous section, we demonstrate that near-optimal compression can be constructed from a nearly-accurate autoregressive model of a distribution, and subsequently that a nearly-accurate autoregressive model can be constructed from near-optimal compression. Intuitively, this means that training a model such as an LLM to learn a model that minimizes the loss function $L(\M, t)$ given in Equation \ref{eqn:loss} (i.e., pretraining) is equivalent to training it to learn near-optimal compression via an encoding scheme (such as arithmetic encoding) meeting the requirements of Claim \ref{clm:optimal-encoding}.

\subsection{Accurate Autoregressive Models Imply Near-Optimal Compression}

We begin with the first direction: given an autoregressive model $\M$ for a given distribution such that the expectation of $L(\M, t)$ over samples from the distribution is close to its optimum, one can use a near-optimal encoding scheme (e.g., arithmetic encoding) to construct efficient compression that is similarly near-optimal with respect to coding length.

\begin{lemma}
\label{lemma:llm-compression}
Let $\D$ be a distribution over $\Sigma^m$,  $\M$ be an autoregressive model with induced distribution $D_\M$, and $\Enc$ be a prefix-free encoding scheme such that there exists constant $\alpha$ for which, for any $x\in \sup(\D)$, $\len(\Enc(x)) \leq \log(1/D_\M(x)) + \alpha$. Then, if $\M$ is $\delta$-optimal with respect to $D$, $\Enc$ satisfies:
\[ \Expect_{x \leftarrow D} [\len(\Enc(x))] \leq H(D) + \delta + \alpha \]
\end{lemma}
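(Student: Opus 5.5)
The plan is to chain together Claim \ref{clm:optimal-encoding} and Claim \ref{clm:loss-kl}, using as the comparison model an autoregressive model $\M^*$ that reproduces $D$ exactly.

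First, apply Claim \ref{clm:optimal-encoding} with $D' = D_\M$. The hypothesis $\len(\Enc(x)) \leq \log(1/D_\M(x)) + \alpha$ is given on $\sup(D)$, and that is precisely where the claim's proof uses it, since it only averages over $x \leftarrow D$. The claim also needs $\sup(D) \subseteq \sup(D_\M)$. If this fails, $\delta$-optimality breaks: some $x \in \sup(D)$ has $D_\M(x)=0$, so the expected loss $\Expect_{t\leftarrow D}[L(\M,t)]$ is infinite. That contradicts $\delta$-optimality for finite $\delta$, because the model $\M^*$ below has finite loss. So we may assume the support containment, and we obtain
\[ \Expect_{x\leftarrow D}[\len(\Enc(x))] \leq H(D) + KL(D\,||\,D_\M) + \alpha. \]

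Second, bound $KL(D\,||\,D_\M) \leq \delta$ via Claim \ref{clm:loss-kl}. Define $\M^*$ by setting $\M^*_{t_{<i}} = D_{t_{<i}}$, the true conditional next-token distribution of $D$, on every prefix with positive probability under $D$. On all other prefixes, $\M^*$ can be an arbitrary distribution, for instance a point mass. By the chain rule, $D_{\M^*} = D$, so $KL(D\,||\,D_{\M^*}) = 0$. Claim \ref{clm:loss-kl} then gives $KL(D\,||\,D_\M) \leq \delta + 0 = \delta$. Substituting into the first display yields $H(D)+\delta+\alpha$.

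The only mildly delicate point is the support and infinite-loss issue in the first step. A second, minor point is that $\M^*$ need not be efficient: $\delta$-optimality is quantified over all autoregressive models, so this is fine. Everything else is a direct substitution.
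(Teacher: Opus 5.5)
Your proposal is correct and is essentially the paper's proof. You apply Claim~\ref{clm:loss-kl} against a model whose induced distribution is exactly $D$ to get $KL(D \,||\, D_\M) \leq \delta$, then use Claim~\ref{clm:optimal-encoding} with $D' = D_\M$; your explicit construction of $\M^*$ and your argument that $\delta$-optimality forces $\sup(D) \subseteq \sup(D_\M)$ fill in details the paper leaves implicit.
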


Proving this lemma is in fact immediate, since we have already demonstrated the necessary groundwork as part of our discussion of compression and KL-divergence in Section \ref{sec:definitions}:

\begin{proof}
By Claim \ref{clm:loss-kl}, we have $KL(D \; || \; D_\M) \leq \delta$ (as the optimum is attained at $D = D_\M$ which results in a divergence of 0). By Claim \ref{clm:optimal-encoding}:
\[ \Expect_{x \leftarrow D}[\len(\Enc(x))] \leq H(D) + KL(D \; || \; D_\M) + \alpha \leq H(D) + \delta + \alpha  \]
\end{proof}

Note that, due to Shannon's lower bound from Corollary \ref{cor:shannon}, we know that this bound on the expected encoding length must be no greater than $\delta + \alpha$ away from the optimal expected encoding length over all prefix-free encodings. 
We can instantiate this using Claim \ref{clm:arithmetic-properties}, which gives $\alpha = 2$, for the following theorem as a direct corollary, noting that Claim \ref{clm:efficient-cumulative} gives the requisite property that the cumulative distribution function is efficiently computable\footnote{While our result does lose a constant number of bits compared to a ``perfect" equivalence, we note that this is in part inevitable, as $\alpha = 1$ is the smallest attainable even by non-monotone compression for an arbitrary family of distributions $\D$.}:

\begin{theorem}
\label{thm:llm-compression-arithmetic}
Given a polynomial-sized alphabet family $\Sigma = \braces{\Sigma_n}_{n \in \Nat}$ with respective ordering $\braces{\prec_n}_{n \in \Nat}$ and polynomial maximum length $m(\cdot)$, there exists an oracle-aided algorithm $\Enc^{(\cdot)}$ so that, for any family of distributions $\D = \braces{D_n}_{n \in \Nat}$, family of autoregressive models $\M = \braces{\M_n}_{n \in \Nat}$ over $\Sigma_n^{m(n)}$ where $\M_n(t_i \; | \; t_{<i})$ is efficiently computable for any $t \in \Sigma_n^{\leq m(n)}$ and induces a distribution $D_{\M_n}$ such that there exists polynomial\footnote{We note that there must exist such a polynomial $\ell(\cdot)$ for any efficiently computable autoregressive model family $\M$, since all next token probabilities must be polynomial-length representable for efficient computability. As noted in the remark following Claim \ref{clm:arithmetic-properties}, one can remove the requirement for such an a priori bound, but at the cost of an exponentially small $\epsilon(n)$ to the encoding length.} $\ell(\cdot)$ for which $- \log(D_{\M_n}(x)) \leq \ell(n)$ for all $x \in \sup(D_{\M_n})$, and function $\delta(\cdot)$ so that, for any $n \in \Nat$, $\M_n$ is $\delta(n)$-optimal with respect to $D_n$, the family of encodings $\Enc = \braces{\Enc^{\M_n, \prec_n}(1^n, \ell(n), \cdot)}_{n \in \Nat}$ is prefix-free, monotone (with respect to $\prec$), and efficient, and additionally, for any $n \in \Nat$:
\[ \Expect_{x \leftarrow D_n} [\len(\Enc_n(x))] \leq H(D_n) + \delta(n) + 2 \]
\end{theorem}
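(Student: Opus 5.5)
The plan is to obtain the theorem as a composition of three earlier results: Claim \ref{clm:efficient-cumulative} (next-token oracle gives CDF), Claim \ref{clm:arithmetic-properties} (CDF gives an augmented arithmetic encoding), and Lemma \ref{lemma:llm-compression} (compression length bound).

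First, I will define the oracle-aided algorithm $\Enc^{\M_n,\prec_n}(1^n,\ell(n),x)$. It computes the CDF $\Pr_{y\leftarrow D_{\M_n}}[y\preceq a]$ using the iterative procedure from the $(1)\rightarrow(2)$ direction of Claim \ref{clm:efficient-cumulative}, making polynomially many oracle calls to $\M_n(\cdot\mid\cdot)$. From this it obtains $D_{\M_n}(x)$ and $\sum_{a\prec x}D_{\M_n}(a)$, either by subtracting the CDF at the predecessor of $x$ or, more simply, by reading off $p$ before the final step $p\leftarrow p+p'$. It then outputs the augmented arithmetic encoding built in the proof of Claim \ref{clm:arithmetic-properties}, applied to the distribution family $\{D_{\M_n}\}$ with the a priori length bound $\ell(n)$. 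Concretely, for $x\in\sup(D_{\M_n})$ it outputs the first $\lceil -\log D_{\M_n}(x)\rceil+1$ bits of $F(x)$. For $x\notin\sup(D_{\M_n})$ it outputs the first $\ell(n)+2$ bits of the CDF followed by the binary representation of $x$. Claim \ref{clm:arithmetic-properties} then gives prefix-freeness, monotonicity with respect to $\prec$, and efficiency, since the CDF is efficiently computable and all probabilities have polynomial bit-length.

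Second, I will verify the per-element bound needed for Lemma \ref{lemma:llm-compression}: for every $x\in\sup(D_{\M_n})$,
\[\len(\Enc_n(x))=\lceil\log(1/D_{\M_n}(x))\rceil+1\le\log(1/D_{\M_n}(x))+2,\]
so $\alpha=2$.

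The main obstacle is matching supports. Lemma \ref{lemma:llm-compression} asks for the bound on $\sup(D_n)$, and Claim \ref{clm:optimal-encoding} assumes $\sup(D_n)\subseteq\sup(D_{\M_n})$. I will argue that if $\delta(n)$ is finite, then $\delta$-optimality together with Claim \ref{clm:loss-kl} gives $KL(D_n\,\|\,D_{\M_n})\le\delta(n)<\infty$, taking $\M^*$ to be the exact next-token model of $D_n$, whose divergence is $0$. Finite divergence forces $D_{\M_n}(x)>0$ for every $x\in\sup(D_n)$. If $\delta(n)=\infty$, the bound is vacuous.

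Finally, applying Lemma \ref{lemma:llm-compression} with $\alpha=2$ yields $\Expect_{x\leftarrow D_n}[\len(\Enc_n(x))]\le H(D_n)+\delta(n)+2$ for every $n$, which is the claimed bound.
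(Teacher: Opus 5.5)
Your proposal is correct and follows the paper's route: the paper obtains this theorem as a direct corollary by chaining Claim~\ref{clm:efficient-cumulative} (efficient next-token probabilities give an efficiently computable CDF), Claim~\ref{clm:arithmetic-properties} (the augmented arithmetic encoding, which gives $\alpha = 2$), and Lemma~\ref{lemma:llm-compression}. You also check explicitly that a finite $\delta(n)$ forces $\sup(D_n)\subseteq\sup(D_{\M_n})$, as Claim~\ref{clm:optimal-encoding} requires; the paper leaves this step implicit, and you handle it correctly.
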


\subsection{Near-Optimal Compression Implies Accurate Autoregressive Models}

Next, we show the other direction: if one has an encoding that compresses a given distribution near-optimally, it is possible to create an autoregressive model that is close to optimal in terms of the expected next-token cross entropy loss $\Expect_{t \leftarrow D} [L(\M, t)]$. Formally:

\begin{theorem}
\label{thm:compression-llm}
Given a polynomial-sized alphabet family $\Sigma = \braces{\Sigma_n}_{n \in \Nat}$ and polynomial maximum length $m(\cdot)$, there exists an oracle-aided algorithm $\M^{(\cdot)}$ such that, for any family of distributions $\D = \braces{D_n}_{n \in \Nat}$, any function $\delta(\cdot)$, and any prefix-free, monotone (with respect to an ordering $\braces{\prec_n}_{n \in \Nat}$) and efficiently encodable family of encodings $\Enc = \braces{\Enc_n}_{n \in \Nat}$ such that, for any $n \in \Nat$,
\[ \Expect_{x \leftarrow D_n} [\len(\Enc_{n}(x))] \leq H(D_n) + \delta(n) \] $\M^{(\cdot)}$ defines a family of autoregressive models $\braces{\M_n}_{n \in \Nat}$ where, for any $n \in \Nat$ and $t \in \Sigma_n^{\leq m(n)}$, $\M_n(t_i \; | \; t_{<i})$ is given by $\M^{\Enc_n, \prec_n}(1^n,  t_i, t_{<i})$, $\M^{(\cdot)}$ is efficiently computable, and the respective model $\M_n$ is $\delta(n)$-optimal with respect to $D_n$.
\end{theorem}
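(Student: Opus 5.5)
The plan follows the technical overview: build a weakly monotone sampler from $\Enc_n$, define $\M_n$ as the autoregressive model whose induced distribution is the sampler's output distribution, and then prove optimality via the pointwise bound $D_{\M_n}(x)\ge 2^{-\len(\Enc_n(x))}$.

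\textbf{Step 1: the length bound $\ell$.} First fix a polynomial $\ell(n)$ bounding $\len(\Enc_n(x))$ for every $x\in\Sigma_n^{m(n)}$. Since $\Enc_n$ runs in time $p(n)$, we may take $\ell(n)=p(n)$. The oracle algorithm is only told the running-time bound, so this is a legitimate choice rather than an extra assumption.

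\textbf{Step 2: the sampler.} For $c\in\{0,1\}^\ell$, $\Sample(c)$ scans the prefixes $c_{\le j}$ for $j=1,\dots,\ell$. For each prefix it decodes by binary search over $\Sigma_n^{m(n)}$ (Fact \ref{fact:decoding}) and checks whether the result re-encodes to exactly $c_{\le j}$. It outputs the first successful decoding. If no prefix decodes, it outputs the $\prec$-largest $x$ with $\Enc_n(x)<c$, or the $\prec$-smallest input if no such $x$ exists; this $x$ is again found by binary search using monotonicity.

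I then prove weak monotonicity: $c<c'$ implies $\Sample(c)\preceq\Sample(c')$. The argument is a case analysis comparing each output $x$ with the codeword order. Write $x=\Sample(c)$. Then $\Enc_n(x)$ is either a prefix of $c$ or lies strictly below $c$, and every $y\succ x$ has $\Enc_n(y)>\Enc_n(x)$ and is not a prefix of $c$. The delicate point is that a prefix of $c$ can itself be smaller than $c$ in lexicographic order, so I will work with the convention that a prefix sorts below its extensions. I would check this carefully, and if necessary tweak the fallback rule (for example, use ``largest $x$ with $\Enc_n(x)\le c$ in prefix-extended order'') so that $\Sample(c)$ is exactly $\max_\prec\{x:\Enc_n(x)\le c \text{ as a prefix-closed comparison}\}$. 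That formula is manifestly monotone in $c$ and also agrees with the prefix-decoding rule.

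\textbf{Step 3: the pointwise probability bound.} For $x\in\sup(D_n)$, every $c$ extending $\Enc_n(x)$ satisfies $\Sample(c)=x$. No shorter prefix of $c$ is a codeword of any $x'\in\Sigma^{m(n)}$, since such a codeword would be a prefix of $\Enc_n(x)$, which prefix-freeness with $x\in\sup(D_n)$ forbids. Hence $D_{\M_n}(x)\ge 2^{-\len(\Enc_n(x))}$. Taking $-\log$ and averaging over $D_n$ gives
\[
H(D_n,D_{\M_n})\le \Expect[\len(\Enc_n(x))]\le H(D_n)+\delta(n).
\]
By Fact \ref{fact:cross-kl}, $KL(D_n\,||\,D_{\M_n})\le\delta(n)$. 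By the identity in the proof of Claim \ref{clm:loss-kl}, the expected loss differs from this KL by a constant independent of the model, and every $\M^*$ has $KL\ge0$. Therefore $\M_n$ is $\delta(n)$-optimal.

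\textbf{Step 4: efficient computability (the main obstacle).} By weak monotonicity, $\{c:\Sample(c)\preceq a\}$ is an initial segment $\{0,\dots,k(a)-1\}$ of $\{0,1\}^\ell$ read as integers. We find $k(a)$ by binary search over $\ell$-bit integers, using $\ell$ calls to $\Sample$, each of which is polynomial. This gives the CDF $k(a)/2^\ell$ exactly as a rational with an $\ell$-bit denominator. Claim \ref{clm:efficient-cumulative} (2)$\to$(1) then yields the next-token probabilities as ratios of CDF differences over prefix intervals, for every prefix of positive probability. On zero-probability prefixes we output an arbitrary fixed distribution, which does not change $D_{\M_n}$. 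The hardest part is getting weak monotonicity exactly right in Step 2, including the interaction between prefix codewords and the lexicographic order; everything else is routine.
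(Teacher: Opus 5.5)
Your construction is the paper's: the same sampler, the same pointwise bound $D_{\M_n}(x)\ge 2^{-\len(\Enc_n(x))}$, and the same CDF-by-binary-search fed into Claim~\ref{clm:efficient-cumulative}. Your strict $<$ in the fallback agrees with the paper's $\le$, because in that case $c$ is not itself a codeword. Steps 3 and 4 are fine.

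\textbf{Gap 1: weak monotonicity (Step 2).} You flag this as the crux but do not prove it, and what you offer toward it is wrong.

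Your assertion that no $y\succ\Sample(c)$ has $\Enc_n(y)$ as a prefix of $c$ fails when $\Sample(c)$ comes from prefix decoding. Prefix-freeness here only protects codewords of \emph{support} elements. So a non-support $y$ may have $\Enc_n(y)$ extending $\Enc_n(x)$ while $\Enc_n(y)$ is still a prefix of $c$.

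Your fallback ``tweak'' $\Sample(c)=\max_\prec\{x:\Enc_n(x)\le c\}$ does not agree with prefix decoding, and it destroys Step 3. Take $m=1$, $\Sigma=\{0,1\}$, $D$ the point mass on $0$, $\Enc(0)=0$, $\Enc(1)=01$; this encoding is monotone and prefix-free in the paper's sense. The max rule sends every $c$ beginning with $01$ to $1$. Hence $D_\M(0)=1/4<2^{-\len(\Enc(0))}$, and the cross-entropy is $2$ while $\Expect[\len(\Enc(x))]=1$.

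No tweak is needed: the original rule is already weakly monotone, by the case check behind Claim~\ref{clm:monotone-sampler}. Fix $c_1<c_2$.
\begin{itemize}
\item Both have codeword prefixes, with shortest ones $\Enc(z_1)$ and $\Enc(z_2)$. If $\Enc(z_1)>\Enc(z_2)$, then either the two codewords diverge, which forces $c_1>c_2$, or $\Enc(z_2)$ is a proper prefix of $\Enc(z_1)$, which makes it a shorter codeword prefix of $c_1$. Both are contradictions.
\item Neither has a codeword prefix. The sets $\{x:\Enc(x)\le c_i\}$ are nested, so the maxima are ordered.
\item Only $c_1$ has one. Then $\Enc(z_1)\le c_1<c_2$ makes $z_1$ a candidate for $c_2$.
\item Only $c_2$ has one, and suppose $\Sample(c_1)\succ z_2$. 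Then $\Enc(z_2)<\Enc(\Sample(c_1))\le c_1<c_2$. Every string strictly between $p$ and an extension of $p$ must itself extend $p$. So $c_1$ would have the codeword prefix $\Enc(z_2)$, a contradiction.
\end{itemize}

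\textbf{Gap 2: the length bound (Step 1).} Taking $\ell(n)=p(n)$ is an extra assumption, not a legitimate choice. In the statement, $\M^{(\cdot)}$ is fixed before $\Enc$ and receives only $1^n,t_i,t_{<i}$ plus oracle access. It therefore cannot know the running-time polynomial $p$. Yet Step 3 needs $\ell\ge\len(\Enc_n(x))$ for every $x\in\sup(D_n)$. Contrast Theorem~\ref{thm:llm-compression-arithmetic}, where $\ell(n)$ is an explicit input.

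The paper also starts from this simplifying assumption, but then removes it in two moves:
\begin{itemize}
\item $\Sample$ reads an unbounded seed lazily, extending $c$ to length $\len(\Enc(y))$ whenever it compares $c$ with $\Enc(y)$. This leaves the output unchanged, because any codeword prefix of $c$ is a prefix of $\Enc(x)$ for the largest $x$ with $\Enc(x)\le c$.
\item The threshold seed for the CDF is found bit by bit, comparing $\Sample(c_{<i}\|10^*)$ with $y$. The search stops once $\Sample$ halts without requesting new bits, and one then sets $\ell=|c|$.
\end{itemize}
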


\begin{proof}
(For ease of notation, we omit the parameter $n$ and demonstrate the following for any $n \in \Nat$.)
We wish to define a model $\M^\Enc$ that induces a distribution $D_\M$ with a corresponding efficient random sampler $\Sample : \Bit^{*} \to \Sigma^{m(n)}$ to be defined subsequently (that is, $D_\M = \braces{c \leftarrow \Bit^{*}: \Sample(c)}$). We will start by making the simplifying assumption that 
we know a bound $\ell(\cdot)$ such that $\len(\Enc_{n}(x)) \leq \ell(n)$ for all $x \in \Sigma^{m(n)}$ (note that such a polynomial $\ell$ exists since $\Enc$ is efficient). $\Sample$, given a string $c \in \Bit^{\ell(n)}$, does as follows:
\begin{itemize}
    \item Find the largest $x \in \Sigma^{m(n)}$ with $\Enc(x)\le c$. (findable by binary search over $\Sigma^{m(n)}$).\footnote{If no such $x$ exists, output the smallest element of $\Sigma^{m(n)}$.}
    \item Find the smallest $i$ such that $\Dec(c_{\leq i}) \neq \bot$ (where, leveraging monotonicity of $\Enc$, we define $\Dec$ per Fact \ref{fact:decoding}).
    \item If there is no such $i$, output $x$. Otherwise output $\Dec(c_{\leq i})$.
\end{itemize}
By Fact \ref{fact:decoding} and monotonicity of $\Enc$, $\Dec$ is efficiently computable, and thus $\Sample$ is as well. 
Furthermore, we make the following claim that $\Sample$ is \emph{weakly} monotone (which nonetheless is sufficient for our proof): 

\begin{claim*}
\label{clm:monotone-sampler}
     $\Sample$ is weakly monotone: for any $c_1, c_2 \in \Bit^{\ell(n)}$, if $c_1 < c_2$, then $\Sample(c_1) \preceq \Sample(c_2)$.
\end{claim*}
\begin{proof}
    This follows from the fact that, since $\Enc$ is monotonic and $\Dec$ inverts $\Enc$ on its domain, $\Dec$ must also be monotonic. If both $c_1$ and $c_2$ have prefixes $(c_1)_{\leq i_1}$ and $(c_2)_{\leq i_2}$ with valid decodings, then $(c_1)_{\leq i_1} \leq(c_2)_{\leq i_2}$, and the same is true of the respective decodings by monotonicity. In the case where no prefix of $c_1$ has a valid decoding, $\Sample$ returns the last valid input whose encoding is smaller than $c_1$, which by monotonicity cannot be larger than $\Dec(c_2)$ (or, in the case that $c_2$ also has no prefix with a valid decoding, the last valid input whose encoding is smaller than $c_2$) without contradicting the assumption that $c_1 < c_2$. Finally, if no prefix of $c_2$ has a valid decoding but there is
such a prefix for $c_1$, then
$
\Enc(\Sample(c_1)) \le c_1 < c_2$.
Thus $\Sample(c_1)$ is a candidate in the search for the
largest $x$ satisfying $\Enc(x) \le c_2$. Since
$\Sample(c_2)$ returns that largest candidate, we obtain
$
\Sample(c_1) \preceq \Sample(c_2)$.

\end{proof}

We first need to argue that $\KL{D}{D_\M} \leq \delta$; towards that, we first observe that, for any $x \in \sup(D)$, $D_\M(x) \geq 2^{-\len(\Enc(x))}$. This is true because, whenever $c$ is drawn such that it has $\Enc(x)$ as a prefix (which happens with precisely probability $2^{-\len(\Enc(x))}$), that prefix is shorter than the maximum length $\ell(n)$ by assumption, and it is also the shortest correctly decodable prefix by prefix-freeness of $\Enc$, and thus $x$ will be sampled. From here, Fact \ref{fact:cross-kl} implies:
\[ \KL{D}{D_\M} = H(D, D_\M) - H(D) \]
But for any $x \in \sup(D)$, since $D_\M(x) \geq 2^{-\len(\Enc(x))}$:
\[ -\log D_\M(x) \leq \len(\Enc(x)) \]
and thus, by our original assumption about the compression loss of $\Enc$:
\[ - \Expect_{x \leftarrow D}[\log D_\M(x)] \leq \Expect_{x \leftarrow D}[\len(\Enc(x))] \leq H(D) + \delta \]
But $- \Expect_{x \leftarrow D}[\log D_\M(x)]$ is exactly the cross-entropy $H(D, D_\M)$, so:
\[ \KL{D}{D_\M} \leq (H(D) + \delta) - H(D) = \delta \]
as desired.

Thus far, we have given a description of the induced distribution $D_\M$ in terms of the random sampler $\Sample$, but not constructed the actual model $\M^\Enc$ nor shown that it is efficiently computable. Fortunately, Claim \ref{clm:efficient-cumulative} provides both of these as long as we can efficiently compute $\Pr_{x \leftarrow D_\M} [x \preceq y]$ for any $y \in \Sigma^{m(n)}$.\footnote{While Claim \ref{clm:efficient-cumulative} technically only provides efficient computability on prefixes with non-zero probability of occurrence, we note that we can trivially assign an arbitrary next-token distribution to prefixes with zero probability without changing $D_\M$.} However, we claim that, using the efficiency of $\Sample$ and Claim \ref{clm:monotone-sampler}, we can do precisely this:

\begin{claim*}
\label{clm:cdf}
    For any $y \in \Sigma^{m(n)}$, $\Pr_{x \leftarrow D_\M} [x \preceq y]$ can be efficiently computed.
\end{claim*}

\begin{proof}
    If we can efficiently determine the largest $c \in \Bit^{\ell(n)}$ such that $\Sample(c) \leq y$ (taking $c = 0^{\ell(n)}$ if there is no such $c$), then we can immediately determine that (taking $c$ as an integer) $\Pr_{x \leftarrow D_\M} [x \preceq y] = (c+1)/2^{\ell(n)}$, as sampling from $D_\M$ is by definition equivalent to sampling $\Sample(c)$ for uniformly random $c \in \Bit^{\ell(n)}$. And weak monotonicity from Claim \ref{clm:monotone-sampler} is sufficient to show that we can find such a $c$ using binary search over $\Bit^{\ell(n)}$, which must run in time polynomial in $n$ since it requires at most $\ell(n)$ iterations (which we know is polynomial) and each iteration requires computing $\Sample$ and comparing elements of $\Sigma^{m(n)}$, both of which are efficient.
\end{proof}
Thus, by Claim \ref{clm:efficient-cumulative}, we have constructed an efficiently computable autoregressive model $\M$ with induced distribution $D_\M$ that, as we have argued, suffices to complete the argument given an a priori bound of $\ell$ on the length of $\Enc(x)$.

\paragraph{Removing $\ell$.} Next, we explain how to remove the assumption that $\ell$ is known. First, we modify $\Sample$ to, rather than sampling $c \leftarrow \Bit^{\ell(n)}$, instead lazily sample $c \leftarrow \Bit^*$: specifically, during the binary search, each time $\Sample$ compares $c$ to some output $\Enc(y)$ for $y \in \Sigma^m$, if the length of the sampled $c$ is shorter than $\len(\Enc(y))$, $\Sample$ draws new uniformly random bits of $c$ up to length $|\Enc(y)|$. We argue that the output of $\Sample$ is identical. Letting $x$ be the largest $x \in \Sigma^{m}$ such that $\Enc(x) \le c$, we notice that number of bits of $c$ sampled by $\Sample$ is at least $|\Enc(x)|$ by construction. Moreover, if (the $\ell$-bit long)  $c$ has a prefix $c_{\le i}$ which is a valid codeword, then by monotonicity $c_{\le i}$ is a prefix of $\Enc(x)$ (otherwise, $\Enc(x) < c_{\leq i} \leq c$, contradicting $x$ being the largest $x \in \Sigma^{m}$ such that $\Enc(x) \le c$). Thus, in this case, $\Sample$ must return $\Dec(c_{\leq i})$ as before; if there is no such prefix, then $\Sample$ by construction finds and returns $x$, as desired.

Similarly, when computing the CDF in Claim \ref{clm:cdf}, finding the largest $c$ such that $\Sample(c)\le y$ is done by binary search where the length of the string $c$ determined by the number of random bits $\Sample$ reads during its execution on $c$. In the $i^\text{th}$ step of the binary search, we know the prefix $c_{<i}$, and we learn the $i^\text{th}$ bit by comparing the output of $\Sample(c_{<i}||10^*)$ to $y$ (if $y$ is smaller, the next bit is $0$, where if $y$ is larger or equal the next bit is $1$). We terminate the search when, on the current prefix of $c$, the execution of $\Sample$ halts on the prefix without drawing any additional random bits. After $|c|$ steps, we learn the entire string $c$, and can then compute the CDF $\Pr_{x \leftarrow D_\M} [x \preceq y]$ as before by setting $\ell=|c|$.
\end{proof}

As a final note, the random sampler $\Sample$ we constructed gives the following corollary, which intuitively states that, given $\delta$-optimal compression $\Enc$ for a distribution, it can be efficiently and $\delta$-approximately sampled:

\begin{corollary}
\label{cor:compression-llm}
Given a polynomial-sized alphabet family $\Sigma = \braces{\Sigma_n}_{n \in \Nat}$ and polynomial maximum length $m(\cdot)$, there exists an oracle-aided sampling algorithm $\Sample^{(\cdot)}$ such that, for any family of distributions $\D = \braces{D_n}_{n \in \Nat}$, function $\delta(\cdot)$, and prefix-free, monotone, and efficiently encodable family of encodings $\Enc = \braces{\Enc_n}_{n \in \Nat}$ where, for any $n \in \Nat$,
\[ \Expect_{x \leftarrow D_n} [\len(\Enc_{n}(x))] \leq H(D_n) + \delta(n) \]
$\Sample^{(\cdot)}$ defines an efficiently computable family of randomized algorithms $\braces{\Sample_n}_{n \in \Nat}$ such that there exists a family of distributions $\braces{D^*_n}_{n \in \Nat}$ where, for any $n \in \Nat$, $\Sample_n$ samples from $D^*_n$ and $KL(D_n \; || \; D^*_n) \leq \delta(n)$.
\end{corollary}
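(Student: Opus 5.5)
This corollary should follow almost directly from the construction in the proof of Theorem \ref{thm:compression-llm}. I will define $\Sample^{(\cdot)}$ to be the sampler built there, in its final form without an a priori bound $\ell$. Given oracle access to $\Enc_n$ and the ordering $\prec_n$, it lazily draws uniformly random bits $c$ and binary-searches over $\Sigma_n^{m(n)}$ for the largest $x$ with $\Enc_n(x)\le c$. It then finds the shortest prefix of $c$ that decodes under $\Dec$ (from Fact \ref{fact:decoding}) and outputs that decoding, or outputs $x$ if no prefix decodes. I will set $D^*_n$ to be the output distribution of this randomized procedure, which is the induced distribution $D_{\M_n}$ from the theorem.

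Efficiency comes next. Each run performs $O(m(n)\log|\Sigma_n|)$ oracle calls per binary search. It performs at most polynomially many decoding attempts, since the number of random bits read is bounded by the longest codeword compared against, and this is at most the polynomial running time of $\Enc_n$. Each call to $\Dec$ is efficient by Fact \ref{fact:decoding}. So $\Sample_n$ runs in time polynomial in $n$ given the oracle, and is efficient whenever $\Enc$ is efficiently encodable.

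For the KL bound I will reuse the argument in the proof of Theorem \ref{thm:compression-llm}. For each $x\in\sup(D_n)$, the event that the random string begins with $\Enc_n(x)$ has probability $2^{-\len(\Enc_n(x))}$. On this event prefix-freeness ensures that no shorter prefix decodes, and by the argument in ``Removing $\ell$'' the lazy sampler outputs exactly $x$. Hence $D^*_n(x)\ge 2^{-\len(\Enc_n(x))}$. Taking $-\log$ and averaging over $D_n$ gives $H(D_n,D^*_n)\le \Expect_{x\leftarrow D_n}[\len(\Enc_n(x))]\le H(D_n)+\delta(n)$. Fact \ref{fact:cross-kl} then yields $KL(D_n\,||\,D^*_n)\le\delta(n)$.

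The only point needing care is the lazy-sampling equivalence. The lazily sampled procedure must induce the same distribution as sampling a long fixed-length $c$, and the event ``$c$ begins with $\Enc_n(x)$'' must still have probability exactly $2^{-\len(\Enc_n(x))}$. The theorem's proof already argues this: the search reads at least $\len(\Enc_n(x))$ bits, and every decodable prefix of $c$ is a prefix of the found codeword. So I expect to cite it rather than reprove it. Unlike the theorem, this statement needs neither weak monotonicity (Claim \ref{clm:monotone-sampler}) nor CDF computation, since we only sample and never compute next-token probabilities.
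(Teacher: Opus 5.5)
Your proposal is correct and takes the paper's approach: the paper derives this corollary directly from the sampler $\Sample$ built in the proof of Theorem~\ref{thm:compression-llm}, including the lazy sampling that removes the bound $\ell$. As you do, it takes $D^*_n$ to be that sampler's output distribution, uses $D^*_n(x)\ge 2^{-\len(\Enc_n(x))}$ and Fact~\ref{fact:cross-kl} for the KL bound, and needs neither weak monotonicity nor the CDF computation.
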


\section{One-Way Functions and the Monotonicity Requirement}

Given that our results in the previous section show equivalence between next-token prediction and specifically \emph{monotone} compression, the question arises of whether one can show a similar equivalence for non-monotone compression, or more specifically whether there are certain conditions under which one can (or cannot) show such an equivalence. In this section, we demonstrate such conditions based on a standard cryptographic assumption: the existence of one-way functions. Whereas Theorem \ref{thm:compression-llm} shows that monotone compression can be used to construct an autoregressive model with the same loss, we show in Section \ref{sec:owf-existence} that, given the existence of one-way functions, there exists a family of distributions with non-monotone compression where no efficiently computable autoregressive model can even achieve a cross-entropy loss within an arbitrary constant of the compression loss; the transformation necessarily incurs a loss linear in the input length $n$. Thus, if one-way functions exist, there is a distinct separation between the results that can be achieved with monotone compression compared to non-monotone compression.

Conversely, we show in Section \ref{sec:owf-nonexistence} that, if one-way functions do not exist, classic results on next-bit prediction \cite{il89} can be leveraged to show an analogue of Theorem \ref{thm:compression-llm} without the requirement for monotonicity. At a high level, if one-way functions do not exist, \cite{il89} shows that any (binary) distribution must admit an approximately accurate next-bit predictor. The transformation is not a perfect equivalence, as it creates a small amount of additional \emph{statistical distance} loss in contrast to Theorem \ref{thm:compression-llm} (where there is only cross-entropy loss equal to the compression loss); nonetheless, it still constitutes a significant difference compared to what is attainable if one-way functions exist.

\subsection{Preliminaries}

In order to formally state our results, we first precisely define one-way functions.

\begin{definition}
A family of functions $\F = \braces{f_n}_{n \in \Nat}$, where $f_n : \Bit^n \rightarrow \Bit^n$, is \textbf{one-way} if $\F$ is efficiently computable but, for any non-uniform polynomial-time inverter $\A$, there is a negligible function $\epsilon(\cdot)$ such that for any $n \in \Nat$:
\[ \pr{x \leftarrow \Bit^n : f(\A(1^n, f(x))) = f(x)} \leq \epsilon(n) \]

If the above instead holds for only \emph{infinitely many} $n \in \Nat$ for any non-uniform polynomial-time inverter $\A$, we say that $\F$ is \textbf{infinitely-often one-way}.
\end{definition}

That is, one-way functions are efficiently computable, but ``hard to invert" in the sense that no computationally efficient algorithm, given the image of a uniformly selected input, can find any inverse of that image with probability significantly better than brute force. The existence of one-way functions is a long-standing open conjecture as it would imply $\mathsf{P} \neq \mathsf{NP}$, but they are nonetheless widely used as a fundamental building block of modern cryptography. The existence of one-way functions is also well-known to imply the existence of \emph{pseudorandom generators}:

\begin{definition}
For some output length $\ell(n) > n$, a family of efficiently computable functions $\G = \braces{g_n}_{n \in \Nat}$ is referred to as a family of \emph{pseudorandom generators} (PRGs) if, for any polynomial-time distinguisher $D$ with output in $\Bit$, there exists a negligible function $\epsilon(\cdot)$ such that for any $n \in \Nat$:
\[ \left| \pr{x \leftarrow U_n : D(1^n, g_n(x)) = 1} - \pr{y \leftarrow U_{\ell(n)} : D(1^n, y) = 1} \right| \leq \epsilon(n) \]
where $U_k$ represents the uniform distribution over $\Bit^k$.
\end{definition}

That is, the output of a PRG on a uniform input should be computationally indistinguishable from uniform randomness.

\subsection{One-Way Functions Imply Separation}
\label{sec:owf-existence}

We can, in fact, use PRGs to show a separation between monotone and non-monotone compression: in contrast to Theorem \ref{thm:compression-llm}, which shows that any \emph{monotone} compression scheme for a distribution can be transformed into a next-bit predictor for that distribution whose negative log-likelihood is identical to the compression loss, we can show that, assuming the existence of infinitely-often one-way functions (and thus PRGs), there is a family of distributions $\D$ and a \emph{non-monotone} compression scheme that is \emph{optimal} for $\D$ but is such that no efficiently computable autoregressive model exists that can predict $\D$ with even an arbitrary constant negative log-likelihood:

\begin{theorem}
\label{thm:owf-separation}
Assuming the existence of one-way functions (respectively, infinitely-often one-way functions), there exists a family of efficiently samplable distributions $\D = \braces{D_n}_{n \in \Nat}$ and a prefix-free family of encodings $\Enc = \braces{\Enc_n}_{n \in \Nat}$ with efficient encoding and decoding such that $\Expect_{x \leftarrow D_n} [\len(\Enc_{n}(x))] = H(D_n)$ for any $n \in \Nat$, but, given any efficiently computable family of autoregressive models $\braces{\M_n}_{n \in \Nat}$, for any constant $\epsilon > 0$ and all sufficiently large (resp., infinitely many) $n \in \Nat$, the respective output distribution $D_{\M_n}$ is such that $KL(D_n \; || \; D_{\M_n}) \geq (1 - \epsilon)n$.\footnote{This implies that, for any constant $\delta$, $\M$ cannot be $\delta$-optimal with respect to $\D$.}
\end{theorem}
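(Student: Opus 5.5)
We take $\D$ to be a pseudorandom string followed by its own seed. The non-monotone encoder then compresses by reading the seed off the end of the string. A left-to-right predictor, by contrast, must predict the pseudorandom part before it ever sees the seed.

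\textbf{Construction.} From (infinitely-often) one-way functions we obtain (infinitely-often) PRGs. By standard stretching we take $g_n:\Bit^n\to\Bit^{2n}$. Define $D_n$ as the distribution of $x=(g_n(s),s)\in\Bit^{3n}$ for $s\leftarrow U_n$. It is efficiently samplable.

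Set $\Enc_n(x)=x_{>2n}$, the last $n$ bits, for every $x\in\Bit^{3n}$. On $\sup(D_n)$ this is injective, since $s$ determines $x$. Decoding is $\Dec_n(s)=(g_n(s),s)$, which is efficient. All codewords have length exactly $n$, so the encoding is prefix-free. Strictly speaking, the definition of prefix-freeness also quantifies over $x\notin\sup(D_n)$ whose codeword equals some $\Enc_n(y)$. To avoid this, we instead let $\Enc_n(x)=x_{>2n}$ only when $x_{\le 2n}=g_n(x_{>2n})$, and otherwise $\Enc_n(x)=x_{>2n}\|1\|x$. This longer codeword is never a prefix of a length-$n$ codeword, and the check is efficient.

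Since $x$ is a function of the uniform seed, $H(D_n)=n=\Expect[\len(\Enc_n(x))]$, provided $g_n$ is injective. If it is not, we precompose with an injectivity fix: use $(g_n(s),s)$, which is injective in $s$ anyway, so $H(D_n)=n$ holds regardless.

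\textbf{Lower bound.} Fix an efficient model family $\M$ and write $q(y)=\Pr_{x\leftarrow D_{\M_n}}[x_{\le 2n}=y]$ for the induced prefix marginal. Since $D_{\M_n}(x)\le q(x_{\le 2n})$, we get
\[ KL(D_n\,||\,D_{\M_n})=H(D_n,D_{\M_n})-n\ \ge\ \Expect_s[-\log q(g_n(s))]-n. \]
It therefore suffices to show $\Expect_s[-\log q(g_n(s))]\ge(2-\epsilon)n$ for all large (resp. infinitely many) $n$.

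Suppose instead that this expectation is below $(2-\epsilon)n$ for infinitely many $n$ (resp. for all large $n$ in the io case). Consider the distinguisher $A$ that, on input $y\in\Bit^{2n}$, computes
\[ -\log q(y)=-\sum_{i\le 2n}\log\M_n(y_i\mid y_{<i}), \]
which is efficient because the next-token probabilities are. It accepts iff this quantity is below $(2-\epsilon/2)n$.

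On uniform $y$, the set of $y$ with $q(y)\ge 2^{-(2-\epsilon/2)n}$ has size at most $2^{(2-\epsilon/2)n}$. Hence $A$ accepts with probability at most $2^{-\epsilon n/2}$.

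On $y=g_n(s)$, the variable $-\log q$ is nonnegative, so Markov's inequality gives rejection probability at most $(2-\epsilon)/(2-\epsilon/2)$. Hence $A$ accepts with probability at least the constant $\tfrac{\epsilon/2}{2-\epsilon/2}$.

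This is a non-negligible distinguishing advantage on the relevant set of lengths, contradicting the security of the (io-)PRG.

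\textbf{Main obstacle.} The delicate point is matching the quantifiers. A standard PRG breaks if the advantage is non-negligible on infinitely many $n$, which yields the "all sufficiently large $n$" conclusion. An io-PRG only guarantees security on infinitely many lengths, so the lower bound holds only on those lengths. Each step must be checked to use the model and PRG at the same $n$.

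The other point to verify is that the Markov step really gives constant advantage without any upper bound on the loss. It does, because only nonnegativity of $-\log q$ is used: infinite loss, i.e. $q=0$, simply counts as rejection.
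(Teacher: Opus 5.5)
Your proof is correct and takes essentially the same route as the paper. It uses the same distribution $(g_n(s),s)$ with the seed-reading encoder, extended off-support by longer codewords, and the same distinguisher that thresholds the model's probability of the first $2n$ bits, with Markov's inequality on the pseudorandom side.

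The differences are cosmetic:
\begin{itemize}
\item You threshold at $2^{-(2-\epsilon/2)n}$, which makes uniform acceptance exponentially small. The paper thresholds at $2^{c-2n}$ for a constant $c>\log(2/\epsilon)$.
\item You bound the cross-entropy directly via $D_{\M_n}(x)\le q(x_{\le 2n})$, whereas the paper uses data processing.
\item Your aside about injectivity of $g_n$ is unnecessary, since $s\mapsto(g_n(s),s)$ is always injective.
\end{itemize}
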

Below we prove the theorem assuming the existence of one-way functions. We note that the proof holds nearly identically for the case of infinitely-often OWFs, with the exception that it applies only to infinitely many $n \in \Nat$ rather than all sufficiently large $n$; we note the distinctions during the proof.
\begin{proof}
As the existence of one-way functions implies the existence of PRGs (respectively, infinitely-often OWFs imply \emph{infinitely-often PRGs}, or PRGs where the condition holds for only infinitely many $n \in \Nat$), let $\G = \braces{g_n}_{n \in \Nat}$ be a PRG with output length $\ell(n) = 2n$. Take our alphabet of tokens to be $\Sigma = \Bit$. Let $\D$ be given by $D_n = \braces{x \leftarrow U_n : (g_n(x), x)}$. Let $\Enc$ be given by $\Enc_n(g_n(x), x) = x$.\footnote{That is, simply take the last $n$ bits of the input. This can be efficiently decoded as well by simply computing $g_n(x)$; one can extend it to the full domain of arbitrary $3n$-bit inputs by letting $\Enc_n(y, x) = y || x$ if $y \neq g_n(x)$.} Clearly, $\Enc$ satisfies prefix-freeness (since every codeword has the same length), efficient encoding and decoding, and has $\Expect_{x \leftarrow D_n} [\len(\Enc_{n}(x))] = H(D_n) = n$.

Now fix $\epsilon > 0$ and assume for the sake of contradiction that there exists a family of autoregressive models $\braces{\M_n}_{n \in \Nat}$ such that $KL(D_n \; || \; D_{\M_n}) < (1 - \epsilon)n$ for infinitely many (resp., all sufficiently large) $n \in \Nat$. We use this to construct a distinguisher $\A$ 
as follows: Take constant $c > \log (2/\epsilon)$, and, given an input $y$ (which will either be drawn as $U_{2n}$ or as $g_n(U_n)$), define $D_{\M_n}(y, *)$
as the probability for the first $2n$ bits of an output from $\M_n$ to be equal to $y$, and let $\A$ output $1$ when $D_{\M_n}(y, *) \geq 2^{c-2n}$ and $0$ otherwise. We make the following claims to complete the proof by contradiction:

\begin{claim*}
$\pr{y \leftarrow U_{2n} : \A(y) = 1} \leq 2^{-c}$.
\end{claim*}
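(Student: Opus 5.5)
This is a Markov/counting argument. It uses only that the marginal $D_{\M_n}(\cdot, *)$ on the first $2n$ bits is a probability distribution over $\Bit^{2n}$. The efficiency of $\M_n$ and the properties of $g_n$ play no role here; those matter only in the companion claim about $y \leftarrow g_n(U_n)$.

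First, I would make the marginal precise. For $y \in \Bit^{2n}$, set
\[ D_{\M_n}(y,*) = \sum_{z \in \Bit^{n}} D_{\M_n}(y\|z) = \Pr_{w \leftarrow D_{\M_n}}[w_{\leq 2n} = y], \]
so that $\sum_{y \in \Bit^{2n}} D_{\M_n}(y,*) = 1$. Then consider the acceptance set $S = \braces{y \in \Bit^{2n} : D_{\M_n}(y,*) \geq 2^{c-2n}}$, which is exactly the set of $y$ on which $\A$ outputs $1$.

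Next, I would bound $|S|$ by counting mass:
\[ 1 \geq \sum_{y \in S} D_{\M_n}(y,*) \geq |S| \cdot 2^{c-2n}, \]
which gives $|S| \leq 2^{2n-c}$. Since $y \leftarrow U_{2n}$ puts mass $2^{-2n}$ on each string, it follows that
\[ \pr{y \leftarrow U_{2n} : \A(y)=1} = |S|\cdot 2^{-2n} \leq 2^{-c}. \]

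There is no real obstacle. The one point to handle carefully is the definition of $D_{\M_n}(y,*)$ as a marginal: it must sum to at most $1$ over $\Bit^{2n}$, and it does, because the events $w_{\leq 2n} = y$ for distinct $y$ are disjoint. The computability of $\A$ (via Claim \ref{clm:efficient-cumulative}, or by multiplying next-token probabilities) is not needed for this bound.
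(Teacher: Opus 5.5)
Your proof is correct and is essentially the paper's argument. The paper observes that $\Expect_{y \leftarrow U_{2n}}[D_{\M_n}(y,*)] = 2^{-2n}$ and applies Markov's inequality at threshold $2^{c-2n}$; your bound $|S|\cdot 2^{c-2n} \leq 1$ is the same step, written as a count instead of an expectation.
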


\begin{proof}
$\Expect_{y \leftarrow U_{2n}}[D_{\M_n}(y, *)] = 2^{-2n}$, so this follows from Markov's inequality.
\end{proof}

\begin{claim*}
For any constant $\epsilon' > 0$, $\pr{x \leftarrow U_{n} : \A(g_n(x)) = 1} > \epsilon/2 - \epsilon'$ for all sufficiently large $n \in \Nat$ where $KL(D_n \; || \; D_{\M_n}) < (1 - \epsilon)n$.
\end{claim*}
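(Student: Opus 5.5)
We need to show that when $KL(D_n \,\|\, D_{\M_n}) < (1-\epsilon)n$, with probability at least about $\epsilon/2$ over $x \leftarrow U_n$ we have $D_{\M_n}(g_n(x),*) \geq 2^{c-2n}$.

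The plan is to pass from the full joint distribution to the marginal on the first $2n$ bits, and then run a Markov-type argument on the log-likelihood ratio.

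\textbf{Step 1 (marginalize).} Let $P$ be the distribution of $g_n(U_n)$ and $Q$ the marginal of the first $2n$ bits of $D_{\M_n}$, so that $Q(y) = D_{\M_n}(y,*)$. The first $2n$ bits of $D_n$ are a deterministic function of the sample, so the data-processing inequality for KL gives
\[
KL(P \,\|\, Q) \leq KL(D_n \,\|\, D_{\M_n}) < (1-\epsilon)n.
\]
Since $P$ has support of size at most $2^n$, we also have $H(P) \leq n$. Therefore
\[
\Expect_{y \leftarrow P}[-\log Q(y)] = H(P) + KL(P \,\|\, Q) < (2-\epsilon)n.
\]

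\textbf{Step 2 (the bad event is unlikely).} Let $B$ be the event that $Q(y) < 2^{c-2n}$, i.e.\ $-\log Q(y) > 2n - c$. I will bound $P(B)$ by splitting $-\log Q$ into $-\log P + \log(P/Q)$.

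On the point-wise side, $-\log P(y) \geq 0$. Moreover, the mass of $P$ on points with $P(y) \leq 2^{-n-k}$ is at most $2^{n}\cdot 2^{-n-k} = 2^{-k}$.

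Rather than bounding $\log(P/Q)$ directly, a cleaner route is the following. Define $R(y) = Q(y)/P(y)$ on the support of $P$. Then $\Expect_P[R] \leq 1$, so for any $t$ we have
\[
\Pr_P\!\left[\log(P/Q) < -t\right] \leq 2^{-t}.
\]
That is, the negative part of the log-likelihood ratio is small with high probability. Set $Z = -\log Q(y)$. This is the variable whose mean is at most $(2-\epsilon)n$ and which we need to stay at most $2n-c$ with probability $\gtrsim \epsilon/2$.

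To obtain the lower tail, I also need an upper bound on $Z$ that holds with good probability. Specifically, I want $Z$ to be bounded above on most of the mass, while $Z$ is also essentially lower bounded by roughly $n - O(1)$. The lower bound comes from $-\log Q = -\log P + \log(P/Q)$ together with $-\log P \geq 0$ and the tail bound on $\log(P/Q)$ above.

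\textbf{Step 3 (the main obstacle).} The key difficulty is that $Z$ can be huge: $Q(y)$ can be tiny, and then $-\log Q$ is unbounded. So a Markov bound on $Z$ from above does not directly give "large with small probability." What we actually want is a reverse bound on $\Pr[Z \leq 2n - c]$, which needs $Z$ to be bounded above.

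I will handle this with a truncation. Replace $Q$ by the mixture $Q' = (1-2^{-n})Q + 2^{-n}\cdot U_{2n}$, which uses the uniform distribution on the $2n$-bit strings. Then $-\log Q' \leq 3n + 1$. At the same time, $KL(P \,\|\, Q') \leq KL(P \,\|\, Q) + O(2^{-n}\cdot\text{stuff})$, since mixing in a small amount of uniform mass only perturbs the KL negligibly.

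The distinguisher's threshold changes only negligibly under this replacement: whenever $Q'(y) \geq 2^{c-2n}$, we get $Q(y) \geq 2^{c-2n} - 2^{-3n}$. Adjusting $c$ by an additive $o(1)$ absorbs this difference.

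Now let $W = 3n + 1 - Z' \geq 0$, where $Z' = -\log Q'(y)$. Then $\Expect[W] > (1+\epsilon)n$. On the bad event, $W < n + c + 1$. On the good event, $W \leq 3n + 1$. Writing $p = \Pr[\text{good}]$, this gives
\[
(1+\epsilon)n < (n+c+1) + p(2n),
\]
so $p > \epsilon/2 - (c+1)/(2n)$. This is greater than $\epsilon/2 - \epsilon'$ for sufficiently large $n$, which is exactly the claim.

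With this truncation in place, Steps 1 and 2 are routine. The remaining care is in checking the negligible KL perturbation from the mixture, and in verifying that the threshold adjustment is consistent with how $\A$ is defined.
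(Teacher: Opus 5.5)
Your proof is correct, and once unrolled it is the paper's own argument. Both use data processing down to the first $2n$ bits, then the bound $\Expect_{x \leftarrow U_n}[-\log D_{\M_n}(g_n(x),*)] < (2-\epsilon)n$ via $H(g_n(U_n)) \le n$, then Markov's inequality.

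The obstacle you raise in Step 3 does not exist. $Z = -\log Q(y)$ is nonnegative, and the event that $\A$ outputs $0$ is its \emph{upper} tail $Z > 2n-c$. So Markov applies directly: $\Pr[Z > 2n-c] \le \Expect[Z]/(2n-c) < (2-\epsilon)n/(2n-c)$. Hence $\Pr[\A(g_n(x))=1] > (\epsilon n - c)/(2n-c) \to \epsilon/2$, no matter how large $Z$ can be. No upper bound on $Z$ is needed.

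Your truncation is in fact never used. The final inequality invokes only $W \le 3n+1$ (i.e.\ $Z' \ge 0$) and $W < n+c+1$ on the bad event; it never uses $W \ge 0$. Moreover, $\Expect[W] \le (1-p)(n+c+1)+p(3n+1)$ rearranges to exactly $\Expect[Z'] \ge (1-p)(2n-c)$, which is just Markov for $Z'$. So the mixture $Q'$, the $c \to c+o(1)$ threshold adjustment, and all of Step 2 can be dropped.

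Step 2 also asserts that $Z$ is at least roughly $n - O(1)$. This does not follow from $-\log P \ge 0$ and the tail bound on $\log(P/Q)$, which give only $Z \ge -t$ with probability $1-2^{-t}$. It can fail when $g_n$ is far from injective, but since it is unused, correctness is unaffected.
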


\begin{proof}
This follows from the assumption that $KL(D_n \; || \; D_{\M_n}) < (1 - \epsilon)n$, and thus, by data processing, that $KL((D_n)_{\leq 2n} \; || \; (D_{\M_n})_{\leq 2n}) < (1 - \epsilon)n$ (where $(D_n)_{\leq 2n}$ represents the distribution of the first $2n$ bits of $D_n$, and respectively for $D_{\M_n}$). Thus:
\[ KL((D_n)_{\leq 2n} \; || \; (D_{\M_n})_{\leq 2n}) = \Expect_{y \leftarrow D_n}[\log(1/D_{\M_n}(y, *))] - H((D_n)_{\leq 2n}) < (1 - \epsilon)n \]
and so
\[ \Expect_{x \leftarrow U_n}[\log(1/D_{\M_n}(g_n(x), *))] = \Expect_{y \leftarrow D_n}[\log(1/D_{\M_n}(y, *))] < (1 - \epsilon)n + H((D_n)_{\leq 2n}) \leq (2-\epsilon)n \]
since $H((D_n)_{\leq 2n}) \le H(D_n) = n$. Now, by construction of $\A$, $\A$ outputs $1$ whenever $\log(1/D_{\M_n}(y, *)) < 2n-c$; by Markov's inequality and the above bound on the expectation, the probability that this is \emph{not} true (i.e., that $\A$ outputs $0$) when $y$ is given by $g_n(x)$ for $x \leftarrow U_n$ can be at most $\frac{(2-\epsilon)n}{2n-c}$, so the probability that $\A$ outputs $1$, then, must be at least 
\[ 1 - \frac{(2-\epsilon)n}{2n-c} = \frac{\epsilon n - c}{2n - c}\]
which asymptotically approaches $\epsilon/2$ as $n \rightarrow \infty$, implying the conclusion.
\end{proof}

Given these, since we assumed by construction that $2^{-c}$ is a constant smaller than $\epsilon /2$, there is a non-negligible (constant) separation between the two probabilities for infinitely many (resp., any sufficiently large) $n$, which contradicts the assumption that $\G$ is a PRG (resp., infinitely-often PRG) and completes the proof. 
\end{proof}

\subsection{Nonexistence of One-Way Functions Implies (Approximate) Equivalence}
\label{sec:owf-nonexistence}

In contrast to the above, we show that, if infinitely-often one-way functions do not exist, then given any compression scheme for a distribution over a binary support, one can construct an autoregressive model (i.e., a next-bit predictor) for that distribution which is ``approximately" as good. This follows from a theorem proven in \cite{il89}, and in fact we observe that it does not require full non-existence of one-way functions, only that a \emph{specific} function relative to the given distribution is easy to invert. First, we must define what we mean by a model being ``approximately as good" in this case---unlike our previous results, this will involve a loss of statistical distance in addition to the inherent loss of negative log-likelihood resulting from the underlying compression scheme's loss:

\begin{definition}
\label{def:next-bit-predictor}
  For alphabet $\Sigma$ and distribution $D$ over $\Sigma^{m}$, we say that an autoregressive model $\M$ is \textbf{$\epsilon$-approximately $\delta$-optimal} with respect to $D$ if there exists a model $\M'$ that is $\delta$-optimal with respect to $D$ and, $\Delta(\D_\M,\D_{\M'})\le \epsilon$. 
\end{definition}

We can then leverage the following theorem from \cite{il89}:

\begin{theorem}[Infinitely-often one-way functions and next-bit predictors \cite{il89}]\label{thm:owf-inverter-to-predictor} Let $\D = \braces{D_n}_{n \in \Nat}$ be a family of efficiently samplable distributions where each $D_n$ distributed over $\Bit^{m(n)}$ for some polynomial $m(n)$. Then, for any function $\epsilon(n) \in 1/\poly(n)$, there exists a function $f$ such that, if $f$ is not an infinitely-often one-way function, then there exists an $\epsilon(n)$-approximate ($0$-optimal) efficiently computable autoregressive model for $\D$.
\end{theorem}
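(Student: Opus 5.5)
The plan is to reduce to the Impagliazzo–Luby theorem that inverting a function implies \emph{distributionally} inverting it, applied to a function that exposes prefixes of samples. Fix an efficient sampler $\Sample_n : \Bit^{q(n)} \to \Bit^{m(n)}$ for $D_n$. Define
\[ f_n(r, i) = \bigl(i, \Sample_n(r)_{<i}\bigr), \qquad r \in \Bit^{q(n)}, \; i \in [m(n)], \]
padded to a fixed length so that it fits the formal definition of a one-way function family. Suppose $f$ is not infinitely-often one-way. By \cite{il89}, for every polynomial $p$ there is then an efficient non-uniform algorithm $\mathsf{Inv}$ such that, for all sufficiently large $n$, the pair $(f(r,i), \mathsf{Inv}(f(r,i)))$ is within statistical distance $1/p(n)$ of $(f(r,i), (r,i))$ when $(r,i)$ is uniform. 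In words, on a uniformly random image $\mathsf{Inv}$ returns a nearly uniform preimage. A preimage of $(i, t_{<i})$ is exactly a seed $r'$ with $\Sample(r')_{<i} = t_{<i}$. So $\Sample(\mathsf{Inv}(i,t_{<i}))_i$ is approximately distributed as the true next bit of $D_n$ conditioned on the prefix $t_{<i}$.

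The model $\M_n$ is built from this. On prefix $t_{<i}$, run $\mathsf{Inv}(i, t_{<i})$ independently $k = \poly(n)$ times, check each returned seed (discarding any that is not a valid preimage), and set $\M_n(1 \mid t_{<i})$ to the empirical frequency of the bit $\Sample(r')_i$ among the valid seeds. If no valid seed appears, use an arbitrary default such as $1/2$. To make $\M_n$ a deterministic family of distributions, as the definition of an autoregressive model requires, fix the inverter's coins nonuniformly or derive them from a fixed advice string. An averaging argument over the coins shows that some fixed choice is at least as good as the expected error bound below. The comparison model $\M'$ is the true conditional model of $D_n$. Since $D_{\M'} = D_n$, it satisfies $KL(D_n \,||\, D_{\M'}) = 0$ and is therefore $0$-optimal.

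For the statistical distance bound, use hybrids $H_0, \dots, H_m$. In $H_j$, the first $j$ bits are drawn from $D_n$ and the remaining bits are generated by $\M_n$. Then $H_m = D_n$ and $H_0 = D_{\M_n}$. Hybrids $H_{j-1}$ and $H_j$ differ only in how bit $j$ is drawn, and the prefix in both is $D_n$-distributed. Hence
\[ \Delta(H_{j-1}, H_j) \le \Expect_{t \leftarrow D_n}\bigl[\Delta(\M_{n,t_{<j}}, D_{t_{<j}})\bigr]. \]
Summing over $j$ and using that $i$ is uniform in $f$'s input gives
\[ \sum_j \Expect_{t \leftarrow D_n}\bigl[\Delta(\M_{n,t_{<j}}, D_{t_{<j}})\bigr] = m \cdot \Expect_{(r,i)}\bigl[\Delta(\M_{n,t_{<i}}, D_{t_{<i}})\bigr], \]
where $t = \Sample(r)$. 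This expectation splits into two terms. The first is the inverter's distributional error, at most $1/p(n)$ (counting each of the $k$ calls, so at most $k/p(n)$ overall). The second is the Chernoff sampling error of the empirical frequency, which is $O(1/\sqrt{k})$ plus an exponentially small failure term. Choosing $k$ and then $p$ large enough polynomials makes the total at most $\epsilon(n)$. Efficiency is immediate, because each next-bit probability requires only $k$ calls to $\mathsf{Inv}$ and $\Sample$.

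I expect the main obstacle to be applying the distributional inversion guarantee correctly. The guarantee from \cite{il89} is only on average over $f(U)$, while the hybrid argument evaluates $\M_n$ on prefixes drawn from $D_n$ at a \emph{specific} position $j$. The reconciliation is that the image distribution of $f$, with $i$ uniform, is exactly the mixture over $j$ of $(j, D_n\text{-prefix})$. This costs only the factor $m$ above, which is why the inverter's error must be pushed to $\epsilon/(m k)$-scale.

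Two lesser points also need care. First, fixing the randomness nonuniformly must not break the average-case bound, which the averaging argument handles. Second, low-probability prefixes, where the empirical estimate may even be based on zero valid seeds, are harmless because they enter the bound with weight equal to their probability under $D_n$.
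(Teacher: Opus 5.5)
Your proof fails at the step where, assuming only that your prefix map $f(r,i)=(i,\Sample(r)_{<i})$ is not infinitely-often one-way, you invoke \cite{il89} to obtain, for every polynomial $p$, an inverter $\mathsf{Inv}$ that returns a $1/p$-nearly uniform preimage. Impagliazzo and Luby do not show that a function which is merely not one-way is \emph{distributionally} invertible. They show that from a distributionally one-way function one can build a \emph{different} function that is one-way: append a pairwise-independent hash $h$ of the seed and a prefix $h(r)_{\le j}$ of its value, then take a direct product to amplify weak inversion. An inverter for your $f$ only has to find \emph{some} preimage, and that preimage can tell you nothing about next-bit probabilities.

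Here is a concrete case. Let the sampler on seed $(b,s,u)$ output $(g(s),s)$ if $b=1$ and $u\in\Bit^{3n}$ if $b=0$, where $g$ is a length-doubling one-way function. Your $f$ is inverted with probability $1$ by returning a $b=0$ seed whose $u$ extends $t_{<i}$. Now suppose an efficiently computable $\M$ had $\Delta(D_{\M},D_n)\le\epsilon$. Sampling a continuation $v$ from $\M$ on the prefix $y=g(s)$ would give $g(v)=y$ with probability at least $1-4\epsilon-2^{-n}$, which inverts $g$. So whenever one-way functions exist, the implication you prove is false for your $f$. A symptom is that your $f$ does not depend on $\epsilon$, while the statement lets $f$ depend on it, and the paper's $f$ does.

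The repair is to take $f$ to be the IL89 function built from the prefix map, which is what the paper's remark records. Set $g^{\Sample}(r,h,i,j)=(i,j,h,\Sample(r)_{\le i},h(r)_{\le j})$, and let $f^{\Sample}$ be the $O(n/\epsilon)$-fold direct product of $g^{\Sample}$. Failure of io-one-wayness for \emph{that} $f$ is what gives, for all large $n$, one of two things. It gives either a $1/p$-distributional inverter for the prefix map, or, as the paper uses it, the predictor $\Pred$ whose conditional probabilities are $(1\pm\epsilon')$-multiplicatively accurate except with probability $\epsilon'$. From that point on, the rest of your argument is sound: the empirical-frequency estimator, the hybrid argument with its factor of $m$, and fixing the coins by averaging. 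Together they give an explicit version of the paper's one-line passage from $\Pred$ to an $O(\epsilon' m)$-approximate model.
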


\begin{proof}
\cite{il89} showed that for every family of efficiently samplable distributions $\D$ and for any $\epsilon'(n) \in 1/\poly(n)$, there exists a function $f$ such that, if $f$ is not an infinitely-often one-way function, then there exists a (non-uniform) efficient \emph{randomized} next bit-predictor $\Pred$ with \[\Pr_{t\leftarrow D_{n}}[\Pred(t_i \; | \; t_{<i})  \in D_n(t_i \; | \; t_{<i})(1-\epsilon', 1+\epsilon')]\ge 1-\epsilon'\]
for every sufficiently large $n \in \Nat$. We first notice that $\Pred$ can be made deterministic by fixing the randomness to the best possible string for each index $i$ as part of the non-uniform advice.

Next, we notice that $\Pred$ is already an $\epsilon=O(\epsilon'\cdot m(n))$-approximate model for $\D$, as the above promise on $\Pred$ implies that its output is close in statistical distance to the distribution $D_n$.
\end{proof}

\begin{remark}
In fact, Theorem \ref{thm:owf-inverter-to-predictor} can be made stronger: one can show the existence of such a model based on an inverter to a \emph{specific} function (as opposed to complete non-existence of infinitely-often one-way functions, which would imply the existence of an inverter for \emph{every} function). In particular, \cite{il89} showed that for any $\epsilon(n) \in 1/\poly(n)$, there exists a pair of efficient oracle-aided algorithms $(f^{(\cdot)},\Pred^{(\cdot)})$ such that, given a sampler $\Sample$ for an ensemble of samplable distributions $\D$, if there exists a polynomial-time algorithm $\A$ that, for all sufficiently large $n \in \Nat$, inverts $f^{\Sample}$ with probability $1/2$, then $\Pred^A$ is an $\epsilon$-approximate next-bit predictor for $\D$. (In particular, $f^{\Sample}(x_1,\dots,x_t)= g^{\Sample}(x_1),\dots,g^{\Sample}(x_t)$, where $g^{\Sample}(r,h,i,j)=(i,j,h,\Sample(r)_{\le i}, h(r)_{\le j})$, $h$ is a 2-universal hash function, and $t \in O(n/\epsilon)$.)
\end{remark}

With the above theorem, given a non-monotone compression scheme $\Enc$ which is $\delta$-optimal for some distribution family $\D$, we can leverage part of the construction from Theorem \ref{thm:compression-llm} even though $\Enc$ is not monotone. We prove the following theorem.

\begin{theorem}
\label{thm:owf-inverter-to-predictor-2}
If infinitely-often one-way functions do not exist, then the following is true for any $\epsilon(n) \in 1/\poly(n)$:

For every  family $\D = \braces{D_n}_{n \in \Nat}$ distributions  and a compression scheme $(\Enc,\Dec)$ with efficient encoding and decoding that is $\delta$-optimal for $\D$, there exists an efficiently computable autoregressive models $\braces{\M_n}_{n \in \Nat}$ where, for any $n \in \Nat$ the respective model $\M_n$ is $\epsilon(n)$-approximately $\delta(n)$-optimal with respect to $D_n$.
\end{theorem}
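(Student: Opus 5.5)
The idea is to reuse the sampler from the proof of Theorem \ref{thm:compression-llm} without relying on monotonicity, and then let Theorem \ref{thm:owf-inverter-to-predictor} supply an approximate next-bit predictor for that sampler's output distribution.

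\textbf{Step 1: a sampler that is $\delta$-close in KL.} Let $\ell(n)$ be a polynomial bound on $\len(\Enc_n(x))$, which exists because encoding is efficient. Define $\Sample'(c)$ for $c \in \Bit^{\ell(n)}$ as follows.
\begin{itemize}
\item Find the smallest $i$ with $\Dec(c_{\le i}) \neq \bot$ and output $\Dec(c_{\le i})$.
\item If no such $i$ exists, output a fixed default element of $\Sigma^{m(n)}$.
\end{itemize}
This needs only the efficient decoder, so no binary search and no monotonicity are required. $\Sample'$ uses $\ell(n)$ uniform bits and runs $\ell(n)$ decoder calls, so the distribution $D^*_n$ it induces is efficiently samplable.

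The argument from Theorem \ref{thm:compression-llm} carries over verbatim. Fix $x \in \sup(D_n)$. Whenever $c$ begins with $\Enc(x)$, prefix-freeness ensures no shorter prefix of $c$ decodes, so $\Sample'$ outputs $x$. Hence $D^*_n(x) \ge 2^{-\len(\Enc(x))}$. Using Fact \ref{fact:cross-kl} as before,
\[ KL(D_n \;||\; D^*_n) \le \Expect_{x\leftarrow D_n}[\len(\Enc_n(x))] - H(D_n) \le \delta(n). \]
Here I read the statement's binary setting as $\Sigma=\Bit$, matching Theorem \ref{thm:owf-inverter-to-predictor}. A polynomial-size alphabet can be reduced to it by a fixed binary encoding of tokens.

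\textbf{Step 2: the exact autoregressive model of $D^*_n$ is $\delta$-optimal.} Let $\M'_n$ be the (possibly inefficient) model with next-bit probabilities $\M'_n(t_i\mid t_{<i}) = D^*_n(t_i\mid t_{<i})$. On zero-probability prefixes, assign arbitrary distributions. Then $D_{\M'_n} = D^*_n$. By Claim \ref{clm:loss-kl}, minimizing expected loss is equivalent, up to a constant, to minimizing $KL(D_n\,||\,D_\M)$, and the optimum achieves $0$. So $KL(D_n\,||\,D^*_n)\le\delta(n)$ makes $\M'_n$ $\delta(n)$-optimal.

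\textbf{Step 3: an efficient model close to $\M'_n$.} Apply Theorem \ref{thm:owf-inverter-to-predictor} to the efficiently samplable family $\{D^*_n\}$ with the given $\epsilon(n)$. The theorem provides a function $f$ that cannot be infinitely-often one-way, since no such functions exist by hypothesis. It therefore yields an efficiently computable model $\M_n$ that is $\epsilon(n)$-approximate, $0$-optimal, for $D^*_n$.

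To unpack this, recall how the proof of Theorem \ref{thm:owf-inverter-to-predictor} concludes: the predictor's induced distribution satisfies $\Delta(D_{\M_n}, D^*_n)\le \epsilon(n)$, after rescaling the target accuracy $\epsilon'$ by $m(n)$. Since $D^*_n = D_{\M'_n}$ and $\M'_n$ is $\delta(n)$-optimal for $D_n$, Definition \ref{def:next-bit-predictor} is satisfied with witness $\M'_n$.

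\textbf{Main obstacle.} The hardest part is Step 3, specifically matching the quantifiers of Theorem \ref{thm:owf-inverter-to-predictor}. Its guarantee holds only for sufficiently large $n$, whereas the statement asks for every $n$; for the finitely many small $n$ I would hardwire an exact model into the non-uniform advice. I also need to be sure the guarantee is statistical closeness to the target distribution itself ($D^*_n$), rather than a KL bound.

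A related subtlety is that $\Delta$ must be small for the induced distributions, not just pointwise along samples. I would handle this as in the proof of Theorem \ref{thm:owf-inverter-to-predictor}: apply a hybrid argument over the $m(n)$ bit positions, charging $O(\epsilon')$ per position, and choose $\epsilon' = \epsilon(n)/O(m(n))$.
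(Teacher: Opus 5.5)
Your proposal is correct and is essentially the paper's proof. Like the paper, you replace the binary-search fallback with an arbitrary default output, so decoding the first valid prefix of a uniform string gives an efficiently samplable $D^*_n$ with $KL(D_n \;||\; D^*_n)\le\delta(n)$; its exact model witnesses $\delta(n)$-optimality, and Theorem \ref{thm:owf-inverter-to-predictor} applied to $\{D^*_n\}$ gives an efficient model within statistical distance $\epsilon(n)$. Your handling of the finitely many small $n$ via advice and your per-position hybrid bound fill in details the paper leaves implicit.
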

\begin{proof}
We first notice that, for any $n$ we can construct a distribution $\D'_n$ such that $KL(D_n,D'_n)\le \delta(n)$.
Specifically, we showed in the proof of Theorem \ref{thm:compression-llm} that the distribution $D_\M$, which can be thought of as sampling a uniformly random string and decoding the (unique) codeword that is a prefix of that string, is not only efficiently samplable even when $\Enc$ is non-monotone (it requires only that $\Enc$ has an efficient decoding algorithm $\Dec$), but induces a model $\M^*$ which is $\delta$-optimal with respect to $\D$. (We note that $\Sample$ as written requires binary search and thus monotonicity to be efficient, but, if weak monotonicity is not needed, one can replace this step while preserving $\delta$-optimality by returning an arbitrary output if the given random string has no validly decodable prefix.)

Next, while $\M^*$ is not necessarily efficiently computable if $\Enc$ (and thus $\Sample$) is non-monotone, Theorem \ref{thm:owf-inverter-to-predictor} shows directly that, if infinitely-often one-way functions do not exist, then for any function $\epsilon(n) \in 1/\poly(n)$ there is a model $\M$ that is $\epsilon(n)$-approximate with respect to $D_\M$ and thus an $\epsilon(n)$-approximate $\delta$-optimal model for $\D$.

\end{proof}

\section{Next-Bit Pseudoentropy and Monotone Incompressibility}

In this section, we discuss an interesting application of our results to cryptography---specifically, the relation between the compressibility of a distribution and its \emph{pseudoentropy} \cite{hill}, which can be thought of as a computational analogue of entropy. Informally, a distribution $D$ has pseudoentropy at least $h$ if it is computationally indistinguishable from some distribution $D'$ with Shannon entropy $H(D') \geq h$. Restricting to the case of binary distributions ($\Sigma = \Bit$), we will formally consider a related notion of \emph{next-bit} pseudoentropy, or pseudoentropy with respect to next-bit prediction:

\begin{definition}[\cite{vadhan}]
\label{def:pseudoentropy}
Given jointly distributed families of random variables $\X = \braces{X_n}_{n \in \Nat}$ and $\B = \braces{B_n}_{n \in \Nat}$, where $\B$ takes values in $\Bit$, we say that $\B$ has (non-uniform) \textbf{conditional pseudoentropy} at least $h(\cdot)$ given $X$ if, for any polynomial $p(\cdot)$, there exists a $\Bit$-valued family $\C = \braces{C_n}_{n \in \Nat}$ of random variables jointly distributed with $(\X, \B)$ such that:
\begin{enumerate}
    \item For all sufficiently large $n \in \Nat$, $H(C_n | X_n) \geq h(n) - 1/p(n)$
    \item The families $(\X, \B)$ and $(\X, \C)$ are computationally indistinguishable.\footnote{That is, for any efficient non-uniform distinguisher $D$, there exists negligible $\epsilon(\cdot)$ such that, for all sufficiently large $n \in \Nat$, $\pr{(x, b) \leftarrow (X_n, B_n): D(1^n, x, b) = 1}$ and $\pr{(x, c) \leftarrow (X_n, C_n): D(1^n, x, c) = 1}$ differ by no more than $\epsilon(n)$.}
\end{enumerate}

Given a polynomial $m(\cdot)$ and family of random variables $\B = \braces{B_n}_{n \in \Nat}$ where $B_n$ takes values in $\Bit^{m(n)}$, we say that $\B$ has (non-uniform) \textbf{next-bit pseudoentropy} at least $h(\cdot)$ if, letting family $\I = \braces{I_n}_{n \in \Nat}$ be such that $I_n$ is uniformly distributed over $[m(n)]$, then the conditional pseudoentropy of the family $\braces{(B_n)_{I_n}}_{n \in \Nat}$ given $\braces{(B_n)_{<I_n}}_{n \in \Nat}$ is at least $h(n)/m(n)$.
\end{definition}

\cite{haitner} shows a partial characterization of the relation between next-bit pseudoentropy and compressibility of a family of distributions (a notion originally proposed by \cite{yao}). Specifically, they show that incompressibility implies pseudoentropy (\cite{haitner}, Lemma 36): if a family of distributions $\B = \braces{B_n}_{n \in \Nat}$ over $\Bit^{m(n)}$ cannot be compressed with some length $k(n)$, then it must have next-bit pseudoentropy at least $k(n) - 2$. However, they also show that the converse is \emph{not} true; in fact, there exists a separation (\cite{haitner}, Prop. 39, Lemma 42) showing that, assuming the existence of one-way functions, next-bit pseudoentropy cannot imply incompressibility unless it is strengthened to a notion of ``strong" next-bit pseudoentropy which requires next-bit pseudoentropy of $\braces{\Enc(B_n)}_{n \in \Nat}$ for any efficient encoding $\Enc(\cdot)$.

Our results, in fact, complete the characterization of \emph{standard} next-bit pseudoentropy by showing that, up to a constant, it is equivalent to \emph{monotone} incompressibility. This is illustrated by the following corollary of Corollary 3.9 of \cite{vadhan}:

\begin{corollary}[\cite{vadhan, haitner}]
\label{cor:pseudoentropy}
Given polynomial maximum length $m(\cdot)$ and family of distributions $\D = \braces{D_n}_{n \in \Nat}$ where $D_n$ is a distribution over $\Bit^{m(n)}$, for some function $\delta(\cdot)$, $\D$ has next-bit pseudoentropy at least $H(D_n) + \delta(n)$ if and only if, for any polynomial $p(\cdot)$, for every efficiently computable autoregressive model family $\M = \braces{M_n}_{n \in \Nat}$ and every sufficiently large $n \in \Nat$, $M_n$ is not  $(\delta(n)-1/p(n))$-optimal with respect to $D_n$.
\end{corollary}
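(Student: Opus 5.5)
The plan is to reduce the statement to Vadhan and Zheng's characterization of conditional pseudoentropy by KL-hardness (Corollary 3.9 of \cite{vadhan}), applied to the pair $X_n = (B_n)_{<I_n}$, $B = (B_n)_{I_n}$ with $B_n \sim D_n$. That corollary says, roughly, the following. A bit $B$ has conditional pseudoentropy at least $H(B\mid X) + \gamma$ given $X$ if and only if, for every efficient (non-uniform) predictor $P$ mapping $x$ to a probability, and every polynomial $q$, the following holds for all sufficiently large $n$. Let $C_P$ be the bit that equals $1$ with probability $P(x)$; then $KL((X,B)\,||\,(X,C_P)) > \gamma - 1/q(n)$.

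First I will compute the entropy term. By the chain rule, $H(B_I\mid B_{<I}) = \frac{1}{m}\sum_i H((B_n)_i \mid (B_n)_{<i}) = H(D_n)/m$. So the threshold $h/m = (H(D_n)+\delta(n))/m$ in Definition~\ref{def:pseudoentropy} corresponds to $\gamma = \delta(n)/m(n)$.

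Second, I will set up a correspondence between efficient predictors $P$ for $(X,B)$ and efficient autoregressive models $\M$ over $\Bit^{m(n)}$. Given $P$, define $\M(1\mid t_{<i}) = P(t_{<i})$; the index $i$ is recoverable as $\len(t_{<i})+1$, so the model is well defined. Conversely, a model gives a predictor by the same formula. Under this correspondence,
\[ KL((X,B)\,||\,(X,C_P)) = \frac{1}{m}\sum_{i\in[m]} \Expect_{t_{<i}}\left[KL(D_{t_{<i}}\,||\,\M_{t_{<i}})\right] = \frac{1}{m} KL(D_n\,||\,D_{\M}). \]
The last equality is the chain-rule identity already used in the proof of Claim~\ref{clm:loss-kl}. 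KL-hardness at level $\delta/m - 1/q$ therefore becomes $KL(D_n\,||\,D_\M) > \delta(n) - m(n)/q(n)$. Since $q$ ranges over all polynomials and $m$ is polynomial, this is the same as requiring $KL > \delta - 1/p$ for every polynomial $p$.

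Third, I will translate KL into optimality. Claim~\ref{clm:loss-kl} shows that expected loss and $KL(D_n\,||\,D_\M)$ differ by a model-independent constant. The infimum of the KL over all models is $0$, attained by the true conditionals of $D_n$. Hence $\M$ is $\delta'$-optimal exactly when $KL(D_n\,||\,D_\M)\le\delta'$. So "no efficient $\M$ is $(\delta-1/p)$-optimal for sufficiently large $n$" is exactly the KL-hardness condition, and both directions of the corollary follow.

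The main obstacle is reconciling the computational models and quantifiers. Vadhan--Zheng's predictors are non-uniform circuits whose output probabilities may be arbitrary reals, whereas our models must have efficiently computable next-token probabilities. I will handle this by rounding $P$'s outputs to polynomially many bits of precision, clamped away from $0$ and $1$ by $2^{-\poly}$. This changes each conditional KL by at most an inverse-polynomial amount, which the slack $1/p(n)$ absorbs. I also need to match "for all sufficiently large $n$" in Definition~\ref{def:pseudoentropy} with the quantifier order in the corollary. This requires checking that the negation of pseudoentropy produces a single efficient model family that beats the threshold infinitely often. Vadhan--Zheng's statement gives exactly this (non-uniformly), so I expect it to go through with only bookkeeping.
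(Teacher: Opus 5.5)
Your proposal is correct and follows the same route as the paper. Both apply Vadhan--Zheng's Corollary 3.9 to the pair $((B_n)_{<I_n},(B_n)_{I_n})$, then use the entropy and KL chain rules, as in Claim \ref{clm:loss-kl}, to identify KL-hardness with the non-existence of an efficient near-optimal autoregressive model; your explicit handling of the $1/m(n)$ normalization, precision and quantifier order is more careful than the paper's brief sketch, which glosses over these points.
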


The original corollary in \cite{vadhan} shows an equivalence between $\B$ given $\X$ (as given in Definition \ref{def:pseudoentropy}) having conditional pseudoentropy $H(B_n | X_n) + \delta(n)$ and $\B$ being \emph{$\delta(n)$-KL-hard} given $\X$---that is, that there exists no efficient predictor $P(x)$ that attempts to predict $B$ given $X$ and has $KL((X, B) \; || \; (X, P(X))) \leq \delta(n)$. \cite{haitner} observed that, given a family $\D$ of distributions, one can apply this to a uniformly random bit in a distribution $D_n$ (as in the definition of next-bit entropy) and use the additivity of entropy and KL-divergence chain rule to immediately imply that $\D$ not having ``non-trivial" next-bit entropy $H(D_n) + \delta(n)$ implies a next-bit predictor $\M$ for \emph{any} bit of the distribution $D_n$ (i.e., an autoregressive model, using our terminology) for which $KL(D_n \; || \; D_{\M_n}) \leq \delta(n)$, or, in other words, $\M$ is $\delta(n)$-optimal with respect to $\D$.

Returning to the equivalence between next-bit pseudoentropy and monotone compression and applying Corollary \ref{cor:pseudoentropy}, we can now observe that this equivalence is intuitively the same as equivalence between next-bit \emph{prediction} and monotone compression, which, in fact, is precisely what we have shown in Section \ref{sec:equivalence}. The first direction is immediate from (the contrapositive of) Theorem \ref{thm:llm-compression-arithmetic} and Corollary \ref{cor:pseudoentropy}, though we note that it is also already proven implicitly in Lemma 36 of \cite{haitner} because their construction of a compression scheme is monotone.

\begin{corollary}
\label{cor:incompressibility-pseudoentropy}
Given polynomial maximum length $m(\cdot)$ and family of distributions $\D = \braces{D_n}_{n \in \Nat}$ where $D_n$ is a distribution over $\Bit^{m(n)}$, if, for some function $k(\cdot)$, $\D$ cannot be monotonically compressed to length $k(n)$ infinitely often (namely, for any efficient, monotone, prefix-free encoding $\Enc$, $E_{x\gets \D_n}[len(\Enc(x))]\ge k(n)$ for all but finitely many $n \in \Nat$),
then $\D$ has next-bit pseudoentropy at least $k(n)-2$.
\end{corollary}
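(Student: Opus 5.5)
We argue by contraposition, combining Theorem~\ref{thm:llm-compression-arithmetic} with Corollary~\ref{cor:pseudoentropy}. Suppose $\D$ does \emph{not} have next-bit pseudoentropy at least $k(n)-2$. We will construct an efficient, monotone, prefix-free encoding $\Enc$ with $\Expect_{x\gets D_n}[\len(\Enc_n(x))] < k(n)$ for infinitely many $n$. This contradicts the hypothesis that monotone compression to length $k(n)$ fails for all but finitely many $n$.

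\textbf{Step 1: obtain a model.} Set $\delta(n) = k(n)-2-H(D_n)$, so that $k(n)-2 = H(D_n)+\delta(n)$. By Corollary~\ref{cor:pseudoentropy}, failure of next-bit pseudoentropy $H(D_n)+\delta(n)$ means the ``for every model, not $(\delta-1/p)$-optimal for all large $n$'' condition fails. Concretely, there is a polynomial $p$ and an efficiently computable autoregressive family $\M$ such that $\M_n$ is $(\delta(n)-1/p(n))$-optimal with respect to $D_n$ for infinitely many $n$.

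Some bookkeeping is needed here. We must negate the quantifiers carefully, and we must check that ``$\delta$-optimal'' in Corollary~\ref{cor:pseudoentropy} matches the notion used in Theorem~\ref{thm:llm-compression-arithmetic}. By Claim~\ref{clm:loss-kl} both amount to $KL(D_n\,||\,D_{\M_n})\le\delta$, since the optimum $\M^*=D$ has KL divergence $0$.

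\textbf{Step 2: convert the model to a monotone code.} Apply Theorem~\ref{thm:llm-compression-arithmetic} to $\M$, taking $\Sigma_n=\Bit$ with the natural order. This yields an efficient, monotone, prefix-free $\Enc$ satisfying
\[ \Expect_{x\gets D_n}[\len(\Enc_n(x))] \le H(D_n)+\delta(n)-1/p(n)+2 = k(n)-1/p(n) < k(n) \]
on those infinitely many $n$. This is the desired contradiction.

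\textbf{Technical points.} Two issues must be handled.
\begin{enumerate}
\item Theorem~\ref{thm:llm-compression-arithmetic} requires a polynomial bound $\ell(n)$ on $-\log D_{\M_n}(x)$. The model's probabilities are efficiently computable and hence have polynomial bit-length, so such a bound holds by the paper's footnote. Alternatively, we can mix $\M_n$ with the uniform next-bit distribution with weight $2^{-n}$. This changes the KL divergence by only a negligible amount, absorbed by the $1/p(n)$ slack, and guarantees $-\log D_{\M_n}(x)\le \poly(n)$.
\item The encoding from Theorem~\ref{thm:llm-compression-arithmetic} is only guaranteed to be good for the $n$ on which $\M_n$ is optimal. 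This suffices, since the hypothesis demands $\Expect[\len(\Enc_n(x))]\ge k(n)$ for all but finitely many $n$, and we violate it on infinitely many.
\end{enumerate}

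The main obstacle is the quantifier and slack matching in Step~1. Corollary~\ref{cor:pseudoentropy}'s threshold $H(D_n)+\delta(n)$ must be aligned with $k(n)-2$. The ``every sufficiently large $n$'' clause must negate to ``infinitely many $n$,'' in line with the infinitely-often notion of compression in the statement. The $1/p(n)$ slack must then absorb the extra $+2$ from arithmetic coding only up to the strict inequality. If the quantifiers turn out mismatched, the fallback is to state the conclusion as pseudoentropy at least $k(n)-2-1/p(n)$ for every polynomial $p$. Definition~\ref{def:pseudoentropy} already permits this $1/p$ loss, so the fallback is equivalent.
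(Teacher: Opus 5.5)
Your proposal is correct and is essentially the paper's own argument. The paper also argues by contraposition: it uses Corollary~\ref{cor:pseudoentropy} to obtain an efficient $(k(n)-2-H(D_n)-1/\poly(n))$-optimal autoregressive model, then applies Theorem~\ref{thm:llm-compression-arithmetic} to get an efficient monotone prefix-free code of expected length at most $k(n)-1/\poly(n)<k(n)$. Your additional bookkeeping makes explicit what the paper's one-line proof leaves implicit: negating ``all sufficiently large $n$'' to ``infinitely many $n$'', identifying $\delta$-optimality with $KL(D_n\,||\,D_{\M_n})\le\delta$, and securing a polynomial bound on $-\log D_{\M_n}(x)$.
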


Specifically, if $\D$ has insufficient pseudoentropy, then by Corollary \ref{cor:pseudoentropy} it must have a $(k(n) - 2 - H(D_n)-1/\poly(n))$-optimal efficiently computable autoregressive model family $\M$, and Theorem \ref{thm:llm-compression-arithmetic} demonstrates that arithmetic encoding gives efficient and length-$(k(n)-1/\poly(n))$ compression. The other direction is also immediate from our Theorem \ref{thm:compression-llm} and Corollary \ref{cor:pseudoentropy}:

\begin{corollary}
\label{cor:pseudoentropy-incompressibility}
Given polynomial maximum length $m(\cdot)$ and family of distributions $\D = \braces{D_n}_{n \in \Nat}$ where $D_n$ is a distribution over $\Bit^{m(n)}$, if, for some function $k(\cdot)$, $\D$ has next-bit pseudoentropy at least $k(n)$, then, for any polynomial $p(\cdot)$, $\D$ cannot be monotonically compressed to length $k(n) - 1/p(n)$.
\end{corollary}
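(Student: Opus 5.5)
We argue by contraposition: assume that for some polynomial $p(\cdot)$ there is an efficient, monotone, prefix-free encoding family $\Enc$ compressing $\D$ to length $k(n) - 1/p(n)$. We then show that $\D$ does not have next-bit pseudoentropy at least $k(n)$. Write $\delta(n) := k(n) - 1/p(n) - H(D_n)$, so the assumption reads $\Expect_{x \leftarrow D_n}[\len(\Enc_n(x))] \leq H(D_n) + \delta(n)$. This is exactly the hypothesis of Theorem \ref{thm:compression-llm} with $\Sigma_n = \Bit$, so that theorem yields an efficiently computable autoregressive model family $\M = \braces{\M_n}_{n \in \Nat}$ that is $\delta(n)$-optimal with respect to $D_n$.

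Next we match the parameters of Corollary \ref{cor:pseudoentropy}. Set $\delta'(n) := k(n) - H(D_n)$, so that next-bit pseudoentropy at least $k(n)$ is the same as next-bit pseudoentropy at least $H(D_n) + \delta'(n)$. By Corollary \ref{cor:pseudoentropy}, such pseudoentropy implies that for every polynomial $q(\cdot)$ and every efficiently computable autoregressive model family, the model fails to be $(\delta'(n) - 1/q(n))$-optimal for all sufficiently large $n$. Take $q = p$. Then $\delta'(n) - 1/p(n) = \delta(n)$, and our $\M_n$ is $\delta(n)$-optimal for every $n$. This contradicts the corollary, so $\D$ cannot have next-bit pseudoentropy at least $k(n)$, and the contrapositive gives the claim.

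The step needing most care is the quantifier bookkeeping. Corollary \ref{cor:pseudoentropy} asks for failure of optimality at all sufficiently large $n$, so we need a model that is good infinitely often. If the compression hypothesis is read as holding for infinitely many $n$ (rather than all $n$), we apply Theorem \ref{thm:compression-llm} on those $n$; its construction is pointwise in $n$, so $\delta(n)$-optimality holds on that infinite set, which still contradicts the statement required for all sufficiently large $n$. We should also confirm that ``$\delta$-optimal'' in Theorem \ref{thm:compression-llm} (excess loss over the best model, equivalently $KL(D_n \| D_{\M_n}) \leq \delta(n)$ by Claim \ref{clm:loss-kl}) is the same notion used in Corollary \ref{cor:pseudoentropy}. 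Finally, a possibly negative $\delta(n)$ is harmless, since then no such encoding exists by Corollary \ref{cor:shannon}.
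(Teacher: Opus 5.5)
Your proposal is correct and is essentially the paper's argument. Theorem~\ref{thm:compression-llm} turns a monotone compression to length $k(n)-1/p(n)$ into an efficient $(k(n)-H(D_n)-1/p(n))$-optimal autoregressive model, and this contradicts the direction of Corollary~\ref{cor:pseudoentropy} stating that pseudoentropy rules out good models, applied with the same polynomial $p$. Your extra bookkeeping (the infinitely-often reading, the consistency of $\delta$-optimality via Claim~\ref{clm:loss-kl}, and the vacuous negative-$\delta$ case) is more careful than the paper's one-line justification but follows the same route.
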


Since Theorem \ref{thm:compression-llm} shows that $\D$ having length-$(k(n)-1/\poly(n))$ compression implies a $(k(n)-H(D_n)-1/\poly(n))$-optimal autoregressive model family $\M$, the other direction of Corollary \ref{cor:pseudoentropy} implies that $\D$ cannot have next-bit pseudoentropy at least $k(n)$ in this case.

\section*{Generative AI Use Disclosure}

While finalizing this paper for submission, we used generative AI tools to review our draft in order to fix typographical errors, evaluate technical correctness of our proofs and suggest fixes where applicable, and suggest improvements to the structure and readability of our introduction. We emphasize that all results and writing presented in this paper were produced by the authors, except parts of the technical overview (Section \ref{sec:overview}), which were rewritten using AI tools and subsequently reviewed and edited by the authors.

\bibliographystyle{alpha}

\bibliography{watermarking}

@article{shannon,
  author={Shannon, Claude E.},
  journal={{The Bell System Technical Journal}}, 
  title={{A Mathematical Theory of Communication}}, 
  year={1948},
  volume={27},
  number={3},
  pages={379-423},
  doi={10.1002/j.1538-7305.1948.tb01338.x}}

@article{TrevisanVZ05,
	title={{Compression of Samplable Sources}},
	author={Trevisan, Luca and Vadhan, Salil and Zuckerman, David},
	journal=Jcomplexity,
	volume={14},
	number={3},
	pages={186--227},
	year={2005},
}

@inproceedings{gs85,
author = {Goldberg, Andrew V. and Sipser, Michael},
title = {{Compression and Ranking}},
year = {1985},
isbn = {0897911512},
publisher = {Association for Computing Machinery},
address = {New York, NY, USA},
url = {https://doi.org/10.1145/22145.22194},
doi = {10.1145/22145.22194},
booktitle = {Proceedings of the Seventeenth Annual ACM Symposium on Theory of Computing},
pages = {440–448},
numpages = {9},
location = {Providence, Rhode Island, USA},
series = {STOC '85}
}

@ARTICLE{arithmetic,
  author={Gilbert, E. N. and Moore, E. F.},
  journal={The Bell System Technical Journal}, 
  title={{Variable-Length Binary Encodings}}, 
  year={1959},
  volume={38},
  number={4},
  pages={933-967},
  doi={10.1002/j.1538-7305.1959.tb01583.x}}

@article{turing,
    author = {Turing, A. M.},
    title = {{Computing Machinery and Intelligence}},
    journal = {Mind},
    volume = {LIX},
    number = {236},
    pages = {433-460},
    year = {1950},
    month = {10},
    issn = {0026-4423},
    doi = {10.1093/mind/LIX.236.433},
    url = {https://doi.org/10.1093/mind/LIX.236.433},
    eprint = {https://academic.oup.com/mind/article-pdf/LIX/236/433/61209000/mind_lix_236_433.pdf},
}

@ARTICLE{shannon2,
  author={Shannon, C. E.},
  journal={The Bell System Technical Journal}, 
  title={{Prediction and entropy of printed English}}, 
  year={1951},
  volume={30},
  number={1},
  pages={50-64},
  doi={10.1002/j.1538-7305.1951.tb01366.x}}

@inproceedings{mahoney,
    author = {Mahoney, Matthew V.},
    title = {{Text compression as a Test for Artificial Intelligence}},
    year = {1999},
    isbn = {0262511061},
    publisher = {American Association for Artificial Intelligence},
    address = {USA},
    booktitle = {Proceedings of the Sixteenth National Conference on Artificial Intelligence and the Eleventh Innovative Applications of Artificial Intelligence Conference Innovative Applications of Artificial Intelligence},
    pages = {970},
    location = {Orlando, Florida, USA},
    series = {AAAI '99/IAAI '99}
}

@inproceedings{hernandez,
author = {Hernandez-Orallo, Jose},
year = {1998},
month = {07},
pages = {146–163},
title = {{A Formal Definition of Intelligence Based on an Intensional Variant of Algorithmic Complexity}},
booktitle = {Proceedings of International Symposium of Engineering of Intelligent Systems (EIS'98)},
}

@inproceedings{deletang,
title={Language Modeling Is Compression},
author={Gr\'egoire Del\'etang and Anian Ruoss and Paul-Ambroise Duquenne and Elliot Catt and Tim Genewein and Christopher Mattern and Jordi Grau-Moya and Li Kevin Wenliang and Matthew Aitchison and Laurent Orseau and Marcus Hutter and Joel Veness},
booktitle={The Twelfth International Conference on Learning Representations},
year={2024},
url={https://openreview.net/forum?id=jznbgiynus}
}

@inproceedings{huang,
title={Compression Represents Intelligence Linearly},
author={Yuzhen Huang and Jinghan Zhang and Zifei Shan and Junxian He},
booktitle={First Conference on Language Modeling},
year={2024},
url={https://openreview.net/forum?id=SHMj84U5SH}
}

@inproceedings{zhang,
author = {Zhang, Yeqin and Zhao, Yizheng and Hu, Chen and Jiao, Binxing and Jiang, Daxin and Miao, Ruihang and Nguyen, Cam Tu},
year = {2026},
pages = {},
booktitle={Proceedings of the Fortieth Annual AAAI Conference on Artificial Intelligence},
title = {{Learning to Compress: Unlocking the Potential of Large Language Models for Text Representation}},
doi = {10.48550/arXiv.2511.17129}
}

@article{tsai,
author = {Tsai, Chen-Han},
year = {2026},
month = {01},
pages = {},
journal={arXiv preprint}, 
title = {{Revisiting Data Compression with Language Modeling}},
doi = {10.48550/arXiv.2601.02875}
}

@INPROCEEDINGS{frank,
  author={Frank, Eibe and Chui, Chang and Witten, Ian H.},
  booktitle={Proceedings DCC 2000. Data Compression Conference}, 
  title={{Text categorization using compression models}}, 
  year={2000},
  volume={},
  number={},
  pages={555-},
  doi={10.1109/DCC.2000.838202}}

@inproceedings{schmidhuber,
    author = {Schmidhuber, J{\"u}rgen and Heil, Stefan},
    title = {{Predictive Coding with Neural Nets: Application to Text Compression}},
    year = {1994},
    publisher = {MIT Press},
    address = {Cambridge, MA, USA},
    booktitle = {Proceedings of the 8th International Conference on Neural Information Processing Systems},
    pages = {1047–1054},
    numpages = {8},
    location = {Denver, Colorado},
    series = {NIPS'94}
    }

@ARTICLE{schmidhuber2,
  author={Schmidhuber, J{\"u}rgen and Heil, Stefan},
  journal={IEEE Transactions on Neural Networks}, 
  title={{Sequential Neural Text Compression}}, 
  year={1996},
  volume={7},
  number={1},
  pages={142-146},
  doi={10.1109/72.478398}}

@inproceedings{mahoney2,
author = {Mahoney, Matthew V.},
title = {{Fast Text Compression with Neural Networks}},
year = {2000},
isbn = {1577351134},
publisher = {AAAI Press},
booktitle = {Proceedings of the Thirteenth International Florida Artificial Intelligence Research Society Conference},
pages = {230–234},
numpages = {5}
}

@INPROCEEDINGS{schiopu,
  author={Schiopu, Ionut and Liu, Yu and Munteanu, Adrian},
  booktitle={2018 Picture Coding Symposium (PCS)}, 
  title={{CNN-based Prediction for Lossless Coding of Photographic Images}}, 
  year={2018},
  volume={},
  number={},
  pages={16-20},
  doi={10.1109/PCS.2018.8456311}}

@inproceedings{vadhan,
author = {Vadhan, Salil and Zheng, Colin Jia},
title = {Characterizing pseudoentropy and simplifying pseudorandom generator constructions},
year = {2012},
isbn = {9781450312455},
publisher = {Association for Computing Machinery},
address = {New York, NY, USA},
url = {https://doi.org/10.1145/2213977.2214051},
doi = {10.1145/2213977.2214051},
booktitle = {Proceedings of the Forty-Fourth Annual ACM Symposium on Theory of Computing},
pages = {817–836},
numpages = {20},
location = {New York, New York, USA},
series = {STOC '12}
}

@InProceedings{haitner,
  author =	{Haitner, Iftach and Mazor, Noam and Silbak, Jad},
  title =	{{Incompressiblity and Next-Block Pseudoentropy}},
  booktitle =	{14th Innovations in Theoretical Computer Science Conference (ITCS 2023)},
  pages =	{66:1--66:18},
  series =	{Leibniz International Proceedings in Informatics (LIPIcs)},
  ISBN =	{978-3-95977-263-1},
  ISSN =	{1868-8969},
  year =	{2023},
  volume =	{251},
  editor =	{Tauman Kalai, Yael},
  publisher =	{Schloss Dagstuhl -- Leibniz-Zentrum f{\"u}r Informatik},
  address =	{Dagstuhl, Germany},
  URL =		{https://drops.dagstuhl.de/entities/document/10.4230/LIPIcs.ITCS.2023.66},
  URN =		{urn:nbn:de:0030-drops-175697},
  doi =		{10.4230/LIPIcs.ITCS.2023.66}
}

@inproceedings{il89,
  title={One-way functions are essential for complexity based cryptography},
  author={Impagliazzo, Russell and Luby, Michael},
  booktitle={30th Annual Symposium on Foundations of Computer Science},
  pages={230--235},
  year={1989},
  organization={IEEE Computer Society}
}

@article{hill,
author = {H{\r a}stad, Johan and Impagliazzo, Russell and Levin, Leonid A. and Luby, Michael},
title = {{A Pseudorandom Generator from any One-way Function}},
journal = {SIAM Journal on Computing},
volume = {28},
number = {4},
pages = {1364-1396},
year = {1999},
doi = {10.1137/S0097539793244708}
}

@INPROCEEDINGS{yao,
  author={Yao, Andrew C.},
  booktitle={23rd Annual Symposium on Foundations of Computer Science (FOCS 1982)}, 
  title={{Theory and Application of Trapdoor Functions}}, 
  year={1982},
  volume={},
  number={},
  pages={80-91},
  doi={10.1109/SFCS.1982.45}}

@INPROCEEDINGS{hirahara,
  author={Hirahara, Shuichi and Lu, Zhenjian and Nanashima, Mikito},
  booktitle={2024 IEEE 65th Annual Symposium on Foundations of Computer Science (FOCS)}, 
  title={{Optimal Coding for Randomized Kolmogorov Complexity and Its Applications}}, 
  year={2024},
  volume={},
  number={},
  pages={369-378},
  doi={10.1109/FOCS61266.2024.00030}}

\end{document}